\numberwithin{equation}{section}
\newtheorem{theorem}{Theorem}[section]
\newtheorem{lemma}{Lemma}[section]
\newtheorem{remark}{Remark}[section]
\newtheorem{proposition}{Proposition}[section]
\newcommand{\bu}{\mathbf{u}}
\newcommand{\bom}{\mathbf{m}}
\newcommand{\bx}{\mathbf{x}}
\newcommand{\markyes}{{\color{ForestGreen} \ding{51} }}
\newcommand{\markno}{{\color{red} \ding{55} }}
\newcommand{\bkt}[1]{\textcolor{blue}{[BKT\@: #1]}}
\newcommand{\wjb}[1]{\textcolor{red}{[WJB\@: #1]}}
\pgfplotsset{ compat=1.18,
    standard/.style={
    scale only axis,
    width=0.5\textwidth,
    enlarge x limits=0.05,
    enlarge y limits=0.05,
    max space between ticks=40,
    every axis/.append style={font=\normalsize},
	every legend/.append style={font=\normalsize},
	every node/.append style={font=\normalsize},	
	}
}
\definecolor{steelblue}{HTML}{A1BDC7}
\definecolor{orange}{HTML}{D98C21}
\definecolor{silver}{HTML}{B0ABA8}
\definecolor{rust}{HTML}{B8420F}
\definecolor{seagreen}{HTML}{2E6B69}
\definecolor{joshua}{HTML}{FBDC7F}
\definecolor{darksky}{HTML}{154c79}
\colorlet{lightsilver}{silver!30!white}
\colorlet{darkorange}{orange!85!black}
\colorlet{darksilver}{silver!85!black}
\colorlet{darksteelblue}{steelblue!85!black}
\colorlet{darkrust}{rust!85!black}
\colorlet{darkseagreen}{seagreen!85!black}
\title[Hamiltonian Information Geometric Regularization of Compressible Euler]{A Hamiltonian Extension of the Information Geometric Regularization of the Compressible Euler Equations}
\title[Hamiltonian Information Geometric Regularization of Compressible Euler]{Hamiltonian Information Geometric Regularization\\of the Compressible Euler Equations}
\author{William Barham$^1$, Brian K. Tran$^2$, Ben S. Southworth$^1$, and Florian Schäfer$^3$}
\date{\today}
\address{$^1$Los Alamos National Laboratory, Theoretical Division, Los Alamos, New Mexico 87545}
\address{$^2$University of Colorado Boulder, Department of Applied Mathematics, Boulder, CO 80309}
\address{$^3$New York University, Courant Institute of Mathematical Sciences, New York, NY 10012}
\email{wbarham@lanl.gov,\,brian.tran@colorado.edu,\,southworth@lanl.gov,\,florian.schaefer@nyu.edu}
\begin{document}

\maketitle

\begin{comment}
\wjb{To do: 
\begin{itemize}
    \item Fix errors appearing in sections 5 and 6 due to incorrect Korteweg pressure.
    \item Fix appendix B.4 simplifying the conservative entropic pressure elliptic equation.
    \item Correct derivation of dissipative subsystem leave total energy equation unchanged from thermodynamic HRE case. 
    \item Add section 7 on one-dimensional numerical tests. \bkt{A suggestion that we can discuss on Monday during meeting, it might be better to present all of the models' names (+formulas if not too lengthy) and the numerical results + discussion of results, after the motivation section 1.2. This brings results of the long derivations to the front and then the rest of the paper is dedicated to the derivations etc. Arguably, most readers would care more about this before deciding to sit through derivations and so might be more readable/impactful this way.}
    \item Change tone and purpose of paper described in abstract, introduction, and conclusion to better reflect the new findings.
\end{itemize}
}
\end{comment}

%================================================================
\section*{Abstract}
The recently proposed information geometric regularization (IGR) was the first inviscid regularization of the multi-dimensional compressible Euler equations, which enabled the simulation of realistic compressible fluid models at an unprecedented scale. However, the thermodynamic effects of this regularization have not yet been understood in a principled manner. To achieve a proper understanding of the thermodynamic aspects of the IGR, we decompose the regularization into its conservative dynamics, framed as a Hamiltonian subsystem, and its dissipative dynamics. In so doing, we further introduce two more models to compare to IGR, the Hamiltonian regularized Euler (HRE) model, which is the first multi-dimensional, non-dispersive Hamiltonian regularization of the compressible Euler equations with energy, as well as the Hamiltonian IGR (HIGR) model, which modifies the dissipation used in the IGR model to instead utilize a metriplectic dissipative force. Despite having many attractive features, the HRE and HIGR models exhibit notable defects in numerical tests on colliding shock problems, which preclude their use as computational tools without further study of dissipative weak solutions to these models. Additionally, our analysis presents new results on the IGR model itself, including its ability to conserve acoustic waves, as well as local energy transport laws and entropy production rates. By separating the conservative and dissipative dynamics of the IGR, our hope is that subsequent of analysis of the IGR model can benefit from this natural decomposition, such as, for example, rigorous proofs of strong solutions for multi-dimensional IGR for the compressible Euler system with thermodynamics. 

{\hypersetup{linkcolor=black}\tableofcontents}

%================================================================
\section{Introduction}

%---------------------------------------------------------------
\subsection{Background}\label{sec:intro}

The compressible Euler equations and variations thereof are fundamental to numerous complex and important physical phenomena, including atmospheric science and numerical weather prediction, aerospace engineering, and high-energy density physics, particularly inertial confinement fusion. They are also the core partial differential equations surrounding shock physics \cite{courant1948supersonic}, and further give rise to Rayleigh-Taylor, Richtmyer-Meshkov, and Kelvin-Helmholtz instabilities. These aspects in particular can make their numerical simulation exceedingly difficult \cite{toro2009riemann}. Problems arise largely from not being able to resolve the spatio-temporal length scales associated with instabilities and/or nonlinearly interacting shocks in the discrete setting. Meanwhile, it is also critical to maintain exactly or to high accuracy physical invariants such as conservation of mass, energy, and momentum to avoid inducing numerical instabilities, as well as physical properties such as positivity \cite{guermond2018second, guermond2019invariant, clayton2022invariant}. Negative energy values, for example, are not only unphysical, but can immediately break a code/simulation, e.g., when evaluating an equation of state model or coupling to other physics. 

To address these challenges, there is a broad body of literature regarding theory and numerical techniques for approximating solutions of the compressible Euler equations. A common stabilization approach is the use of artificial viscosity to stabilize the solution near discontinuities or sharp gradients. Early work did this on the continuous level, adding some small amount of hyperviscosity to the PDE formulation \cite{vonneumann1950method}, with modern hyperviscosity methods using shock indicators to localize diffusion \cite{fiorina2007artificial, bhagatwala2009modified, kawai2010assessment, mani2009suitability, guermond2011entropy, barter2010shock, chan2025artificial, bruno2022fc}, but it can also be performed on the discrete level, such as upwinding a discontinuous Galerkin discretization for improved stabilization \cite{DG2000}. Stabilization can also be addressed through the use of specialized nonlinear flux limiters. There is also significant literature on shock capturing and reconstruction techniques, particularly in the context of finite volume methods \cite{LeVeque_2002}, such as the well-known TVD \cite{Shu.1988fyej, Gottlieb.1996} and ENO/WENO methods \cite{Shu.1988}. Aside from the introduction of continuous hyperviscosity, most of these methods are discrete in nature, and do not modify the strong or weak PDE formulation/model. More recently, there has been increasing interest and study in the geometry and structure of the underlying PDE, and modified problem formulations and/or associated numerical methods that identify and/or preserve certain structure, such as a Hamiltonian or bracket formulations, e.g. \cite{VaMa2005, pavelka2016, khesin_2021, TrBuSo2025, Hirvijoki_2022, Burby_2015, PhysRevE.109.045202, bressan2025metriplectic}. Although interesting theoretically, many of these methods are not currently viable or competitive to solve realistic problems at scale due to either computational expense or practical limitations. 

Recently, a new framework was proposed for regularizing the compressible Euler equations on the PDE level from the perspective of information geometry \cite{cao2023information, cao2024information}. On a high level, the information geometric regularization (IGR) formulation adds a regularization to the pressure tensor to smooth out shocks and steep gradients such that they can be captured by a discrete representation, while maintaining correct limiting properties as the regularization goes to zero. The method was originally proposed for barotropic fluids \cite{cao2023information}. 
An informal approach to extending it to more general equations was suggested but deferred to later work.
In the unidimensional pressureless case, the existence of smooth solutions of the regularized equation and their convergence in the limit of vanishing regularization strength has been shown \cite{cao2024information}.
The suitability of this approach for compressible Navier-Stokes systems with an energy equation was demonstrated at extreme scale for complex aerospace applications in \cite{wilfong2025simulatingmanyenginespacecraftexceeding}. Yet this extension is informal in the sense that the information geometric analysis was not modified to account for the additional conservative integrated variable. Together these works provide a theoretical basis from which to regularize the compressible Euler equations and maintain important physical structure, while also extending to practical application at scale, and are what we will build on in this paper. 
%\bkt{``In a companion paper, we will consider numerical methods and applications of the Hamiltonian IGR model introduced here" - should we add something like this?}

%\wjb{Add paragraph here with high-level summary of paper results and consequences. Move some of section 1.5 here.}

The introduction is a self-contained account of the core results of the paper, while the body of the paper provides a account of the details leading to these results. In~\Cref{subsec:motivation}, we provide a brief sketch of the thermodynamics implied by a prior IGR model \cite{wilfong2025simulatingmanyenginespacecraftexceeding} to motivate the need for a detailed investigation of the coupling between kinematics and thermodynamics in IGR. In~\Cref{subsec:HIGR}, we present the Hamiltonian IGR (HIGR) system, which represents the core contribution of this work. In~\Cref{subsec:1d_numerics}, we compare the behavior of HIGR and IGR in a simple one-dimensional numerical experiment. Finally, ~\Cref{subsec:outline} provides an overview of the remainder of the article. In brief, we find that there is a natural way to connect IGR with the Hamiltonian formalism, yielding the conservative part of the HIGR system. This facilitates a principled treatment of the coupling between kinetic and internal energy. However, dissipation plays an essential role in ensuring stability in numerical applications of the regularization and in ensuring convergence to a physical solution as the regularization parameter goes to zero. While the structure of the conservative part of the regularization seems to follow as an inevitable consequence of the modeling assumptions, see Sections~\ref{sec:pressureless-IGR} and \ref{sec:dispersion-free-extension}, the theory motivating the dissipative part of the regularization is incomplete and requires further study. This gap in the modeling formalism will be addressed in a future work. 

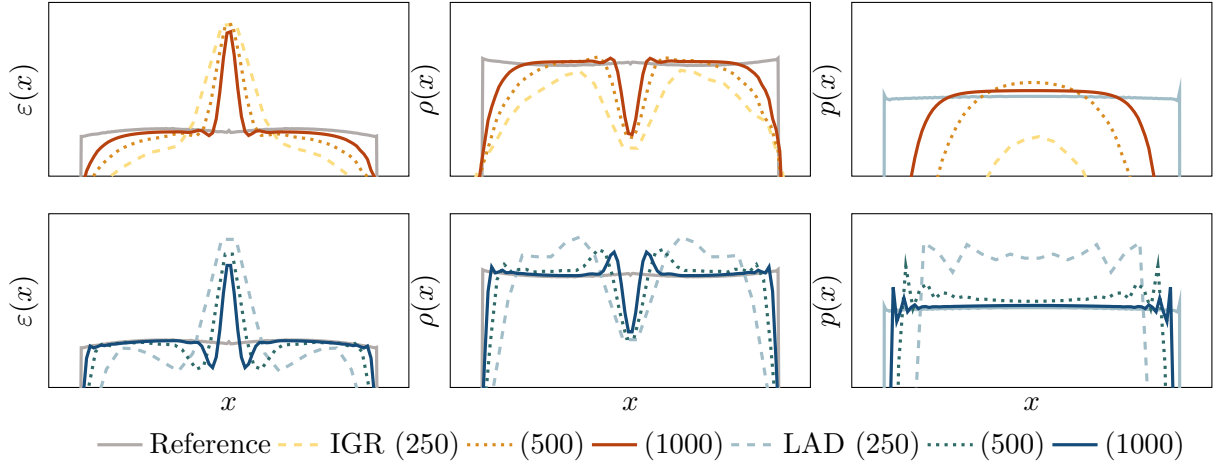
\begin{figure}
    \centering
    \begin{tikzpicture}
 \begin{groupplot}[
 	compat=1.3,
 	group style={group size=3 by 2,
 	horizontal sep=0.55cm,
     vertical sep=0.5cm,},
 	]
 	
 	\nextgroupplot[
 		standard,
 		ylabel={$\varepsilon(x)$}, %changed from e
 % 		xmode=log,
 % 		ymode=log,
 		height=0.14\textwidth,
 		width=0.29\textwidth,
         ymin=7.0,
         ymax=9.0,
         xmin=0.45,
         xmax=0.55,
         yticklabels={{$0$}},
 		xmajorticks=false,
 		xtick={0.87, 0.9, 0.93},
 		ymajorticks=false,
 		enlarge x limits=0.,
 		enlarge y limits=0.,
 		legend style={
 			at={(1.65,-1.70)},inner sep=3pt,anchor=south,legend columns=7,legend cell align={left}, draw=none,fill=none},
 		]	
 		\addlegendimage{silver,very thick}
 		\addlegendimage{joshua, dashed, very thick}
 		\addlegendimage{orange, dotted, very thick}
 		\addlegendimage{rust,very thick}
 		\addlegendimage{steelblue, dashed, very thick}
 		\addlegendimage{seagreen, dotted, very thick}
 		\addlegendimage{darksky,very thick}
 		\legend{Reference, IGR (250), (500), (1000), LAD (250), (500), (1000)}

         \addplot[silver, very thick,each nth point={10}, filter discard warning=false, unbounded coords=discard] table[x index={0},y index={3}, col sep=space] {figures/data/energy_spike/ref.csv};
         \addplot[joshua, very thick, dashed] table[x index={0},y index={3}, col sep=space] {figures/data/energy_spike/llf_igr_low.csv};
         \addplot[orange, very thick, dotted] table[x index={0},y index={3}, col sep=space] {figures/data/energy_spike/llf_igr_mid.csv};
         \addplot[rust, very thick] table[x index={0},y index={3}, col sep=space] {figures/data/energy_spike/llf_igr_high.csv};

 	\nextgroupplot[
 		standard,
 		ylabel={$\rho(x)$},
 % 		xmode=log,
 % 		ymode=log,
 		height=0.14\textwidth,
 		width=0.29\textwidth,
         ymin=1.5,
         ymax=2.10,
         xmin=0.45,
         xmax=0.55,
         yticklabels={{$0$}},
 		xmajorticks=false,
 		xtick={0.87, 0.9, 0.93},
 		ymajorticks=false,
 		enlarge x limits=0.,
 		enlarge y limits=0.,
 		]	

        \addplot[silver, very thick,each nth point={10}, filter discard warning=false, unbounded coords=discard] table[x index={0},y index={2}, col sep=space] {figures/data/energy_spike/ref.csv};

        \addplot[joshua, very thick, dashed] table[x index={0},y index={2}, col sep=space] {figures/data/energy_spike/llf_igr_low.csv};
        \addplot[orange, very thick, dotted] table[x index={0},y index={2}, col sep=space] {figures/data/energy_spike/llf_igr_mid.csv};
        \addplot[rust, very thick] table[x index={0},y index={2}, col sep=space] {figures/data/energy_spike/llf_igr_high.csv};

 	\nextgroupplot[
 		standard,
 		ylabel={$p(x)$},
 % 		xmode=log,
 % 		ymode=log,
 		height=0.14\textwidth,
 		width=0.29\textwidth,
         ymin=13.75,
         ymax=14.75,
         xmin=0.45,
         xmax=0.55,
         yticklabels={{$0$}},
 		xmajorticks=false,
 		xtick={0.87, 0.9, 0.93},
 		ymajorticks=false,
 		enlarge x limits=0.,
 		enlarge y limits=0.,
 		]	
 
         \addplot[steelblue, very thick,each nth point={10}, filter discard warning=false, unbounded coords=discard] table[x index={0},y expr={\thisrowno{2}*\thisrowno{3}}, col sep=space] {figures/data/energy_spike/ref.csv};

		 \addplot[joshua, very thick, dashed] table[x index={0},y expr={\thisrowno{2}*\thisrowno{3}}, col sep=space] {figures/data/energy_spike/llf_igr_low.csv};
		 \addplot[orange, very thick, dotted] table[x index={0},y expr={\thisrowno{2}*\thisrowno{3}}, col sep=space] {figures/data/energy_spike/llf_igr_mid.csv};
		 \addplot[rust, very thick] table[x index={0},y expr={\thisrowno{2}*\thisrowno{3}}, col sep=space] {figures/data/energy_spike/llf_igr_high.csv};

 	\nextgroupplot[
 		standard,
         xlabel={$x$},
 		ylabel={$\varepsilon(x)$},
 % 		xmode=log,
 % 		ymode=log,
 		height=0.14\textwidth,
 		width=0.29\textwidth,
         ymin=7.0,
         ymax=9.0,
         xmin=0.45,
         xmax=0.55,
         yticklabels={{$0$}},
 		xmajorticks=false,
 		xtick={0.87, 0.9, 0.93},
 		ymajorticks=false,
 		enlarge x limits=0.,
 		enlarge y limits=0.,
 		]	
 
         \addplot[silver, very thick,each nth point={10}, filter discard warning=false, unbounded coords=discard] table[x index={0},y index={3}, col sep=space] {figures/data/energy_spike/ref.csv};
         \addplot[steelblue, very thick, dashed] table[x index={0},y index={3}, col sep=space] {figures/data/energy_spike/llf_lad_low.csv};
         \addplot[seagreen, very thick, dotted] table[x index={0},y index={3}, col sep=space] {figures/data/energy_spike/llf_lad_mid.csv};
         \addplot[darksky, very thick] table[x index={0},y index={3}, col sep=space] {figures/data/energy_spike/llf_lad_high.csv};

 	\nextgroupplot[
 		standard,
         xlabel={$x$},
 		ylabel={$\rho(x)$},
 % 		xmode=log,
 % 		ymode=log,
 		height=0.14\textwidth,
 		width=0.29\textwidth,
         ymin=1.5,
         ymax=2.10,
         xmin=0.45,
         xmax=0.55,
         yticklabels={{$0$}},
 		xmajorticks=false,
 		xtick={0.87, 0.9, 0.93},
 		ymajorticks=false,
 		enlarge x limits=0.,
 		enlarge y limits=0.,
 		]	
 
         \addplot[silver, very thick,each nth point={10}, filter discard warning=false, unbounded coords=discard] table[x index={0},y index={2}, col sep=space] {figures/data/energy_spike/ref.csv};

         \addplot[steelblue, very thick, dashed] table[x index={0},y index={2}, col sep=space] {figures/data/energy_spike/llf_lad_low.csv};
         \addplot[seagreen, very thick, dotted] table[x index={0},y index={2}, col sep=space] {figures/data/energy_spike/llf_lad_mid.csv};
         \addplot[darksky, very thick] table[x index={0},y index={2}, col sep=space] {figures/data/energy_spike/llf_lad_high.csv};

 	\nextgroupplot[
 		standard,
         xlabel={$x$},
 		ylabel={$p(x)$},
 % 		xmode=log,
 % 		ymode=log,
 		height=0.14\textwidth,
 		width=0.29\textwidth,
         ymin=13.75,
         ymax=14.75,
         xmin=0.45,
         xmax=0.55,
         yticklabels={{$0$}},
 		xmajorticks=false,
 		xtick={0.87, 0.9, 0.93},
 		ymajorticks=false,
 		enlarge x limits=0.,
 		enlarge y limits=0.,
 		]	
 
		 \addplot[steelblue, very thick,each nth point={10}, filter discard warning=false, unbounded coords=discard] table[x index={0},y expr={\thisrowno{2}*\thisrowno{3}}, col sep=space] {figures/data/energy_spike/ref.csv};

		 \addplot[steelblue, very thick, dashed] table[x index={0},y expr={\thisrowno{2}*\thisrowno{3}}, col sep=space] {figures/data/energy_spike/llf_lad_low.csv};
		 \addplot[seagreen, very thick, dotted] table[x index={0},y expr={\thisrowno{2}*\thisrowno{3}}, col sep=space] {figures/data/energy_spike/llf_lad_mid.csv};
		 \addplot[darksky, very thick] table[x index={0},y expr={\thisrowno{2}*\thisrowno{3}}, col sep=space] {figures/data/energy_spike/llf_lad_high.csv};

 \end{groupplot}
\end{tikzpicture}
    \caption{\textbf{Thermodynamic inconsistency.} Both IGR and local artificial diffusivity (LAD) are purely momentum-based Navier-Stokes-type regularizations. As such, they are prone to thermodynamic inconsistencies manifesting themselves as spurious spikes in the internal energy after the collision of two shocks, particularly at lower grid resolutions---note the difference in defect size at $250$ versus $1000$ grid points. Observe that the pressure (which features in the momentum equation) is not affected by this.}
    \label{fig:energy_spikes}
\end{figure}

\subsection{Motivation: entropy production in IGR}\label{subsec:motivation}

The information geometric regularization of the compressible Euler equations---the first inviscid, multi-dimensional regularization of compressible flow---was originally derived for barotropic fluids \cite{cao2023information}, and is written as follows:
\begin{equation} \label{eq:standard_igr}
\begin{cases}
    \partial_t \rho + \nabla_{\mathbf{x}} \cdot (\rho \bu) = 0 \,, & \\
    \partial_t (\rho \bu) + \nabla_{\mathbf{x}} \cdot ( \rho \bu \otimes \bu + (p + \Sigma) \mathbb{I} ) = 0 \,, & \\
    \rho^{-1} \Sigma - \alpha \nabla_{\mathbf{x}} \cdot (\rho^{-1} \nabla_{\mathbf{x}} \Sigma) = \alpha \left[ \mathrm{tr}((\nabla_{\mathbf{x}} \bu)^2) + \mathrm{tr}^2(\nabla_{\mathbf{x}} \bu) \right],
\end{cases}
\end{equation}
where $\rho$ is the fluid density, $\mathbf{u}$ is the fluid velocity, $p$ is the pressure, $\Sigma$ is the entropic pressure, and $\alpha > 0$ is the regularization parameter. Its derivation is based on the interpretation of Euler's equations as a geodesic flow on the space of diffeomorphisms \cite{arnold1966geometrie}, with an added logarithmic barrier---inspired by similar constructions used in interior point methods---which prevents the Lagrangian flow map from becoming singular. Whereas the classical interpretation of Euler's equations as a geodesic flow is based on the Levi-Civita connection coming from the kinetic energy metric, IGR uses the dually-flat connection associated with the Hessian manifold structure defined by the kinetic energy metric augmented with a logarithmic barrier. While the compressible Euler system generically evolves towards a state which invalidates its interpretation as a geodesic flow, namely when shocks form and the diffeomorphism between the Eulerian and Lagrangian reference frames ceases to exist, the IGR model is geodesically complete. In fact, in the unidimensional, pressureless case, it has been proven that this regularization yields global strong solutions \cite{cao2024information}. 
For extensions beyond the barotropic case, \cite{cao2023information} proposed to simply add the entropic pressure $\Sigma$ to the pressure, but did not study the behavior of this extension.

Based on this approach, \cite{wilfong2025simulatingmanyenginespacecraftexceeding} applied IGR to the compressible Navier--Stokes equation with energy equation. In this model, the standard IGR model in Equation \eqref{eq:standard_igr} is simply augmented with an energy equation,
\begin{equation} \label{eq:standard_igr_energy}
    \partial_t E + \nabla_{\mathbf{x}} \cdot ( (E + p + \Sigma) \bu ) = 0 \,, 
\end{equation}
where the pressure is determined through an equation of state, $p = p(e, \rho)$, and the internal energy density is
\begin{equation} \label{eq:total-internal-kinetic-relation}
    e = E - \frac{1}{2} \rho |\mathbf{u}|^2 \,.
\end{equation}
This model enabled the solution of realistic compressible fluid models at an unprecedented scale \cite{wilfong2025simulatingmanyenginespacecraftexceeding}. However the coupling of kinematics---described by the continuity/momentum subsystem---with thermodynamics---described by the energy subsystem---is not derived with the geometric motivation of the original IGR model. As such, a detailed understanding of the energy conservation law implied by the barotropic IGR model, its extension to general equations of state, and the thermodynamic consistency of this model remain unexplored. 

%\fs{I feel like here we are getting into the weeds a little too quickly. Maybe having a figure of the internal energy spikes is helpful?}
In Figure~\ref{fig:energy_spikes}, we observe a spurious spike in internal energy arising from the collision of two waves in a one-dimensional simulation; this defect, arising from an inconsistent treatment of thermodynamics, is observed both in simulations which regularize shock interfaces using IGR and those which use localized artificial diffusivity (LAD) \cite{kawai2010assessment, bhagatwala2009modified, dolejvsi2003some, bruno2022fc, guermond2011entropy, barter2010shock, mani2009suitability,fiorina2007artificial, cook2005hyperviscosity, puppo2004numerical, vonneumann1950method, chan2025artificial}. Deriving the entropy equation implied by IGR model demonstrates how certain flow configurations can exhibit undesirable, thermodynamically-inconsistent behavior. By construction, total energy, $E$, is conserved. However, the system is not conservative, in the sense that total entropy is not conserved. As written, the irreversible character of this model is not immediately obvious. We assume that there exists an equation of state for the specific internal energy, $\varepsilon = \varepsilon(\rho, s)$, where $s$ is the entropy, and the internal energy density is related to specific internal energy via $e = \rho \varepsilon$. The fundamental thermodynamic relation, which encodes the first law of thermodynamics, asserts that
\begin{equation}\label{eq:fundamental-thermo-relation}
    \mathsf{d} \varepsilon = \vartheta \mathsf{d} s + p \mathsf{d}(1/\rho) \,,
\end{equation}
where $\vartheta$ is the temperature, implying a functional relationship among the thermodynamic variables. The second law of thermodynamics states that $\vartheta \mathsf{d} s = \delta Q \geq 0$, where $\delta Q$ is a heat source in the system. This infinitesimal heating may take several forms and need not only depend on the thermodynamic state variables. 

A straightforward calculation using \eqref{eq:standard_igr_energy}, \eqref{eq:fundamental-thermo-relation}, and the chain rule for $\varepsilon = \rho^{-1} (E - \frac{1}{2} \rho |\mathbf{u}|^2)$ yields the entropy production rate in either the Lagrangian \eqref{eq:entropy-IGR-Lagrangian} or Eulerian \eqref{eq:entropy-IGR-Eulerian} frame,
\begin{subequations}
    \begin{align}
        \rho  \frac{\mathsf{D} s}{\mathsf{D} t} &= - \vartheta^{-1}\Sigma \nabla_{\mathbf{x}} \cdot \mathbf{u}, \label{eq:entropy-IGR-Lagrangian} \\
            \partial_t (\rho s) + \nabla_{\mathbf{x}} \cdot (\rho s \mathbf{u}) &= - \vartheta^{-1} \Sigma \nabla_{\mathbf{x}} \cdot \mathbf{u} \label{eq:entropy-IGR-Eulerian} \,.
    \end{align}
\end{subequations}
The non-conservative $-\vartheta^{-1} \Sigma \nabla_{\mathbf{x}} \cdot \mathbf{u}$ term arises from a heat source/sink. For positive $\Sigma$, this means that entropy increases when the flow compresses ($\nabla_{\mathbf{x}} \cdot \mathbf{u} < 0$) and decreases when it diverges. 
For sufficiently small $\alpha$, $\Sigma$ is negligible away from shocks. Thus, so long as $\Sigma$ is positive, it achieves the desired effect of increasing entropy appreciably only in shocks. 
Although the elliptic operator defining $\Sigma$ enjoys a maximum principle, see \Cref{appendix:maximum_principle}, it nevertheless fails to be strictly positive for general flows. Indeed we can decompose the strain tensor, see \Cref{appendix:rhs_decomp}, to find that
\begin{equation}
    \mathrm{tr}((\nabla_{\mathbf{x}} \mathbf{u})^2)
    =
    \|\mathbb{S}\|^2_F - \|\mathbb{\Omega}\|^2_F
\end{equation}
where $\| \cdot \|_F$ denotes the Frobenius norm, and $\mathbb{S}$ and $\mathbb{\Omega}$ are the symmetric strain tensor and the vorticity tensor, respectively. Hence, for flows with a strong rotational component, the sign of entropy production is not identically determined by the sign of the divergence of the velocity field. 

In the formulation of IGR with \Cref{eq:standard_igr} augmented by the total energy evolution \Cref{eq:standard_igr_energy}, the entirety of the entropic pressure, $\Sigma$, functions as a heat source/sink in its thermodynamic coupling. Moreover, the sign of entropy production is relatively difficult to predict for an arbitrary flow configuration. In this work, we show that it is possible to split the entropic pressure into two parts, with one part modeled as a reversible process using a Hamiltonian framework, and the remaining part as a heat source/sink using a sign-indefinite generalization of the metriplectic formalism, see \Cref{appendix:metriplectic}. In doing so, we weaken the coupling of the information geometric regularization to irreversible processes, potentially reducing the amount of dissipation implied by this model, and we find that $\dot{S} > 0$ when $\nabla_{\mathbf{x}} \cdot \mathbf{u} < 0$, and $\dot{S} < 0$ when $\nabla_{\mathbf{x}} \cdot \mathbf{u} > 0$. While the second law of thermodynamics is still violated in this new model, we can ensure positive entropy production where wave-steepening occurs. 

\subsection{The Hamiltonian Information Geometric Regularization}\label{subsec:HIGR}

This paper derives a model which we call Hamiltonian IGR (HIGR), which extends standard IGR such that its conservative part is Hamiltonian, and entropy production from its non-conservative part is sign definite with respect to the divergence of the velocity field. Like standard IGR, the model couples a nonlinear hyperbolic conservation law with an elliptic subsystem, and is written as follows:
\begin{equation}
\left\{
\begin{aligned}
    &\partial_t
    \begin{bmatrix}
        \rho \\
        \rho \bu \\
        E
    \end{bmatrix}
    +
    \nabla_{\bx} \cdot
    \begin{bmatrix}
        \rho \mathbf{u} \\
        \rho \mathbf{u} \otimes \mathbf{u} + \mathbb{\Sigma} \\
        E \mathbf{u} + \mathbb{\Sigma} \cdot \mathbf{u} + \alpha c_s^2 (\nabla_\bx \cdot \bu) \nabla_\bx \rho / \rho
    \end{bmatrix}
    =
    0 \,, \\
    &\begin{aligned}
    \rho^{-1} \mathbb{\Sigma} - \alpha \nabla_{\mathbf{x}} \cdot (\rho^{-1} \nabla_{\mathbf{x}} \cdot \mathbb{\Sigma} )\mathbb{I}
    &=
    \left( \rho^{-1} p - \alpha \nabla_{\mathbf{x}} \cdot (\rho^{-1} c_s^2 \nabla_\bx \rho) \right) \mathbb{I}  + \alpha (\mathrm{tr}^2(\nabla_\bx \bu) + \mathrm{tr}((\nabla_{\mathbf{x}} \mathbf{u})^2))\mathbb{I} \\
    &+ 
    \alpha
    \left[
    \left( \left( \frac{ \rho^2 \left. \partial_\rho^3 (\rho \varepsilon) \right|_s }{2} - \frac{ c_s^2}{2} \right) \left|\frac{\nabla_\bx \rho}{\rho} \right|^2 \right) \mathbb{I} + c_s^2 \frac{\nabla_\bx \rho}{\rho} \otimes \frac{\nabla_\bx \rho}{\rho}
    \right]
    \end{aligned} \,.
\end{aligned}
\right.
\end{equation}
where the pressure, $p$, sound speed, $c_s$, and the third derivative of internal energy with respect to density at fixed entropy $\left. \partial_\rho^3 (\rho \varepsilon) \right|_s$ are obtained through an equation of state. The entropy, also obtained through the equation of state, evolves as
\begin{equation}
    \partial_t (\rho s)
    +
    \nabla_\bx (\rho s \bu)
    =
    - \alpha \rho \vartheta^{-1} (\nabla_\bx \cdot \bu)^3 \,.
\end{equation}
The HIGR system formally recovers the compressible Euler equation as $\alpha \to 0$. At a glance, the purpose of this paper is to derive this model and to describe its properties. 

%================================================================
\subsection{One dimensional numerical experiments} \label{subsec:1d_numerics}

Before proceeding with the derivation of the HIGR model, we consider some simple numerical experiments comparing the IGR and HIGR models for a one-dimensional ideal gas. The HIGR model for a one-dimensional ideal gas may be written as follows:
\begin{equation}
\left\{
\begin{aligned}
    &\rho_t + (\rho u)_x = 0 \,, \\
    &(\rho u)_t + (\rho u^2 + (\gamma - 1) \rho \varepsilon + \Sigma)_x = 0 \,, \\
    &E_t + ( (E + (\gamma - 1) \rho \varepsilon + \Sigma)u + \alpha \gamma (\gamma - 1) \rho \varepsilon (\rho_x/\rho) u_x)_x = 0 \,, \\
    &\rho^{-1} \Sigma - \alpha (\rho^{-1} \Sigma_x)_x = 
    \alpha 
    \left[
    k u_x^2 
    +
    (\gamma - 1)
    \left(
    \varepsilon_x
    -
    (\gamma - 1) \varepsilon \rho_x/\rho
    \right)_x
    + 
    \frac{\gamma (\gamma - 1)^2 \varepsilon}{2} (\rho_x/\rho)^2
    \right]
    \,,
\end{aligned}
\right.
\end{equation}
with $k = 2$, and where $\varepsilon = \rho^{-1}(E - (1/2) \rho ( u^2 + \alpha (\partial_x u)^2)$. The Hamiltonian Regularized Euler (HRE) model, see ~\Cref{sec:HRE-conservative-variables}, is obtained when $k=1$, and represents the dissipation-free part of the HIGR model. Finally, the one-dimensional IGR model is given by
\begin{equation}
\left\{
\begin{aligned}
    &\rho_t + (\rho u)_x = 0 \,, \\
    &(\rho u)_t + (\rho u^2 + (\gamma - 1) \rho \varepsilon + \Sigma)_x = 0 \,, \\
    &E_t + ( (E + (\gamma - 1) \rho \varepsilon + \Sigma)u)_x = 0 \,, \\
    &\rho^{-1} \Sigma - \alpha (\rho^{-1} \Sigma_x)_x = 2 \alpha u_x^2 \,,
\end{aligned}
\right.
\end{equation}
where $\varepsilon = \rho^{-1}(E - (1/2) \rho u^2)$. 

We let $\gamma = 1.4$, the adiabatic index for dry air. We use a piecewise linear discontinuous-Galerkin (DG) method to discretize the model, SSPRK3 time-stepping, local Lax-Friedrichs flux for the hyperbolic conservation law, and using a Symmetric Interior Penalty Galerkin (SIPG) method for the elliptic solve. Further, we write the system in first order form by introducing auxiliary variables to project the derivatives into the DG approximation space. 

Explicitly, to compute the derivative of a quantity, e.g.\! $q_u \approx u_x$, in a piecewise linear DG finite element space, $V = \mathrm{P}^1 \mathrm{DG}$, we use a weak form with jump penalization to improve numerical stability: 
\begin{equation}
    \int_\Omega q_u \phi \, dx
    + \sum_{\text{faces}} \frac{C_{\mathrm{penalty}} \, p^2}{h} \, \llbracket q_u \rrbracket \, \llbracket \phi \rrbracket \, dS
    =
    - \int_\Omega u \, \partial_x \phi \, dx
    + \sum_{\text{faces}} \langle u \rangle \, \llbracket \phi \rrbracket \, dS \,, 
    \quad \forall \phi \in V \,,
\end{equation}
where $\llbracket \cdot \rrbracket$ denotes the jump across cell interfaces, $\langle \cdot \rangle$ denotes the average, $h$ is the cell size, $p$ is the polynomial order, and $C_{\mathrm{penalty}}$ is a user-defined penalty coefficient. We use $C_{\mathrm{penalty}} = 20$. Only a single auxiliary variable is used to implement IGR, whereas HIGR requires auxiliary variables for several fields: $u_x$, $(\ln(\rho))_x = \rho_x/\rho$, and $\varepsilon_x$. We solve the elliptic problem for $\Sigma$ using a symmetric interior penalty Galerkin (SIPG) formulation:
\begin{equation}
\begin{aligned}
& \int_\Omega \rho^{-1} \, \Sigma \, \phi \, dx
+ \alpha \int_\Omega \rho^{-1} \, \partial_x \Sigma \, \partial_x \phi \, dx
- \alpha \sum_{\text{faces}} \Big( \big\langle \rho^{-1} \, \partial_x \Sigma \big\rangle \, \llbracket \phi \rrbracket
+ \big\langle \rho^{-1} \, \partial_x \phi \big\rangle \, \llbracket \Sigma \rrbracket \Big) \, dS \\
& \quad + \alpha \sum_{\text{faces}} \frac{C_{\mathrm{penalty}} p^2}{h} \, \llbracket \Sigma \rrbracket \, \llbracket \phi \rrbracket \, dS
= \alpha \int_\Omega 2 q_u^2 \, \phi \, dx \,.
\end{aligned}
\end{equation}
The elliptic solve for HIGR is similar, with additional terms on the right-hand side. 

We use a smoothed, periodized Sod shock initial condition, see ~\Cref{fig:initial_condition}. This test features: (1) the propagation of regularized shock waves, (2) rarefaction waves, and (3) the collision of two shock waves. Smoothing the initial data is necessary, as IGR curtails shock steepening, but does not address already-shocked profiles. The domain is made periodic for simplicity, and we use the relatively high resolution of $N = 512$ cells in order to clearly illustrate the well-resolved solution profile of each model. The time-step is selected such that $\Delta t = 0.95 h/u_{max}$, where $u_{max}$ is the maximum wave-speed and $h=1/N$ is the grid spacing. Finally, we let $\alpha = 5 h^2$. 

 \begin{figure}
     \centering
     \pgfplotscreateplotcyclelist{mycolors}{
    {blue, line width=1pt},
    {red, line width=1pt},
    {green!50!black, line width=1pt},
    {orange, line width=1pt}
}

\begin{tikzpicture}
\begin{axis}[
    width=0.8\textwidth,           % full text width
    height=0.4\textwidth,       % 2:1 width:height ratio
    xmin=0, xmax=1,              % exact x-range, removes padding
    grid=both,
    xlabel={$x$},
    ylabel={},                   % remove y-label
    grid=both,
    cycle list name=mycolors,
    legend style={font=\small, at={(0.95,0.95)}, anchor=north east},
    title={},
    title style={yshift=1.5ex},   % shift title slightly up
]

\addplot table [x=x, y=rho, col sep=comma] {figures/data/higr_igr_comparison/nc512_higr_t_0.000000.csv};
\addplot table [x=x, y=u, col sep=comma] {figures/data/higr_igr_comparison/nc512_higr_t_0.000000.csv};
\addplot table [x=x, y=E, col sep=comma] {figures/data/higr_igr_comparison/nc512_higr_t_0.000000.csv};
\addplot table [x=x, y=eps, col sep=comma] {figures/data/higr_igr_comparison/nc512_higr_t_0.000000.csv};

\legend{$\rho$, $u$, $E$, $\varepsilon$}

\end{axis}
\end{tikzpicture}
     \caption{Periodized smoothed Sod shock initial data.}
     \label{fig:initial_condition}
 \end{figure}
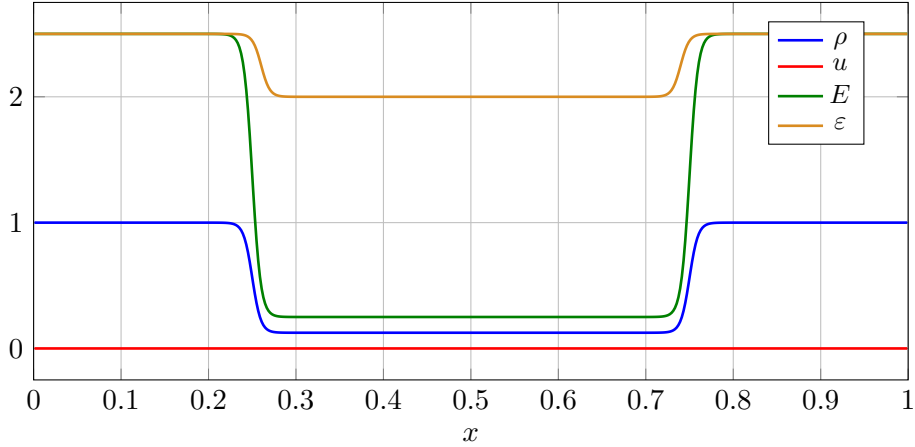

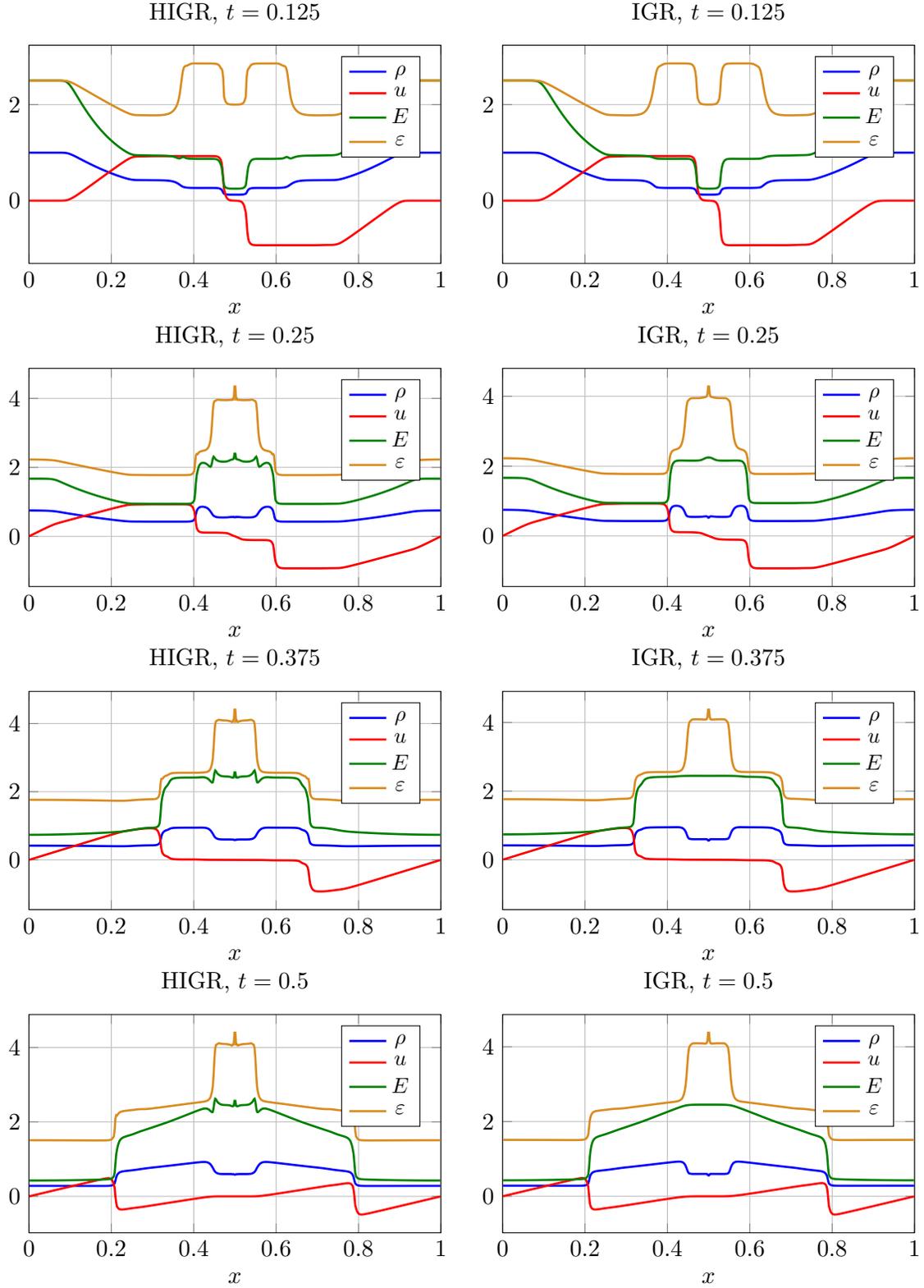
\begin{figure}
    \centering
    \pgfplotscreateplotcyclelist{mycolors}{
    {blue, line width=1pt},
    {red, line width=1pt},
    {green!50!black, line width=1pt},
    {orange, line width=1pt}
}

\begin{tikzpicture}
\begin{groupplot}[
    group style={
        group size=2 by 4,
        vertical sep=1.7cm,     % more vertical whitespace
        horizontal sep=1cm
    },
    xmin=0, xmax=1,              % exact x-range, removes padding
    width=0.5\textwidth,       % nearly half text width per plot
    height=0.31\textwidth,       % taller plots
    xlabel={$x$},
    ylabel={},                   % remove y-label
    grid=both,
    cycle list name=mycolors,
    legend style={font=\small, at={(0.95,0.95)}, anchor=north east},
]

% -------------------
% Row 1: t = 0.125
% -------------------
\nextgroupplot[title={HIGR, $t=0.125$}]
\addplot table [x=x, y=rho, col sep=comma] {figures/data/higr_igr_comparison/nc512_higr_t_0.125000.csv};
\addplot table [x=x, y=u, col sep=comma] {figures/data/higr_igr_comparison/nc512_higr_t_0.125000.csv};
\addplot table [x=x, y=E, col sep=comma] {figures/data/higr_igr_comparison/nc512_higr_t_0.125000.csv};
\addplot table [x=x, y=eps, col sep=comma] {figures/data/higr_igr_comparison/nc512_higr_t_0.125000.csv};
\legend{$\rho$, $u$, $E$, $\varepsilon$}

\nextgroupplot[title={IGR, $t=0.125$}]
\addplot table [x=x, y=rho, col sep=comma] {figures/data/higr_igr_comparison/nc512_igr_t_0.125000.csv};
\addplot table [x=x, y=u, col sep=comma] {figures/data/higr_igr_comparison/nc512_igr_t_0.125000.csv};
\addplot table [x=x, y=E, col sep=comma] {figures/data/higr_igr_comparison/nc512_igr_t_0.125000.csv};
\addplot table [x=x, y=eps, col sep=comma] {figures/data/higr_igr_comparison/nc512_igr_t_0.125000.csv};
\legend{$\rho$, $u$, $E$, $\varepsilon$}

% -------------------
% Row 2: t = 0.25
% -------------------
\nextgroupplot[title={HIGR, $t=0.25$}]
\addplot table [x=x, y=rho, col sep=comma] {figures/data/higr_igr_comparison/nc512_higr_t_0.250000.csv};
\addplot table [x=x, y=u, col sep=comma] {figures/data/higr_igr_comparison/nc512_higr_t_0.250000.csv};
\addplot table [x=x, y=E, col sep=comma] {figures/data/higr_igr_comparison/nc512_higr_t_0.250000.csv};
\addplot table [x=x, y=eps, col sep=comma] {figures/data/higr_igr_comparison/nc512_higr_t_0.250000.csv};
\legend{$\rho$, $u$, $E$, $\varepsilon$}

\nextgroupplot[title={IGR, $t=0.25$}]
\addplot table [x=x, y=rho, col sep=comma] {figures/data/higr_igr_comparison/nc512_igr_t_0.250000.csv};
\addplot table [x=x, y=u, col sep=comma] {figures/data/higr_igr_comparison/nc512_igr_t_0.250000.csv};
\addplot table [x=x, y=E, col sep=comma] {figures/data/higr_igr_comparison/nc512_igr_t_0.250000.csv};
\addplot table [x=x, y=eps, col sep=comma] {figures/data/higr_igr_comparison/nc512_igr_t_0.250000.csv};
\legend{$\rho$, $u$, $E$, $\varepsilon$}

% -------------------
% Row 3: t = 0.375
% -------------------
\nextgroupplot[title={HIGR, $t=0.375$}]
\addplot table [x=x, y=rho, col sep=comma] {figures/data/higr_igr_comparison/nc512_higr_t_0.375000.csv};
\addplot table [x=x, y=u, col sep=comma] {figures/data/higr_igr_comparison/nc512_higr_t_0.375000.csv};
\addplot table [x=x, y=E, col sep=comma] {figures/data/higr_igr_comparison/nc512_higr_t_0.375000.csv};
\addplot table [x=x, y=eps, col sep=comma] {figures/data/higr_igr_comparison/nc512_higr_t_0.375000.csv};
\legend{$\rho$, $u$, $E$, $\varepsilon$}

\nextgroupplot[title={IGR, $t=0.375$}]
\addplot table [x=x, y=rho, col sep=comma] {figures/data/higr_igr_comparison/nc512_igr_t_0.375000.csv};
\addplot table [x=x, y=u, col sep=comma] {figures/data/higr_igr_comparison/nc512_igr_t_0.375000.csv};
\addplot table [x=x, y=E, col sep=comma] {figures/data/higr_igr_comparison/nc512_igr_t_0.375000.csv};
\addplot table [x=x, y=eps, col sep=comma] {figures/data/higr_igr_comparison/nc512_igr_t_0.375000.csv};
\legend{$\rho$, $u$, $E$, $\varepsilon$}

% -------------------
% Row 4: t = 0.5
% -------------------
\nextgroupplot[title={HIGR, $t=0.5$}]
\addplot table [x=x, y=rho, col sep=comma] {figures/data/higr_igr_comparison/nc512_higr_t_0.500000.csv};
\addplot table [x=x, y=u, col sep=comma] {figures/data/higr_igr_comparison/nc512_higr_t_0.500000.csv};
\addplot table [x=x, y=E, col sep=comma] {figures/data/higr_igr_comparison/nc512_higr_t_0.500000.csv};
\addplot table [x=x, y=eps, col sep=comma] {figures/data/higr_igr_comparison/nc512_higr_t_0.500000.csv};
\legend{$\rho$, $u$, $E$, $\varepsilon$}

\nextgroupplot[title={IGR, $t=0.5$}]
\addplot table [x=x, y=rho, col sep=comma] {figures/data/higr_igr_comparison/nc512_igr_t_0.500000.csv};
\addplot table [x=x, y=u, col sep=comma] {figures/data/higr_igr_comparison/nc512_igr_t_0.500000.csv};
\addplot table [x=x, y=E, col sep=comma] {figures/data/higr_igr_comparison/nc512_igr_t_0.500000.csv};
\addplot table [x=x, y=eps, col sep=comma] {figures/data/higr_igr_comparison/nc512_igr_t_0.500000.csv};
\legend{$\rho$, $u$, $E$, $\varepsilon$}

\end{groupplot}
\end{tikzpicture}
    \caption{Comparison of the solution profiles of the HIGR (left column) and IGR (right column) systems for the smoothed, periodized Sod shock problem.}
    \label{fig:igr_higr_comparison}
\end{figure}

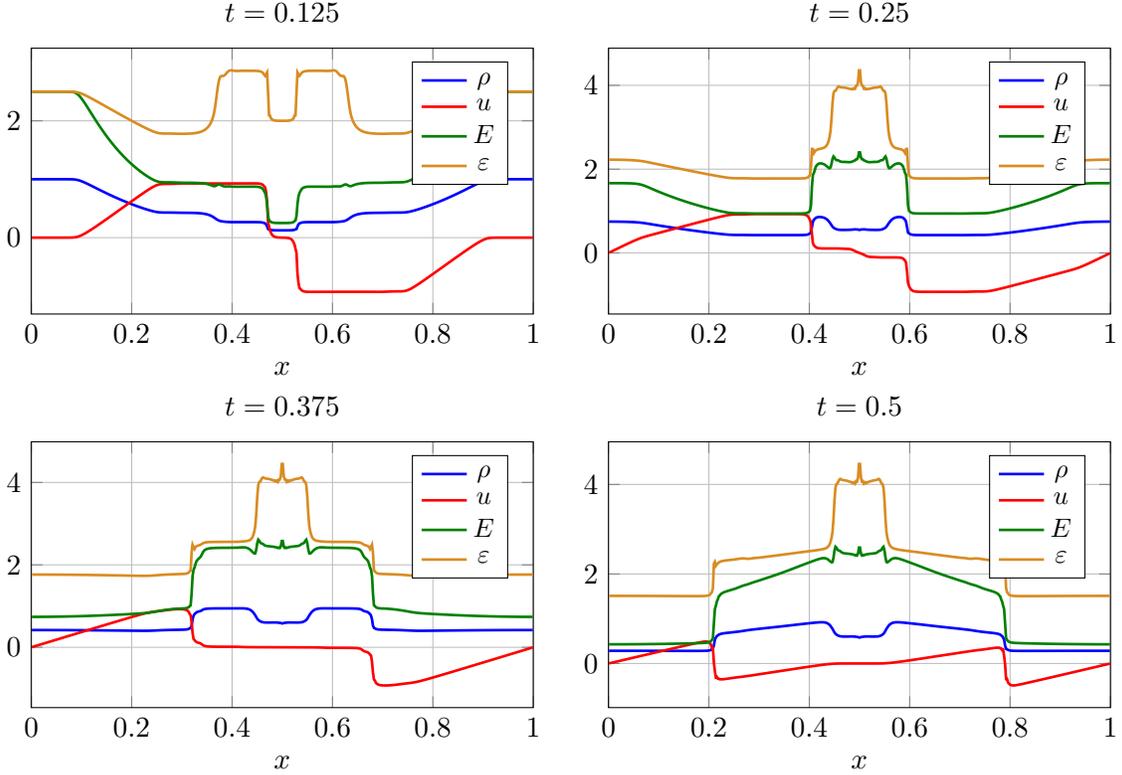
\begin{figure}
    \centering
    \pgfplotscreateplotcyclelist{mycolors}{
    {blue, line width=1pt},
    {red, line width=1pt},
    {green!50!black, line width=1pt},
    {orange, line width=1pt}
}

\begin{tikzpicture}
\begin{groupplot}[
    group style={
        group size=2 by 2,        % 2 columns x 2 rows
        vertical sep=1.7cm,       % similar spacing to original
        horizontal sep=1cm
    },
    xmin=0, xmax=1,              % exact x-range, removes padding
    width=0.50\textwidth,        % two columns should fit across the page
    height=0.31\textwidth,       % keep the taller aspect from original
    xlabel={$x$},
    ylabel={},                   % keep y-label empty as before
    grid=both,
    cycle list name=mycolors,
    legend style={font=\small, at={(0.95,0.95)}, anchor=north east},
]

% -------------------
% Top-left: t = 0.125 (HRE)
% -------------------
\nextgroupplot[title={$t=0.125$}]
\addplot table [x=x, y=rho, col sep=comma] {figures/data/higr_igr_comparison/nc512_hre_t_0.125000.csv};
\addplot table [x=x, y=u,   col sep=comma] {figures/data/higr_igr_comparison/nc512_hre_t_0.125000.csv};
\addplot table [x=x, y=E,   col sep=comma] {figures/data/higr_igr_comparison/nc512_hre_t_0.125000.csv};
\addplot table [x=x, y=eps, col sep=comma] {figures/data/higr_igr_comparison/nc512_hre_t_0.125000.csv};
\legend{$\rho$, $u$, $E$, $\varepsilon$}

% -------------------
% Top-right: t = 0.25 (HRE)
% -------------------
\nextgroupplot[title={$t=0.25$}]
\addplot table [x=x, y=rho, col sep=comma] {figures/data/higr_igr_comparison/nc512_hre_t_0.250000.csv};
\addplot table [x=x, y=u,   col sep=comma] {figures/data/higr_igr_comparison/nc512_hre_t_0.250000.csv};
\addplot table [x=x, y=E,   col sep=comma] {figures/data/higr_igr_comparison/nc512_hre_t_0.250000.csv};
\addplot table [x=x, y=eps, col sep=comma] {figures/data/higr_igr_comparison/nc512_hre_t_0.250000.csv};
\legend{$\rho$, $u$, $E$, $\varepsilon$}

% -------------------
% Bottom-left: t = 0.375 (HRE)
% -------------------
\nextgroupplot[title={$t=0.375$}]
\addplot table [x=x, y=rho, col sep=comma] {figures/data/higr_igr_comparison/nc512_hre_t_0.375000.csv};
\addplot table [x=x, y=u,   col sep=comma] {figures/data/higr_igr_comparison/nc512_hre_t_0.375000.csv};
\addplot table [x=x, y=E,   col sep=comma] {figures/data/higr_igr_comparison/nc512_hre_t_0.375000.csv};
\addplot table [x=x, y=eps, col sep=comma] {figures/data/higr_igr_comparison/nc512_hre_t_0.375000.csv};
\legend{$\rho$, $u$, $E$, $\varepsilon$}

% -------------------
% Bottom-right: t = 0.5 (HRE)
% -------------------
\nextgroupplot[title={$t=0.5$}]
\addplot table [x=x, y=rho, col sep=comma] {figures/data/higr_igr_comparison/nc512_hre_t_0.500000.csv};
\addplot table [x=x, y=u,   col sep=comma] {figures/data/higr_igr_comparison/nc512_hre_t_0.500000.csv};
\addplot table [x=x, y=E,   col sep=comma] {figures/data/higr_igr_comparison/nc512_hre_t_0.500000.csv};
\addplot table [x=x, y=eps, col sep=comma] {figures/data/higr_igr_comparison/nc512_hre_t_0.500000.csv};
\legend{$\rho$, $u$, $E$, $\varepsilon$}

\end{groupplot}
\end{tikzpicture}
    \caption{Solution profile of the HRE system for the smoothed, periodized Sod shock problem.}
    \label{fig:hre_sod_shock}
\end{figure}

~\Cref{fig:igr_higr_comparison} shows a comparison of the solution profiles in the HIGR and IGR systems, and ~\Cref{fig:hre_sod_shock} shows the solution profile the HRE model. The solution profiles for the HIGR and HRE models are nearly identical, with the HRE model exhibiting more oscillations from numerical instability and enhancing the cusp-like singularities observed in both profiles. The regularized shock fronts propagate at the same rate with comparable shapes in all three models, especially prior to the collision of the counter-propagating shocks, with the HIGR and HRE models exhibiting several new defects not present in the IGR model, discussed below. Each solution produces spurious heat at $x=0$ after the shocks have collided, leading to a small defect in the density profile. This defect arises, because the entropy-producing part of the regularization deposits a small amount of heat to keep the profile smooth, leading to a local pressure spike and a local reduction in density. Note that the numerical solution of HRE produces entropy, despite the conservative nature of the model due to the dissipative local Lax-Friedrichs flux. The HIGR and HRE profiles have a marginally smaller defect in the density profile compared with IGR, but a more pronounced defect in the total energy profile. We believe that the more complex coupling of the pressure-driven and inertial dynamics are responsible for these defects observed in HIGR and HRE. In particular, a noticeable oscillatory structure arises in the total energy profile whenever a shock wave passes through a stationary density interface (a regularized contact discontinuity). This is because the entropic pressure no longer depends only on the gradient of the velocity field as in IGR (which localizes its impact to shock fronts), but rather also depends on density gradients. This has the unintended consequence of modifying the solution profile at contact discontinuities, and not just shock fronts. Finally, the HIGR profile is more oscillatory than IGR due to the difficulty of stabilizing the higher-order derivatives appearing in the HIGR model. By removing these higher-order derivatives, one may mitigate this issue, see the ablation study in ~\Cref{appendix:ablation_study}. On the other hand, the HRE model, which is fully-conservative, exhibits more numerical instability with spiky features. This demonstrates the need for some positive amount of entropy production to obtain numerically-stable and physically-meaningful solution profiles.  

In addition to these problems with the HIGR model that are visible in the solution profile, it is also worth noting that a stable implementation of a DG method for HIGR is significantly more difficult to obtain than for IGR. The HIGR method is sensitive to our choice of $\alpha$ parameter, becoming unstable for $\alpha$ much larger than $\alpha = 5h^2$. Moreover, we were unable to stabilize DG methods for HIGR with quadratic or higher interpolation. On the other hand, IGR is highly robust with respect to discretization and parameter choice, admitting stable solutions with the local Lax-Friedrichs flux nearly eliminated (maintaining stability with the dissipative flux scaled to $10\%$ its typical value), for large or small $\alpha$, and for quadratic interpolation. The simplicity of IGR is a significant advantage. 

These results strongly favor IGR as the more practical regularization compared to HIGR. Nonetheless, the derivation of HIGR sheds light on one of the few defects observed in solution profiles obtained from IGR: namely, the internal energy spike arising from colliding shocks, see ~\Cref{fig:energy_spikes}. Indeed, IGR couples the entropic pressure with internal energy such that the entirety of entropic pressure functions like a source/sink term rather than being fully incorporated into the variational structure. We will see in ~\Cref{sec:pressureless-IGR} that this treatment is somewhat pessimistic in terms of its entropy production: the entropic pressure decomposes into conservative and dissipative parts, with the conservative component having a Hamiltonian formulation. The HIGR model attempts to leverage the Hamiltonian formalism to consistently treat the coupling of IGR with thermodynamics, with the aim of reducing the artificial dissipation responsible for the spikes. However, the numerical results show that this approach comes at a high cost, substantially increasing model complexity and adding new oscillations in the total-energy profile. While the HIGR model does not seem to be extremely useful in and of itself, its derivation helps to clarify the intricate manner in which thermodynamics interacts with IGR. 

\subsection{Outline}\label{subsec:outline}

%\wjb{Here's a basic overview of the structure of the paper. We can workshop this once we have the graphical aids.}

The paper follows the derivation and analysis of five closely related inviscid regularizations of compressible flow. We call these models pressureless IGR, pressureless Hamiltonian regularized Euler (HRE), thermodynamic IGR, thermodynamic HRE, and Hamiltonian IGR. Of particular interest are the energy transport and entropy production implied by these models. 

Pressureless IGR and thermodynamic IGR were derived and studied previously \cite{cao2023information, cao2024information, wilfong2025simulatingmanyenginespacecraftexceeding}; here, we consider the local energy transport and entropy production implied by these models. The only force appearing in the momentum equation in pressureless IGR is the entropic pressure arising from the information geometric regularization: no barotropic or thermodynamic pressure is considered. In Section~\ref{sec:Hamiltonian-sub-system}, we derive the pressureless HRE model as a Hamiltonian subsystem of pressureless IGR by partitioning the entropic pressure into conservative (energy and entropy conserving) and non-conservative parts. In Section~\ref{sec:pressureless-IGR}, we derive the local kinetic energy transport equation of the pressureless IGR and HRE models. In Section~\ref{sec:dispersion-free-extension}, we derive the thermodynamic HRE model, which extends pressureless HRE by adding an internal energy to the Hamiltonian. A nonphysical dispersion relation in acoustic waves arises if this is done na{\"i}vely, due to the $H(\mathrm{div})$-type modified kinetic energy metric of HRE. This obstacle is overcome by adding a corresponding regularization to the internal energy, which resembles a capillary energy, so as to match the linear characteristics with those of compressible Euler. In Section~\ref{sec:HRE-conservative-variables}, we show that the thermodynamic HRE model can be written in locally conservative variables: mass, momentum, and total energy. In Section~\ref{sec:full-IGR-dissipative-extension}, we derive the Hamiltonian IGR model, which extends the thermodynamic HRE model by adding in the non-conservative force from IGR using a modification of the metriplectic formalism. The resulting model arises from a bracket-based formalism, locally conserves mass, momentum, and energy, and enjoys sign-definite entropy production relative to the sign of the divergence of the velocity. 

In brief, the paper proceeds by identifying a Hamiltonian subsystem of IGR, and sequentially extending this system using the Hamiltonian and metriplectic formalisms. See \Cref{tab:system-properties} for a summary of these models and their thermodynamic behavior. We include barotropic IGR in this table for completeness, as IGR was originally introduced in the barotropic setting \cite{cao2023information}, but focus on the other five models in this work. Moreover, while a Hamiltonian regularization of the barotropic Euler equation is not explicitly referenced in \Cref{tab:system-properties}, it may be regarded as a special case of the thermodynamic HRE model in which the equation of state depends on density alone. 

Some useful background knowledge and technical derivations are moved to the appendix. In \Cref{appendix:bracket_formalisms}, we provide a brief review of Hamiltonian and metriplectic modeling formalisms in mechanics. In addition to defining Hamiltonian and metriplectic systems, we describe the energy--Casimir method for studying the linear stability of Hamiltonian systems. In \Cref{appendix:igr_elliptic_operators}, we establish some properties of the elliptic operators which appear in IGR models of compressible flow, e.g.\! commutation relations. In \Cref{appendix:change_of_variables}, we provide a detailed derivation of the noncanonical Hamiltonian structure of the HRE model in Eulerian coordinates. Finally, ~\Cref{appendix:ablation_study} illustrates how the deletion of higher-order derivatives improves numerical stability in the HIGR system. 

\begin{center}
\small
\begin{table}
\begin{tabular}{ | c | c c c c c c | }
 \hline
  & \makecell{Pressureless \\ IGR} & \makecell{Pressureless \\ HRE} & \makecell{Barotropic \\ IGR} & \makecell{Thermo. \\ IGR} & \makecell{Thermo. \\ HRE} & \makecell{Hamiltonian \\ IGR} \\ 
  & \eqref{eq:pressureless-IGR} & \eqref{eq:pressureless-HRE} & \eqref{eq:standard_igr} & \eqref{eq:standard_igr}-\eqref{eq:standard_igr_energy} & \eqref{eq:thermal-HRE} & \eqref{eq:Hamiltonian-IGR} \\
 \hline  & & & & & & 
 \\ [-0.65 em]
 \makecell{Reference} & \cite{cao2023information} & new  & \cite{cao2023information} & \cite{wilfong2025simulatingmanyenginespacecraftexceeding} & new & new \\[5 pt]
 \hdashline
 \makecell{Conserves \\ total energy} & \markno & \markyes & \markno & \markyes & \markyes & \markyes \\
 \hdashline
 \makecell{Dissipative entropic \\ pressure coupling} & \markyes & \markno & \markyes & \markyes & \markno & \markyes \\
 \hdashline
 \makecell{Sign-matching \\ entropy production} & n/a & n/a & n/a & \markno & \markyes & \markyes \\ 
 \hline
\end{tabular}
\caption{A summary of the basic properties of the regularized compressible flow models considered in this paper. }\label{tab:system-properties}
\end{table}
\end{center}

 %\hdashline
 %\makecell{Conserves \\ kinetic energy} & \markno & \markyes  & \markno & \markno & \markno \\ 

%================================================================
\section{A Hamiltonian sub-system of pressureless IGR}\label{sec:Hamiltonian-sub-system}

Let $\Omega \subset \mathbb{R}^d$. Fluid mechanics in the Lagrangian reference frame is described by a flow map, $\bm{\Phi}_t: \Omega \to \Omega$, which is a one-parameter family of diffeomorphisms of the fluid domain, indexed by time $t$. The group of diffeomorphisms of the fluid domain, $\mathrm{Diff}(\Omega)$, is the natural configuration space for geometric fluid mechanics in the Lagrangian reference frame \cite{arnold1966geometrie, ebin1970groups, MaRa1999, VaMa2005}. Classical ideal fluid mechanics then arises from an action principle on $T\mathrm{Diff}(\Omega)$. Given a metric, $g[\bm{\Phi}_t]: T_{\bm{\Phi}_t}\mathrm{Diff}(\Omega) \times T_{\bm{\Phi}_t}\mathrm{Diff}(\Omega) \to \mathbb{R}$,
\begin{equation}
    (\mathbf{U},\mathbf{V}) \mapsto g[\bm{\Phi}_t](\mathbf{U},\mathbf{V}) \in \mathbb{R} \,,
\end{equation}
the Lagrangian flow map is obtained as a stationary point of the action:
\begin{equation}
    \mathcal{S}[\bm{\Phi}_t]
    =
    \frac{1}{2}
    \int_{t_1}^{t_2}
    g[\bm{\Phi}_t](\dot{\bm{\Phi}}_t, \dot{\bm{\Phi}}_t) \mathsf{d} t \,.
\end{equation}
This yields geodesic flow associated with the metric-compatible Levi-Civita connection. This connection is not the one which generates the flow defining the information geometric regularization \cite{cao2023information}. The IGR model is derived as a geodesic flow with respect to a different connection structure, the dual connection of a Hessian manifold, which rigorously enforces the positivity of the Jacobian determinant of the flow map. However, the Levi-Civita and dual connections on a Hessian manifold are closely related, in fact, they are scalar multiples of each other \cite{leok2017connecting}.

Despite the fact that the Levi-Civita flow does not robustly enforce positivity of the Jacobian determinant in the way that the dual connection flow does (indeed, the boundary of the feasible set can be approached approached exponentially fast), this flow has the advantage of conserving a generalized kinetic energy arising from the metric: the Levi-Civita connection is the unique torsion-free connection with this property \cite{levi1916nozione, lee2018introduction}. Hence, we will proceed in this derivation so as to interpret as much of the IGR model as possible in terms of an action principle, and will find that a significant portion of the IGR model is recovered by the HRE model. This allows us to partition the entropic pressure, which enforces the positivity of the Jacobian of the flow map, into conservative (Hamiltonian) and non-conservative (non-Hamiltonian) parts. We call the flow based on the Levi-Civita connection the Hamiltonian regularized Euler (HRE) system. In particular, because we consider no potential energy in this section, the Levi-Civita geodesic flow based on the regularized kinetic energy metric is called the pressureless HRE model. 

\subsection{The IGR metric structure}\label{subsec:regularized-metric}

Different fluid models can be obtained from choosing various metric structures on $\mathrm{Diff}(\Omega)$. Many generalized fluid models may be described in this manner: e.g.\! Camassa--Holm \cite{camassa1993integrable, misiolek1998shallow, kouranbaeva1999camassa, khesin2003euler} and LANS/LAE-$\alpha$ \cite{holm1998euler, holm1999nonlinear, holm2002lagrangian} models use an $H^1$ kinetic energy norm, and the Hunter--Saxton equation \cite{hunter1991dynamics, lenells2008hunter, khesin2003euler} uses an $H^1$ seminorm. More generally, a family of such models sometimes known as the Euler--Arnold or EPDiff (Euler--Poincar{\'e} Diffeomorphism) equations \cite{khesin2003euler, holm1998euler, constantin2003geodesic, holm2005momentum} arise from choosing various kinetic energy metrics, usually a Sobolev-type norm. Frequently, these models are assumed to be of incompressible fluids, although nothing explicitly requires this. Indeed, LANS-$\alpha$ models for compressible flow have been studied \cite{bhat2005lagrangian}. The pressureless HRE model fits into this general context, being the geodesic flow associated with a weighted $H(\mathrm{div})$ metric. This model and IGR distinguish themselves in how their kinetic energy metric is selected so as to enforce positivity of the Jacobian determinant of the flow map. 

IGR uses a special type of Hessian metric. To begin, define the following convex potential:
\begin{equation}
    \psi_0[\bm{\Phi}_t]
    =
    \frac{1}{2}
    \int_{\Omega} | \bm{\Phi}_t(\mathbf{X}_0) |^2 \mathsf{d}^d \mathbf{X}_0 \,.
\end{equation}
Its second variation yields the standard kinetic energy metric:
\begin{equation}
    g_0(\mathbf{U}, \mathbf{V})
    =
    D^2 \psi_0[\mathbf{U},\mathbf{V}]
    =
    \int_\Omega \mathbf{U} \cdot \mathbf{V} \mathsf{d}^d \mathbf{X}_0 \,.
\end{equation}
The information geometric regularization \cite{cao2023information, cao2024information} proceeds by adding a logarithmic-barrier to avoid a convex region of space whose boundary is given by the zero level-set of a convex function, $f: \mathrm{Diff}(\Omega) \to \mathbb{R}$. We define the potential to be
\begin{equation}
    \psi_\alpha[\bm{\Phi}_t]
    =
    \int_{\Omega} \left( \frac{1}{2} | \bm{\Phi}_t(\mathbf{X}_0) |^2 -
    \alpha \log(f(\bm{\Phi}_t(\mathbf{X}_0))) \right) \mathsf{d}^d \mathbf{X}_0 \,. 
\end{equation}
A metric is then obtained by taking the Hessian of this generating function. 

The flow map becomes singular when $\det(\nabla_{\mathbf{X}} \bm{\Phi}_t) = 0$. Therefore, we would like to penalize $f(\bm{\Phi}_t) = \det(\nabla_{\mathbf{X}} \bm{\Phi}_t)$. However, the determinant is not convex unless it is restricted to symmetric positive definite matrices. This difficulty may be circumvented by embedding in a higher dimensional space. Define $\iota: \mathrm{Diff}(\Omega) \to \mathrm{Diff}(\Omega) \times \mathbb{R}$ via
\begin{equation}
    \iota(\bm{\Phi})
    =
    (\bm{\Phi}, \det(\nabla_{\mathbf{X}} \bm{\Phi})) \,,
\end{equation}
and define $\tilde{\psi}_\alpha: \mathrm{Diff}(\Omega) \times \mathbb{R} \to \mathbb{R}$ to be
\begin{equation}
    \tilde{\psi}_\alpha[\bm{\Phi}, \Phi']
    =
    \int_{\Omega} \left( \frac{1}{2} | \bm{\Phi}(\mathbf{X}_0) |^2 -
    \alpha \log(\Phi') \right) \mathsf{d}^d \mathbf{X}_0 \,.
\end{equation}
Noting Jacobi's identity,
\begin{equation}
    D \iota[\bm{\Phi}_t]\mathbf{U}
    =
    (\mathbf{U}, \det(\nabla_{\mathbf{X}} \bm{\Phi}_t) \mathrm{tr}( \nabla_{\mathbf{X}} \bm{\Phi}_t^{-1} \nabla_{\mathbf{X}} \mathbf{U} ) ) \,,
\end{equation}
the metric structure is defined by pulling back the second variation of $\tilde{\psi}_\alpha$ by $\iota$, i.e.,
\begin{equation}
\begin{aligned}
    g_\alpha[\bm{\Phi}_t](\mathbf{U},\mathbf{V})
    &=
    (\iota^* D^2 \tilde{\psi}_\alpha)[\bm{\Phi}_t](\mathbf{U},\mathbf{V})
    =
    D^2 \tilde{\psi}_\alpha[\iota(\bm{\Phi}_t)](D \iota[\bm{\Phi}_t]\mathbf{U}, D \iota[\bm{\Phi}_t]\mathbf{V}) \\
    &=
    \int_\Omega \left( \mathbf{U} \cdot \mathbf{V}
    +
    \alpha \mathrm{tr}(\nabla_{\mathbf{X}} \bm{\Phi}_t^{-1} \nabla_{\mathbf{X}} \mathbf{U}) \mathrm{tr}(\nabla_{\mathbf{X}} \bm{\Phi}_t^{-1} \nabla_{\mathbf{X}} \mathbf{V})  \right) \mathsf{d}^d \mathbf{X}_0 \,.
\end{aligned}
\end{equation}
In the Eulerian reference frame, letting $\mathbf{u}(\bm{\Phi}(\mathbf{X})) = \dot{\bm{\Phi}}(\mathbf{X})$ (similarly for $\mathbf{v}$), and $\rho(\mathbf{X}) = \det(\nabla_{\mathbf{X}} \bm{\Phi})^{-1}$, this metric becomes
\begin{equation}
    g_\alpha(\mathbf{u}, \mathbf{v})
    =
    \int_\Omega \rho \left( \mathbf{u} \cdot \mathbf{v} + \alpha (\nabla_{\mathbf{x}} \cdot \mathbf{u}) (\nabla_{\mathbf{x}} \cdot \mathbf{v}) \right) \mathsf{d}^d \mathbf{x} \,.
\end{equation}
Hence, as mentioned previously, this is simply a weighted $H(\mathrm{div})$ inner product. 

\subsection{The regularized Euler--Lagrange equations}\label{subsec:regularized-EL-equations}
Now, we would like to obtain the equations of motion (i.e., the Euler--Lagrange equations) corresponding to geodesics of this metric and compare to the standard IGR equations \eqref{eq:standard_igr}. In doing so, this allows us to extract the ``conservative'' component of the IGR model. The Lagrangian functional is defined as
\begin{equation}
    L[\bm{\Phi}_t, \dot{\bm{\Phi}}_t]
    =
    \frac{1}{2}
    g_\alpha[\bm{\Phi}_t](\dot{\bm{\Phi}}_t, \dot{\bm{\Phi}}_t)
    =
    \frac{1}{2}
    D^2 \tilde{\psi}_\alpha[\iota(\bm{\Phi}_t)](D \iota[\bm{\Phi}_t] \dot{\bm{\Phi}}_t, D \iota[\bm{\Phi}_t] \dot{\bm{\Phi}}_t) \,,
\end{equation}
and the action functional is then defined to be
\begin{equation}
    \mathcal{S}[\bm{\Phi}_t]
    =
    \int_{t_1}^{t_2}
    L[\bm{\Phi}_t, \dot{\bm{\Phi}}_t] \mathsf{d} t
    =
    \frac{1}{2}
    \int_{t_1}^{t_2}
    \int_\Omega \left( | \dot{\bm{\Phi}}_t |^2
    +
    \alpha \mathrm{tr}(\nabla_{\mathbf{X}} \bm{\Phi}_t^{-1} \nabla_{\mathbf{X}} \dot{\bm{\Phi}}_t)^2  \right) \mathsf{d}^d \mathbf{X}_0 \mathsf{d} t \,.
\end{equation}
Stationary points of this action functional yield the Euler--Lagrange equation: for all variations $\mathbf{U}$,
\begin{equation}\label{eq:EL-generic}
    0 = \delta \mathcal{S}[\bm{\Phi}_t] \cdot \mathbf{U} = \int_{t_1}^{t_2} \left( D_{\bm{\Phi}_t} L - \frac{\mathsf{d}}{\mathsf{d}t} D_{\dot{\bm{\Phi}}_t} L \right) \cdot \mathbf{U} \mathsf{d} t \,.
\end{equation}
Equivalently, one may obtain the evolution equations in Hamiltonian form, see appendix \ref{appendix:canonical_pressureless_HRE}. 

To obtain an expression for the Euler--Lagrange equation \eqref{eq:EL-generic}, we need to take derivatives of the Lagrangian. The derivative with respect to $\dot{\bm{\Phi}}_t$ is simply
\begin{equation}
    D_{\dot{\bm{\Phi}}_t} L[\bm{\Phi}_t, \dot{\bm{\Phi}}_t] \mathbf{U}
    =
    D^2 \tilde{\psi}_\alpha[\iota(\bm{\Phi}_t)](D \iota[\bm{\Phi}_t] \dot{\bm{\Phi}}_t, D \iota[\bm{\Phi}_t] \mathbf{U}) \,,
\end{equation}
because of the symmetry of the metric. On the other hand, the derivative with respect to $\bm{\Phi}_t$ is obtained by taking the variation of the pullback metric. Using the chain rule, we have:
\begin{multline}
    D_{\bm{\Phi}_t} L[\bm{\Phi}_t, \dot{\bm{\Phi}}_t] \mathbf{U}
    =
    \frac{1}{2}
    D_{\bm{\Phi}_t} \left( D^2 \tilde{\psi}_\alpha[\iota(\bm{\Phi}_t)](D \iota[\bm{\Phi}_t] \dot{\bm{\Phi}}_t, D \iota[\bm{\Phi}_t] \dot{\bm{\Phi}}_t) \right) \mathbf{U}
    \\
    =
    \frac{1}{2} D^3 \tilde{\psi}_\alpha[\iota(\bm{\Phi}_t)](D \iota[\bm{\Phi}_t] \dot{\bm{\Phi}}_t, D \iota[\bm{\Phi}_t] \dot{\bm{\Phi}}_t, D \iota[\bm{\Phi}_t] \mathbf{U}) 
    \\
    + D^2 \tilde{\psi}_\alpha[\iota(\bm{\Phi}_t)](D^2 \iota[\bm{\Phi}_t](\dot{\bm{\Phi}}_t, \mathbf{U}), D \iota[\bm{\Phi}_t] \dot{\bm{\Phi}}_t) \,.
\end{multline}
Next, it is necessary to compute the time derivative:
\begin{multline}
    \frac{\mathsf{d}}{\mathsf{d} t} \left( D^2 \tilde{\psi}_\alpha[\iota(\bm{\Phi}_t)](D \iota[\bm{\Phi}_t] \dot{\bm{\Phi}}_t, D \iota[\bm{\Phi}_t] \mathbf{U}) \right) \\
    =
    D^3 \tilde{\psi}_\alpha[\iota(\bm{\Phi}_t)](D \iota[\bm{\Phi}_t] \dot{\bm{\Phi}}_t, D \iota[\bm{\Phi}_t] \dot{\bm{\Phi}}_t, D \iota[\bm{\Phi}_t] \mathbf{U}) 
    + D^2 \tilde{\psi}_\alpha[\iota(\bm{\Phi}_t)](D \iota[\bm{\Phi}_t] \ddot{\bm{\Phi}}_t, D \iota[\bm{\Phi}_t] \mathbf{U}) \\
    +
    D^2 \tilde{\psi}_\alpha[\iota(\bm{\Phi}_t)](D^2 \iota[\bm{\Phi}_t](\dot{\bm{\Phi}}_t, \dot{\bm{\Phi}}_t), D \iota[\bm{\Phi}_t] \mathbf{U}) 
    + 
    D^2 \tilde{\psi}_\alpha[\iota(\bm{\Phi}_t)](D^2 \iota[\bm{\Phi}_t](\dot{\bm{\Phi}}_t, \mathbf{U}), D \iota[\bm{\Phi}_t] \dot{\bm{\Phi}}_t) 
    \,.
\end{multline}
Therefore, we find the Euler--Lagrange equation in weak form:
\begin{multline}
    0 = \int_{t_1}^{t_2} \bigg[ 
    - D^2 \tilde{\psi}_\alpha[\iota(\bm{\Phi}_t)](D \iota[\bm{\Phi}_t] \ddot{\bm{\Phi}}_t, D \iota[\bm{\Phi}_t] \mathbf{U}) \\
    - \frac{1}{2} D^3 \tilde{\psi}_\alpha[\iota(\bm{\Phi}_t)](D \iota[\bm{\Phi}_t] \mathbf{U}, D \iota[\bm{\Phi}_t] \dot{\bm{\Phi}}_t, D \iota[\bm{\Phi}_t] \dot{\bm{\Phi}}_t)  
    \\
    - D^2 \tilde{\psi}_\alpha[\iota(\bm{\Phi}_t)](D^2 \iota[\bm{\Phi}_t](\dot{\bm{\Phi}}_t, \dot{\bm{\Phi}}_t), D \iota[\bm{\Phi}_t] \mathbf{U}) 
    \bigg] \mathsf{d} t 
    \,.
\end{multline}
Since this must hold for all variations $\mathbf{U}$, we obtain the strong form of the equation, which is the pressureless HRE system in the Lagrangian frame,
\begin{multline}\label{eq:pressureless-HRE-abstract}
    (D \iota[\bm{\Phi}_t])^* (D^2 \tilde{\psi}_\alpha[\iota(\bm{\Phi}_t)]) (D \iota[\bm{\Phi}_t] \ddot{\bm{\Phi}}_t, \cdot) \\
    =
    - (D \iota[\bm{\Phi}_t])^* \bigg[
    D^2 \tilde{\psi}_\alpha[\iota(\bm{\Phi}_t)](D^2 \iota[\bm{\Phi}_t](\dot{\bm{\Phi}}_t, \dot{\bm{\Phi}}_t), \cdot)
    +
    \frac{1}{2} D^3 \tilde{\psi}_\alpha[\iota(\bm{\Phi}_t)](D \iota[\bm{\Phi}_t] \dot{\bm{\Phi}}_t, D \iota[\bm{\Phi}_t] \dot{\bm{\Phi}}_t, \cdot)
    \bigg] 
    \,.
\end{multline}
In contrast, the pressureless IGR model \cite{cao2023information} is obtained from a geodesic equation of the form
\begin{multline}
    (D \iota[\bm{\Phi}_t])^* (D^2 \tilde{\psi}_\alpha[\iota(\bm{\Phi}_t)]) (D \iota[\bm{\Phi}_t] \ddot{\bm{\Phi}}_t, \cdot) \\
    =
    - (D \iota[\bm{\Phi}_t])^* \bigg[
    D^2 \tilde{\psi}_\alpha[\iota(\bm{\Phi}_t)](D^2 \iota[\bm{\Phi}_t](\dot{\bm{\Phi}}_t, \dot{\bm{\Phi}}_t), \cdot)
    +
    D^3 \tilde{\psi}_\alpha[\iota(\bm{\Phi}_t)](D \iota[\bm{\Phi}_t] \dot{\bm{\Phi}}_t, D \iota[\bm{\Phi}_t] \dot{\bm{\Phi}}_t, \cdot)
    \bigg] 
    \,.
\end{multline}
This factor of ``$1/2$'' is a crucial difference leading to qualitatively different solution profiles. 

\begin{remark}
In the unidimensional case, it has been seen that the strict energy conservation of $H^1$-type Euler--Arnold flows leads to cusp-like singularities \cite{liu2019well, pu2018weakly}. The non-conservative terms in the IGR flow mitigate these effects, maintaining global strong solutions \cite{cao2024information}. On the other hand, the strict energy conservation implied by the Levi-Civita flow mitigates shock formation by letting the derivative of the flow map grow without bound (provided the singularity remain integrable). 
\end{remark}

We can write the pressureless HRE system \eqref{eq:pressureless-HRE-abstract} in more familiar notation by observing the following:
\begin{equation}
    D^2 \tilde{\psi}_\alpha[\bm{\Phi}, \Phi']((\mathbf{U},U'), (\mathbf{V},V'))
    =
    \int_\Omega 
    \left(
    \mathbf{U} \cdot \mathbf{V}
    +
    \alpha
    \frac{U' V'}{(\Phi')^2}
    \right)
    \mathsf{d}^d \mathbf{X}_0 \,,
\end{equation}
\begin{equation}
    D^3 \tilde{\psi}_\alpha[\bm{\Phi}, \Phi']((\mathbf{U},U'), (\mathbf{V},V'), (\mathbf{W},W'))
    =
    -
    2 \alpha
    \int_\Omega
    \frac{U' V' W'}{(\Phi')^3} \mathsf{d}^d \mathbf{X}_0 \,,
\end{equation}
\begin{equation}
    D \iota[\bm{\Phi}_t]\mathbf{U}
    =
    (\mathbf{U}, \det(\nabla_{\mathbf{X}} \bm{\Phi}_t) \mathrm{tr}( \nabla_{\mathbf{X}} \bm{\Phi}_t^{-1} \nabla_{\mathbf{X}} \mathbf{U} ) ) \,,
\end{equation}
and
\begin{multline}
    D^2 \iota[\bm{\Phi}_t](\mathbf{U},\mathbf{V})
    =
    (0, \det(\nabla_{\mathbf{X}} \bm{\Phi}_t) ( \mathrm{tr}( \nabla_{\mathbf{X}} \bm{\Phi}_t^{-1} \nabla_{\mathbf{X}} \mathbf{U} ) \mathrm{tr}( \nabla_{\mathbf{X}} \bm{\Phi}_t^{-1} \nabla_{\mathbf{X}} \mathbf{V}) \\
    - \mathrm{tr}( (\nabla_{\mathbf{X}} \bm{\Phi}_t^{-1} \nabla_{\mathbf{X}} \mathbf{U}) (\nabla_{\mathbf{X}} \bm{\Phi}_t^{-1} \nabla_{\mathbf{X}} \mathbf{V})) ) ) \,.
\end{multline}
Therefore, we find that
\begin{multline} \label{eq:embedding_curvature}
    D^2 \tilde{\psi}_\alpha[\iota(\bm{\Phi}_t)](D^2 \iota[\bm{\Phi}_t](\dot{\bm{\Phi}}_t, \dot{\bm{\Phi}}_t), D \iota[\bm{\Phi}_t]\mathbf{U}) \\
    =
    \alpha \int_\Omega 
    \bigg(
    \mathrm{tr}( \nabla_{\mathbf{X}} \bm{\Phi}_t^{-1} \nabla_{\mathbf{X}} \dot{\bm{\Phi}}_t )^2  
    - \mathrm{tr}( (\nabla_{\mathbf{X}} \bm{\Phi}_t^{-1} \nabla_{\mathbf{X}} \dot{\bm{\Phi}}_t)^2 ) \bigg)
    \mathrm{tr} (\nabla_{\mathbf{X}} \bm{\Phi}_t^{-1} \nabla_{\mathbf{X}} \mathbf{U})
    \mathsf{d}^d \mathbf{X}_0 \,,
\end{multline}
and
\begin{equation} \label{eq:potential_third_deriv}
    D^3 \tilde{\psi}_\alpha[\iota(\bm{\Phi}_t)](D \iota[\bm{\Phi}_t] \dot{\bm{\Phi}}_t, D \iota[\bm{\Phi}_t] \dot{\bm{\Phi}}_t, D \iota[\bm{\Phi}_t]\mathbf{U})
    =
    -
    2 \alpha
    \int_\Omega
    \left(
    \mathrm{tr}( \nabla_{\mathbf{X}} \bm{\Phi}_t^{-1} \nabla_{\mathbf{X}} \dot{\bm{\Phi}}_t )^2
    \mathrm{tr}( \nabla_{\mathbf{X}} \bm{\Phi}_t^{-1} \nabla_{\mathbf{X}} \mathbf{U})
    \right)
    \mathsf{d}^d \mathbf{X}_0 \,.
\end{equation}
This allows us to express the abstract form of the Euler--Lagrange equations of the HRE model in Lagrangian coordinates:
\begin{multline} \label{eq:hre_geodesic_eqn}
    \int_\Omega 
    \left(
    \ddot{\bm{\Phi}}_t \cdot \mathbf{U}
    +
    \alpha 
    \mathrm{tr} (\nabla_{\mathbf{X}} \bm{\Phi}_t^{-1} \nabla_{\mathbf{X}} \ddot{\bm{\Phi}}_t)
    \mathrm{tr} (\nabla_{\mathbf{X}} \bm{\Phi}_t^{-1} \nabla_{\mathbf{X}} \mathbf{U})
    \right)
    \mathsf{d}^d \mathbf{X}_0 \\
    =
    - \alpha \int_\Omega 
    \bigg(
    \mathrm{tr}( \nabla_{\mathbf{X}} \bm{\Phi}_t^{-1} \nabla_{\mathbf{X}} \dot{\bm{\Phi}}_t )^2  
    - \mathrm{tr}( (\nabla_{\mathbf{X}} \bm{\Phi}_t^{-1} \nabla_{\mathbf{X}} \dot{\bm{\Phi}}_t)^2 ) \bigg)
    \mathrm{tr} (\nabla_{\mathbf{X}} \bm{\Phi}_t^{-1} \nabla_{\mathbf{X}} \mathbf{U})
    \mathsf{d}^d \mathbf{X}_0 \\
    +
    \alpha
    \int_\Omega
    \left(
    \mathrm{tr}( \nabla_{\mathbf{X}} \bm{\Phi}_t^{-1} \nabla_{\mathbf{X}} \dot{\bm{\Phi}}_t )^2
    \mathrm{tr}( \nabla_{\mathbf{X}} \bm{\Phi}_t^{-1} \nabla_{\mathbf{X}} \mathbf{U})
    \right)
    \mathsf{d}^d \mathbf{X}_0 \\
    =
    \alpha \int_\Omega 
    \mathrm{tr}( (\nabla_{\mathbf{X}} \bm{\Phi}_t^{-1} \nabla_{\mathbf{X}} \dot{\bm{\Phi}}_t)^2 ) 
    \mathrm{tr} (\nabla_{\mathbf{X}} \bm{\Phi}_t^{-1} \nabla_{\mathbf{X}} \mathbf{U})
    \mathsf{d}^d \mathbf{X}_0 \,,
\end{multline}
for all $\mathbf{U} \in T_{\bm{\Phi}_t} \mathrm{Diff}(\Omega)$. For comparison, the IGR equation may be written as
\begin{multline}
    \int_\Omega 
    \left(
    \ddot{\bm{\Phi}}_t \cdot \mathbf{U}
    +
    \alpha 
    \mathrm{tr} (\nabla_{\mathbf{X}} \bm{\Phi}_t^{-1} \nabla_{\mathbf{X}} \ddot{\bm{\Phi}}_t)
    \mathrm{tr} (\nabla_{\mathbf{X}} \bm{\Phi}_t^{-1} \nabla_{\mathbf{X}} \mathbf{U})
    \right)
    \mathsf{d}^d \mathbf{X}_0 \\
    =
    - \alpha \int_\Omega 
    \bigg(
    \mathrm{tr}( \nabla_{\mathbf{X}} \bm{\Phi}_t^{-1} \nabla_{\mathbf{X}} \dot{\bm{\Phi}}_t )^2  
    - \mathrm{tr}( (\nabla_{\mathbf{X}} \bm{\Phi}_t^{-1} \nabla_{\mathbf{X}} \dot{\bm{\Phi}}_t)^2 ) \bigg)
    \mathrm{tr} (\nabla_{\mathbf{X}} \bm{\Phi}_t^{-1} \nabla_{\mathbf{X}} \mathbf{U})
    \mathsf{d}^d \mathbf{X}_0 \\
    +
    2 \alpha
    \int_\Omega
    \left(
    \mathrm{tr}( \nabla_{\mathbf{X}} \bm{\Phi}_t^{-1} \nabla_{\mathbf{X}} \dot{\bm{\Phi}}_t )^2
    \mathrm{tr}( \nabla_{\mathbf{X}} \bm{\Phi}_t^{-1} \nabla_{\mathbf{X}} \mathbf{U})
    \right)
    \mathsf{d}^d \mathbf{X}_0 \\
    =
    \alpha \int_\Omega 
    \bigg(
    \mathrm{tr}( \nabla_{\mathbf{X}} \bm{\Phi}_t^{-1} \nabla_{\mathbf{X}} \dot{\bm{\Phi}}_t )^2  
    + \mathrm{tr}( (\nabla_{\mathbf{X}} \bm{\Phi}_t^{-1} \nabla_{\mathbf{X}} \dot{\bm{\Phi}}_t)^2 ) \bigg)
    \mathrm{tr} (\nabla_{\mathbf{X}} \bm{\Phi}_t^{-1} \nabla_{\mathbf{X}} \mathbf{U})
    \mathsf{d}^d \mathbf{X}_0
    \,,
\end{multline}
for all $\mathbf{U} \in T_{\bm{\Phi}_t} \mathrm{Diff}(\Omega)$ \cite{cao2023information}. 

\begin{remark}
Integration by parts never occurred in this derivation of these weak-form evolution equations. As written, this equation is valid for any boundary conditions and on any subdomain, $\mathcal{D} \subseteq \Omega$. This fact will be leveraged to derive a local conservation law for kinetic energy in Section~\ref{sec:pressureless-IGR}.
\end{remark}

\subsection{Deriving the Eulerian pressureless IGR and HRE models}\label{sec:eulerianization}

Finally, we translate to the Eulerian reference frame by letting $\mathbf{u}(\bm{\Phi}(\mathbf{X})) = \dot{\bm{\Phi}}(\mathbf{X})$ and $\rho(\mathbf{X}) = \det(\nabla_{\mathbf{X}} \bm{\Phi})^{-1}$. Because $\ddot{\bm{\Phi}} = \mathsf{D} \mathbf{u}/\mathsf{D} t = \partial_t \mathbf{u} + (\mathbf{u} \cdot \nabla_{\mathbf{x}}) \mathbf{u}$ in the Eulerian reference frame, we find that
\begin{equation}
    \mathcal{L}_\alpha(\rho) \frac{\mathsf{D} \mathbf{u}}{\mathsf{D} t} 
    \coloneq
    \rho \frac{\mathsf{D} \mathbf{u}}{\mathsf{D} t} - \alpha \nabla_{\mathbf{x}} \left( \rho \nabla_{\mathbf{x}} \cdot \frac{\mathsf{D} \mathbf{u}}{\mathsf{D} t} \right) 
    =
    \bm{F} \,,
\end{equation}
where, for pressureless HRE, 
\begin{equation}
    \bm{F} = - \alpha \nabla_{\mathbf{x}} \cdot (\rho \mathrm{tr}(\nabla_{\mathbf{x}} \mathbf{u}^2)) \,,
\end{equation}
and, for pressureless IGR, 
\begin{equation}
    \bm{F} = - \alpha \nabla_{\mathbf{x}} \cdot ( \rho(\mathrm{tr}^2(\nabla_{\mathbf{x}} \mathbf{u}) + \mathrm{tr}(\nabla_{\mathbf{x}} \mathbf{u}^2)))
\end{equation}
More generally, when we add other forces, such as pressure in Section~\ref{sec:dispersion-free-extension}, we find that $\bm{F} = -\nabla_{\mathbf{x}} \cdot \mathbb{T}$, for some matrix $\mathbb{T}$. In this more general case, using the commutation relations in Appendix~\ref{appendix:commutation_relations}, we find that
\begin{equation}
    \rho \frac{\mathsf{D} \mathbf{u}}{\mathsf{D} t}
    =
    - \nabla_{\mathbf{x}} \cdot \mathbb{\Sigma} \,,
    \quad \text{where} \quad
    \rho^{-1} \mathbb{\Sigma} - \alpha \nabla_{\mathbf{x}} \cdot ( \rho^{-1} \nabla_{\mathbf{x}} \cdot \mathbb{\Sigma}) \mathbb{I}
    =
    \mathbb{T} \,.
\end{equation}
See \cite{cao2023information} for more details on converting the IGR model from the Lagrangian to Eulerian reference frame.

Hence, we find that the pressureless IGR model becomes
\begin{equation}\label{eq:pressureless-IGR}
\begin{aligned}
    &\partial_t 
    \begin{bmatrix}
        \rho \mathbf{u} \\
        \rho
    \end{bmatrix}
    +
    \nabla_{\mathbf{x}} \cdot
    \begin{bmatrix}
        \rho \mathbf{u} \otimes \mathbf{u}
        +
        \Sigma \mathbb{I} \\
        \rho \mathbf{u}
    \end{bmatrix}
    = 
    \begin{bmatrix}
        0 \\
        0
    \end{bmatrix} \\
    &
    \rho^{-1} \Sigma - \alpha \nabla_{\mathbf{x}} \cdot( \rho^{-1} \nabla_{\mathbf{x}} \Sigma)
    =
    \alpha(\mathrm{tr}^2(\nabla_{\mathbf{x}} \mathbf{u}) + \mathrm{tr}( \nabla_{\mathbf{x}} \mathbf{u}^2)) \,,
\end{aligned}
\end{equation}
whereas the HRE flow becomes
\begin{equation}\label{eq:pressureless-HRE}
\begin{aligned}
    &\partial_t 
    \begin{bmatrix}
        \rho \mathbf{u} \\
        \rho
    \end{bmatrix}
    +
    \nabla_{\mathbf{x}} \cdot
    \begin{bmatrix}
        \rho \mathbf{u} \otimes \mathbf{u}
        +
        \Sigma \mathbb{I} \\
        \rho \mathbf{u}
    \end{bmatrix}
    = 
    \begin{bmatrix}
        0 \\
        0
    \end{bmatrix} \\
    &
    \rho^{-1} \Sigma - \alpha \nabla_{\mathbf{x}} \cdot( \rho^{-1} \nabla_{\mathbf{x}} \Sigma)
    =
    \alpha \mathrm{tr}( \nabla_{\mathbf{x}} \mathbf{u}^2) \,.
\end{aligned}
\end{equation}
The factor of ``1/2'' in the Levi-Civita connection causes the $\mathrm{tr}^2(\nabla_{\mathbf{x}} \mathbf{u})$ contribution to the entropic pressure to exactly cancel in the HRE model. Hence, we split the entropic pressure into conservative and dissipative parts: $\Sigma = \Sigma_C + \Sigma_D$, where
\begin{equation}
    \rho^{-1} \Sigma_C - \alpha \nabla_{\mathbf{x}} \cdot( \rho^{-1} \nabla_{\mathbf{x}} \Sigma_C)
    =
    \alpha \mathrm{tr}( \nabla_{\mathbf{x}} \mathbf{u}^2) \,,
    \quad \text{and} \quad
    \rho^{-1} \Sigma_D - \alpha \nabla_{\mathbf{x}} \cdot( \rho^{-1} \nabla_{\mathbf{x}} \Sigma_D)
    =
    \alpha \mathrm{tr}^2( \nabla_{\mathbf{x}} \mathbf{u}) \,.
\end{equation}
In Section~\ref{sec:full-IGR-dissipative-extension}, we will show that the dissipative component of the entropic pressure may be obtained from a sign-indefinite version of the metriplectic formalism. 

\begin{remark}
    We derived the pressureless HRE system in conservative form in the coordinates $(\rho \mathbf{u}, \rho)$ by directly finding the Euler--Lagrange equations in the Lagrangian reference frame. One could alternatively use the Lie--Poisson formalism to directly derive the model in Eulerian coordinates. The Hamiltonian and Poisson bracket are given by
    \begin{equation}
        H[\mathbf{m}, \rho]
        =
        \frac{1}{2} \int_\Omega \mathbf{m} \cdot \mathbf{u}[\mathbf{m}, \rho] \mathsf{d}^d \mathbf{x} \,,
    \end{equation}
    and
    \begin{equation}
    \begin{aligned}
        \{F, G\}
        =
        &-\int_\Omega \mathbf{m} \left[ \left(\frac{\delta F}{\delta \mathbf{m}} \cdot \nabla_{\mathbf{x}} \right) \frac{\delta G}{\delta \mathbf{m}} - \left(\frac{\delta G}{\delta \mathbf{m}} \cdot \nabla_{\mathbf{x}} \right) \frac{\delta F}{\delta \mathbf{m}} \right] \mathsf{d}^d \mathbf{x} \\
        &-\int_\Omega \rho \left[ \frac{\delta F}{\delta \mathbf{m}} \cdot \nabla_{\mathbf{x}} \frac{\delta G}{\delta \rho} - \frac{\delta G}{\delta \mathbf{m}} \cdot \nabla_{\mathbf{x}} \frac{\delta F}{\delta \rho} \right] \mathsf{d}^d \mathbf{x} \,,
    \end{aligned}
    \end{equation}
    respectively, where $\mathbf{m} = \rho \mathbf{u} - \alpha \nabla_{\mathbf{x}}( \rho \nabla_{\mathbf{x}} \cdot \mathbf{u})$. The implied evolution equations are
    \begin{equation}
    \begin{aligned}
        \partial_t \rho &= - \nabla_{\mathbf{x}} \cdot(\rho \mathbf{u}) \,, \\
        \partial_t \mathbf{m} &= - \nabla_{\mathbf{x}} \cdot \left( \mathbf{m} \otimes \mathbf{u} + \alpha \rho ( \nabla_{\mathbf{x}} \cdot \mathbf{u}) (\nabla_{\mathbf{x}} \mathbf{u})^T \right) \,.
    \end{aligned}
    \end{equation}
    See \Cref{appendix:change_of_variables} for the full details. With suitable boundary conditions and regularity assumptions, these are equivalent to Equation~\eqref{eq:pressureless-HRE}.
\end{remark}

%================================================================
\section{Kinetic energy transport in the pressureless IGR and HRE systems}\label{sec:pressureless-IGR}
The natural kinetic energy of the HRE and IGR models are given by the Hessian metric:
\begin{equation}
    K[\bm{\Phi}_t, \dot{\bm{\Phi}}_t]
    =
    \frac{1}{2}
    (\iota^* D^2 \tilde{\psi}_\alpha [\bm{\Phi}_t])(\dot{\bm{\Phi}}_t, \dot{\bm{\Phi}}_t) \,.
\end{equation}
Its evolution is obtained using the chain rule:
\begin{equation} \label{eq:kinetic_energy_law}
\begin{aligned}
    \dot{K}
    &=
    \frac{1}{2}
    \frac{\mathsf{d}}{\mathsf{d} t} \left( D^2 \tilde{\psi}_\alpha[\iota(\bm{\Phi}_t)](D \iota[\bm{\Phi}_t] \dot{\bm{\Phi}}_t, D \iota[\bm{\Phi}_t] \dot{\bm{\Phi}}_t) \right) \\
    &=
    \frac{1}{2} D^3 \tilde{\psi}_\alpha[\iota(\bm{\Phi}_t)](D \iota[\bm{\Phi}_t] \dot{\bm{\Phi}}_t, D \iota[\bm{\Phi}_t] \dot{\bm{\Phi}}_t, D \iota[\bm{\Phi}_t] \dot{\bm{\Phi}}_t) + 
    D^2 \tilde{\psi}_\alpha[\iota(\bm{\Phi}_t)](D \iota[\bm{\Phi}_t] \ddot{\bm{\Phi}}_t, D \iota[\bm{\Phi}_t] \dot{\bm{\Phi}}_t) \\
    &\hspace{5em} + D^2 \tilde{\psi}_\alpha[\iota(\bm{\Phi}_t)](D^2 \iota[\bm{\Phi}_t](\dot{\bm{\Phi}}_t, \dot{\bm{\Phi}}_t), D \iota[\bm{\Phi}_t] \dot{\bm{\Phi}}_t) 
    \,.
\end{aligned}
\end{equation}
Equations \eqref{eq:embedding_curvature} and \eqref{eq:potential_third_deriv} imply that the energy evolution in the Eulerian reference frame is given by
\begin{equation}
    \dot{K}
    =
    \int_{\Omega}
    \left( 
    \rho \frac{D \mathbf{u}}{Dt} \cdot \mathbf{u}
    +
    \alpha \rho (\nabla_{\mathbf{x}} \cdot \mathbf{u}) \nabla_{\mathbf{x}} \cdot \left( \frac{D \mathbf{u}}{Dt} \right)
    -
    \alpha \rho (\nabla_{\mathbf{x}} \cdot \mathbf{u}) \mathrm{tr}((\nabla_{\mathbf{x}} \mathbf{u})^2)
    \right)
    \mathsf{d}^d \mathbf{x}\,.
\end{equation}
Notice that $\dot{K} = 0$ for the HRE equation. On the other hand, for the IGR flow, we find that
\begin{equation}
    \dot{K}
    =
    - \frac{1}{2} D^3 \tilde{\psi}_\alpha[\iota(\bm{\Phi}_t)](D \iota[\bm{\Phi}_t] \dot{\bm{\Phi}}_t, D \iota[\bm{\Phi}_t] \dot{\bm{\Phi}}_t, D \iota[\bm{\Phi}_t] \dot{\bm{\Phi}}_t) \,.
\end{equation}
Equation~\eqref{eq:potential_third_deriv} implies that, in Eulerian coordinates, $\dot{K} = \alpha \int_\Omega \rho ( \nabla_{\mathbf{x}} \cdot \mathbf{u})^3 \mathsf{d}^d \mathbf{x}$. 

If we consider a similar expression where the kinetic energy density is integrated over an arbitrary subdomain, equations of the same form are obtained. This provides a key abstraction in deriving a local kinetic energy transport equation. That is, we let
\begin{equation}
    \tilde{\psi}_{\alpha, \mathcal{D}}[\bm{\Phi}, \Phi']
    =
    \int_{\mathcal{D}} \left( \frac{1}{2} | \bm{\Phi}(\mathbf{X}_0) |^2 -
    \alpha \log(\Phi') \right) \mathsf{d}^d \mathbf{X}_0 \,,
\end{equation}
for arbitrary $\mathcal{D} \subseteq \Omega$, and we define
\begin{equation}
    K_{\mathcal{D}}[\bm{\Phi}_t, \dot{\bm{\Phi}}_t]
    =
    \frac{1}{2}
    (\iota^* D^2 \tilde{\psi}_{\alpha,\mathcal{D}} [\bm{\Phi}_t])(\dot{\bm{\Phi}}_t, \dot{\bm{\Phi}}_t) \,.
\end{equation}
The time derivative of $K_{\mathcal{D}}$ formally the same as \eqref{eq:kinetic_energy_law} with $\tilde{\psi}_\alpha$ replaced by $\tilde{\psi}_{\alpha,\mathcal{D}}$.

\subsection{Local energy transport in pressureless HRE} \label{subsec:local-energy-transport-pressureless-HRE}

In Eulerian coordinates, the momentum equation of the pressureless HRE system can alternatively be written as
\begin{equation}
    \rho \frac{D \mathbf{u}}{Dt}
    =
    - \nabla_{\mathbf{x}} \Sigma \,,
\end{equation}
where $D \mathbf{u}/Dt = \partial_t \mathbf{u} + (\mathbf{u} \cdot \nabla_{\mathbf{x}}) \mathbf{u}$ is the material derivative. The kinetic energy density in the Eulerian reference frame is given by
\begin{equation}
    K_E
    =
    \frac{1}{2}
    \rho \left( |\mathbf{u}|^2 + \alpha (\nabla_{\mathbf{x}} \cdot \mathbf{u})^2 \right) \,.
\end{equation}
As mentioned at the beginning of this section, the evolution of kinetic energy over an arbitrary subdomain takes the same form as in Equation~\eqref{eq:kinetic_energy_law}. That is, we find that
\begin{equation}
    \dot{K}_{\mathcal{D}_t}
    =
    \frac{\mathsf{d}}{\mathsf{d}t}
    \int_{\mathcal{D}_t} K_E \mathsf{d}^d \mathbf{x}
    =
    \int_{\mathcal{D}_t}
    \left( 
    \rho \frac{D \mathbf{u}}{Dt} \cdot \mathbf{u}
    +
    \alpha \rho (\nabla_{\mathbf{x}} \cdot \mathbf{u}) \nabla_{\mathbf{x}} \cdot \left( \frac{D \mathbf{u}}{Dt} \right)
    -
    \alpha \rho (\nabla_{\mathbf{x}} \cdot \mathbf{u}) \mathrm{tr}((\nabla_{\mathbf{x}} \mathbf{u})^2)
    \right)
    \mathsf{d}^d \mathbf{x}\,,
\end{equation}
where $\mathcal{D}_t \subseteq \Omega$ is an arbitrary subdomain moving with the flow. Hence, Reynolds' transport theorem implies that
\begin{equation}
    \partial_t K_E
    +
    \nabla_{\mathbf{x}} \cdot(K_E \mathbf{u})
    =
    \rho \frac{D \mathbf{u}}{Dt} \cdot \mathbf{u}
    +
    \alpha \rho (\nabla_{\mathbf{x}} \cdot \mathbf{u}) \nabla_{\mathbf{x}} \cdot \left( \frac{D \mathbf{u}}{Dt} \right)
    -
    \alpha \rho (\nabla_{\mathbf{x}} \cdot \mathbf{u}) \mathrm{tr}((\nabla_{\mathbf{x}} \mathbf{u})^2) \,.
\end{equation}
The remaining task is to express the right-hand side in divergence form. Notice,
\begin{multline}
    \rho \frac{D \mathbf{u}}{Dt} \cdot \mathbf{u}
    +
    \alpha \rho (\nabla_{\mathbf{x}} \cdot \mathbf{u}) \nabla_{\mathbf{x}} \cdot \left( \frac{D \mathbf{u}}{Dt} \right)
    -
    \alpha (\nabla_{\mathbf{x}} \cdot \mathbf{u}) \mathrm{tr}((\nabla_{\mathbf{x}} \mathbf{u})^2) \\
    =
    - \nabla_{\mathbf{x}} \Sigma \cdot \mathbf{u}
    -
    \alpha \rho (\nabla_{\mathbf{x}} \cdot \mathbf{u}) \nabla_{\mathbf{x}} \cdot \left( \frac{\nabla_{\mathbf{x}} \Sigma}{\rho} \right)
    -
    \alpha \rho (\nabla_{\mathbf{x}} \cdot \mathbf{u}) \mathrm{tr}((\nabla_{\mathbf{x}} \mathbf{u})^2) \\
    =
    - \nabla_{\mathbf{x}} \Sigma \cdot \mathbf{u}
    -
    \alpha \rho (\nabla_{\mathbf{x}} \cdot \mathbf{u}) \nabla_{\mathbf{x}} \cdot \left( \frac{\nabla_{\mathbf{x}} \Sigma}{\rho} \right)
    -
    \rho (\nabla_{\mathbf{x}} \cdot \mathbf{u}) \left( \rho^{-1} \Sigma - \alpha \nabla_{\mathbf{x}} \cdot ( \rho^{-1} \nabla_{\mathbf{x}} \Sigma) \right) \\
    =
    - \nabla_{\mathbf{x}} \Sigma \cdot \mathbf{u} - (\nabla_{\mathbf{x}} \cdot \mathbf{u}) \Sigma
    =
    - \nabla_{\mathbf{x}} \cdot( \Sigma \mathbf{u}) 
    \,.
\end{multline}
Hence, for the pressureless HRE model, one finds that
\begin{equation} \label{eq:igr_kinetic_nrg_evo}
    \partial_t K_E + \nabla_{\mathbf{x}} \cdot( (K_E + \Sigma)\mathbf{u}) = 0 \,.
\end{equation}
Therefore, the energy flux due to the conservative component of entropic pressure takes an intuitive form, which is formally identical to the energy flux due to thermodynamic pressure. 

\subsection{Local energy transport in pressureless IGR}\label{subsec:local-energy-transport-pressureless-IGR}

In pressureless IGR, the derivation of the local kinetic energy transport is identical to the pressureless HRE case, except the right-hand side includes the additional non-conservative contribution from the entropic pressure, see ~\Cref{eq:pressureless-IGR}: 
\begin{multline}
    \rho \frac{D \mathbf{u}}{Dt} \cdot \mathbf{u}
    +
    \alpha \rho (\nabla_{\mathbf{x}} \cdot \mathbf{u}) \nabla_{\mathbf{x}} \cdot \left( \frac{D \mathbf{u}}{Dt} \right)
    -
    \alpha (\nabla_{\mathbf{x}} \cdot \mathbf{u}) \mathrm{tr}((\nabla_{\mathbf{x}} \mathbf{u})^2) \\
    =
    - \nabla_{\mathbf{x}} \Sigma \cdot \mathbf{u}
    -
    \alpha \rho (\nabla_{\mathbf{x}} \cdot \mathbf{u}) \nabla_{\mathbf{x}} \cdot \left( \frac{\nabla_{\mathbf{x}} \Sigma}{\rho} \right)
    -
    \alpha \rho (\nabla_{\mathbf{x}} \cdot \mathbf{u}) \mathrm{tr}((\nabla_{\mathbf{x}} \mathbf{u})^2) \\
    =
    - \nabla_{\mathbf{x}} \Sigma \cdot \mathbf{u}
    -
    \alpha \rho (\nabla_{\mathbf{x}} \cdot \mathbf{u}) \nabla_{\mathbf{x}} \cdot \left( \frac{\nabla_{\mathbf{x}} \Sigma}{\rho} \right)
    -
    \rho (\nabla_{\mathbf{x}} \cdot \mathbf{u}) \left( \rho^{-1} \Sigma - \alpha \nabla_{\mathbf{x}} \cdot ( \rho^{-1} \nabla_{\mathbf{x}} \Sigma) - \alpha (\nabla_{\mathbf{x}} \cdot \mathbf{u})^2 \right) \\
    =
    - \nabla_{\mathbf{x}} \Sigma \cdot \mathbf{u} - (\nabla_{\mathbf{x}} \cdot \mathbf{u}) \Sigma + \alpha \rho (\nabla_{\mathbf{x}} \cdot \mathbf{u})^3
    =
    - \nabla_{\mathbf{x}} \cdot( \Sigma \mathbf{u}) + \alpha \rho (\nabla_{\mathbf{x}} \cdot \mathbf{u})^3
    \,.
\end{multline}
Hence, we obtain the following energy transport equation:
\begin{equation} \label{eq:pressureless_IGR_kinetic_energy_transport}
    \partial_t K_E + \nabla_{\mathbf{x}} \cdot( (K_E + \Sigma)\mathbf{u}) = \alpha \rho (\nabla_{\mathbf{x}} \cdot \mathbf{u})^3 \,,
\end{equation}
where the elliptic equation for $\Sigma$ is given in ~\Cref{eq:pressureless-IGR}. This recovers the previously observed fact that total kinetic energy is not conserved in the pressureless IGR system. 

\begin{remark}
    We will find in Section~\ref{sec:full-IGR-dissipative-extension} that this non-conservative term on the right-hand side of the energy evolution law may be eliminated by introducing thermodynamics. By asserting that the non-conserved kinetic energy is converted into internal energy (through entropy generation), we may restore total energy conservation. 
\end{remark}

\subsection{The non-conservative term in IGR conserves acoustic waves}\label{subsec:acoustic-waves}
To better understand the non-conservative term in the IGR model, we now consider how the non-conservative term in the local kinetic energy transport equation impacts acoustic waves in the full IGR system. To build intuition, consider a more general kinetic energy evolution law of the form 
\begin{equation}
    \dot{K}_n
    =
    \alpha (-1)^{n+1} \int_\Omega \rho (\nabla_{\mathbf{x}} \cdot \mathbf{u})^n \mathsf{d}^d \mathbf{x}\,.
\end{equation}
This corresponds to:
\begin{itemize}
    \item a bulk (or volume) viscosity when $n=2$, and bulk hyper-viscosity when $n = 2m$, $m > 1$;
    \item the fluctuating energy in IGR when $n=3$, and similar behavior when $n=2m+1$, $m > 1$.
\end{itemize}
For the sake of argument, suppose that we have added thermodynamic pressure to the HRE and IGR models, so that the medium supports acoustic waves. We will discuss the addition of thermodynamic pressure to the HRE model in detail in section \Cref{sec:dispersion-free-extension}. Consider linear acoustic waves (i.e.,\! small amplitude plane waves on an infinite domain):
\begin{equation}
    \mathbf{u} = \mathbf{u}' \cos(\mathbf{k} \cdot \mathbf{x}- \omega t) \,,
    \quad \text{and} \quad
    \rho = \rho_0 + \rho' \cos(\mathbf{k} \cdot \mathbf{x}- \omega t) \,,
\end{equation}
where
\begin{equation}
    \omega = c_s | \mathbf{\mathbf{k}} | \,,
    \quad \text{and} \quad
    \mathbf{u}'
    =
    \frac{c_s \rho'}{\rho_0} \frac{\mathbf{k}}{|\mathbf{k}|} \,,
\end{equation}
where $c_s$ is the sound speed. From equation \ref{eq:igr_kinetic_nrg_evo}, we see that the kinetic energy of acoustic waves, averaged over the wave period, evolves like
\begin{equation} 
    \langle \dot{K}_n \rangle
    =
    (-1)^{n+1} \left( \frac{\omega \rho'}{\rho_0} \right)^n
    \int_0^{2\pi}
    (\rho_0 + \rho' \cos(\theta)) \sin^n(\theta) \mathsf{d} \theta 
    =
    \begin{cases}
        0 \,, & n \text{ odd} \,, \\
        (-1)^{n+1} \dfrac{\omega^n \rho'^n }{\rho_0^{n-1}} \dfrac{2 \pi (n-1)!!}{n!!} \,, & n \text{ even} \,. 
    \end{cases}
\end{equation}
Despite the presence of a non-conservative force in the pressureless IGR model, the energy of acoustic waves is nonetheless conserved on average. This is the utility of the sign-indefinite character of the non-conservative heat source/sink term in IGR. Wave steepening is mitigated because kinetic energy is dissipated in compressive flow regimes, while acoustic wave energy is conserved on average. On the other hand, a positive semi-definite non-conservative force, like bulk viscosity, monotonically depletes the energy of acoustic waves. 

%================================================================
\section{A dispersion-free, thermodynamic extension of the HRE model}\label{sec:dispersion-free-extension}

Now, we wish to extend the pressureless HRE model to include thermodynamic pressure. If we directly extend the pressureless HRE formalism by adding a potential energy to the Hamiltonian, we encounter a serious issue. All forces arising from a potential in Newton's law on a curved manifold have their gradient computed with respect to the metric. More concretely, if we attempt to extend the Hamiltonian formalism to include barotropic pressure by defining the Hamiltonian
\begin{equation}
    H[\mathbf{u},\rho]
    =
    \int_\Omega \left[ \frac{1}{2} \rho \left( |\mathbf{u}|^2 + \alpha (\nabla_{\mathbf{x}} \cdot \mathbf{u})^2 \right) + \rho \varepsilon(\rho) \right] \mathsf{d}^d \mathbf{x} \,,
\end{equation}
we find that the pressure appearing in the momentum equation is not $\nabla_{\mathbf{x}} p$ where $p(\rho) = \rho^2 \varepsilon'(\rho)$, but rather
\begin{equation}
    \rho^{-1} p - \alpha \nabla_{\mathbf{x}} \cdot( \rho^{-1} \nabla_{\mathbf{x}} p)
    =
    \rho \varepsilon'(\rho) \,,
\end{equation}
the exact same elliptic operator defining the entropic pressure. Although this model formally recovers the desired pressure as $\alpha \to 0$, it induces a spurious dispersion relation in acoustic waves, leading to unphysical behavior. In the unidimensional, barotropic setting, it has been found that adding density gradient dependence to the potential energy can correct the spurious dispersion relation \cite{guelmame2022hamiltonian}. This yields a model that is similar to a Korteweg-type fluid, with the density dependent potential resembling a capillary energy. However, the capillary energy is added purely to correct the spurious dispersion relation, not to add surface tension at fluid interfaces. The Hamiltonian structure of Korteweg-type fluids is well understood \cite{suzuki2020generic}, and we leverage this prior work in our analysis. In the following discussion, we obtain a multidimensional generalization of \cite{guelmame2022hamiltonian} with advected entropy.

Our proposed Hamiltonian regularized Euler (HRE) system is most easily stated using the Lie--Poisson Hamiltonian formalism \cite{morrison1998hamiltonian}. See \Cref{appendix:change_of_variables} for an account of how this Hamiltonian formalism in noncanonical coordinates is obtained from the canonical Hamiltonian formalism in Lagrangian coordinates. In entropy coordinates, $(\mathbf{m},\rho, \sigma)$, the model is prescribed by the bracket
\begin{equation} \label{eq:lie_poisson_bracket}
\begin{aligned}
    \{F, G\}
    =
    &-\int_\Omega \mathbf{m} \left[ \left(\frac{\delta F}{\delta \mathbf{m}} \cdot \nabla_{\mathbf{x}} \right) \frac{\delta G}{\delta \mathbf{m}} - \left(\frac{\delta G}{\delta \mathbf{m}} \cdot \nabla_{\mathbf{x}} \right) \frac{\delta F}{\delta \mathbf{m}} \right] \mathsf{d}^d \mathbf{x} \\
    & \quad -\int_\Omega \rho \left[ \frac{\delta F}{\delta \mathbf{m}} \cdot \nabla_{\mathbf{x}} \frac{\delta G}{\delta \rho} - \frac{\delta G}{\delta \mathbf{m}} \cdot \nabla_{\mathbf{x}} \frac{\delta F}{\delta \rho} \right] \mathsf{d}^d \mathbf{x} \\
    &\quad -\int_\Omega \sigma \left[ \frac{\delta F}{\delta \mathbf{m}} \cdot \nabla_{\mathbf{x}} \frac{\delta G}{\delta \sigma} - \frac{\delta G}{\delta \mathbf{m}} \cdot \nabla_{\mathbf{x}} \frac{\delta F}{\delta \sigma} \right] \mathsf{d}^d \mathbf{x} \,,
\end{aligned}
\end{equation}
and the Hamiltonian
\begin{equation}
    H[\mathbf{m},\rho, \sigma]
    =
    \int_\Omega 
    \left(
    \frac{1}{2}
    \mathbf{m} \cdot \mathbf{u}[\mathbf{m},\rho]
    +
    \rho
    \varepsilon \left(\rho, \frac{\sigma}{\rho}, \nabla_{\mathbf{x}} \rho \right)
    \right)
    \mathsf{d}^d \mathbf{x} \,,
\end{equation}
where the fluid velocity $\mathbf{u}$ is implicitly prescribed by an elliptic equation: 
\begin{equation}
    \mathbf{m} = \mathcal{L}_\alpha(\rho) \mathbf{u} = \rho \mathbf{u} - \alpha \nabla_{\mathbf{x}} (\rho \nabla_{\mathbf{x}} \cdot \mathbf{u}) \,.    
\end{equation}
To preserve Galilean invariance, we require that $\varepsilon$ depend on $\nabla_{\mathbf{x}} \rho$ only through $|\nabla_{\mathbf{x}} \rho|^2$, which, by the chain rule, yields $\varepsilon_{\nabla_{\mathbf{x}} \rho} = 2 \varepsilon_{|\nabla_{\mathbf{x}} \rho|^2} \nabla_{\mathbf{x}} \rho$. The evolution equations of this system are
\begin{equation} \label{eq:hre_entropy_evolution_equations}
\begin{aligned}
    \partial_t \rho &= - \nabla_{\mathbf{x}} \cdot(\rho \mathbf{u}) \,, \\
    \partial_t \mathbf{m} &= - \nabla_{\mathbf{x}} \cdot \left( \mathbf{m} \otimes \mathbf{u} + \alpha \rho ( \nabla_{\mathbf{x}} \cdot \mathbf{u}) (\nabla_{\mathbf{x}} \mathbf{u})^T + (p - \rho (\nabla_{\mathbf{x}} \cdot \bm{\xi})) \mathbb{I} + \nabla_{\mathbf{x}} \rho \otimes \bm{\xi} \right) \\
    \partial_t \sigma &= - \nabla_{\mathbf{x}} \cdot \left( \sigma \mathbf{u} \right) \,,
\end{aligned}
\end{equation}
where $p = \rho^2 \varepsilon_\rho$, and $\bm{\xi} = \rho \varepsilon_{\nabla_{\mathbf{x}} \rho}$. In the following sections, we analyze the linear waves arising as infinitesimal perturbations from equilibrium. The Lie--Poisson form in the coordinates $(\mathbf{m}, \rho, \sigma)$ is convenient, as it facilitates the use of Hamiltonian perturbation theory, see appendix \ref{appendix:energy_casimir}. 

\begin{remark}
    The reader should be cautious here when interpreting the notation used in this abstract form of the model. For a Hamiltonian of the form
    \begin{equation}
        H[\mathbf{m},\rho, \sigma]
        =
        \int_\Omega 
        \left(
        \frac{1}{2}
        \mathbf{m} \cdot \mathbf{u}[\mathbf{m},\rho]
        +
        \rho
        \left(
        \varepsilon \left(\rho, \frac{\sigma}{\rho}\right)
        +
        \frac{1}{2} \kappa \left(\rho, \frac{\sigma}{\rho}\right) |\nabla_{\mathbf{x}} \rho |^2
        \right)
        \right)
        \mathsf{d}^d \mathbf{x} \,,
    \end{equation}
    which has separated the capillary energy from the rest of the potential energy, we find that
    \begin{equation}
        p
        =
        \rho^2 \left( \varepsilon_\rho + \frac{\kappa_\rho}{2} | \nabla_\bx \rho |^2 \right) \,,
        \quad \text{and} \quad
        \bm{\xi}
        =
        \rho \kappa \nabla_\bx \rho \,.
    \end{equation}
    That is, if the coefficient for capillary energy depends on density, then the pressure receives an additional contribution from the capillary force. 
\end{remark}

\begin{remark}
It may not be immediately obvious that this system generalizes the pressureless HRE model, since we have expressed the model in an alternative system of coordinates. We will show in Section~\ref{sec:HRE-conservative-variables} that this model may be written in locally conservative form in the variables $(\rho \mathbf{u}, \rho, E)$, where $E$ is the total energy. 
\end{remark}

\subsection{Constant equilibria and Galilean invariance}\label{subsec:constant-equilibria}

We now derive a linear dispersion relation for the HRE model. We are interested in equilibria of the form $(\mathbf{u}, \rho, s) = (\mathbf{u}_0, \rho_0, s_0)$, where each of these fields is constant in space and time. This type of equilibrium is singled out, because solutions to the Riemann problem---which connect left and right states across a discontinuity---arise as nonlinear continuations of the linear characteristics of the transport equation \cite{godlewski2013numerical}. Therefore, for wave-fronts to propagate at the correct speed without dispersion, it is necessary that these linear waves have the correct dispersion relation. Since the regularization terms, which depend on velocity and density gradients, are weak away from sharp interfaces, the dominant requirement for physical fidelity is that the linear dispersion relation be matched correctly. This line of thought guides the design of the Hamiltonian regularization, and motivates our interest in linear waves. 

If our model is Galilean invariant, then we may reduce the analysis to static equilibria: $(\mathbf{u}, \rho, s) = (\mathbf{0}, \rho_0, s_0)$. This simplification is not only convenient, but Galilean invariance is fundamentally important for physical realism in fluid models. We only demonstrate invariance with respect to Galilean boosts, as this transformation is relevant to our discussion: invariance with respect to the remaining Galilean symmetries proceed similarly. To begin, let $\mathbf{V} \in \mathbb{R}^d$, and define 
\begin{equation}
    \mathbf{x}' = \mathbf{x} - \mathbf{V} t \,,
    \quad 
    t' = t \,.
\end{equation}
Then the mass and entropy density and velocity fields in the moving frame shifted by $\mathbf{V}$ are given by
\begin{equation}
    \rho'(\mathbf{x}',t')
    =
    \rho(\mathbf{x},t) \,,
    \quad
    \sigma'(\mathbf{x}',t')
    =
    \sigma(\mathbf{x},t) \,,
    \quad \text{and} \quad
    \mathbf{u}'(\mathbf{x}',t')
    =
    \mathbf{u}(\mathbf{x},t) - \mathbf{V} \,.
\end{equation}
If we let $\mathbf{m}'(\mathbf{x}',t') = \mathcal{L}_\alpha(\rho'(\mathbf{x}',t')) \mathbf{u}'(\mathbf{x}',t')$, then we find that
\begin{equation}
    \mathcal{L}_\alpha(\rho)(\mathbf{u}-\mathbf{V})
    =
    \mathcal{L}_\alpha(\rho)\mathbf{u} - \rho \mathbf{V} \,,
\end{equation}
because $\mathbf{V}$ is constant, so that $\mathbf{m}'(\mathbf{x}',t') = \mathbf{m}(\mathbf{x},t) - \rho(\mathbf{x},t) \mathbf{V}$. Hence, a boost acts by
\begin{equation}
    (\rho, \sigma, \mathbf{m})
    \mapsto
    (\rho, \sigma, \mathbf{m} - \rho \mathbf{V}) \,,
    \quad
    (\mathbf{x},t) 
    \mapsto
    (\mathbf{x} - \mathbf{V} t, t) \,.
\end{equation}
Letting $F'[\mathbf{m}',\rho',\sigma'] = F[\mathbf{m}, \rho, \sigma]$ we find
\begin{equation}
    \frac{\delta F'}{\delta \mathbf{m}'}
    =
    \frac{\delta F}{\delta \mathbf{m}} \,, 
    \quad
    \frac{\delta F'}{\delta \rho'}
    =
    \frac{\delta F}{\delta \rho}
    + 
    \mathbf{V} \cdot \frac{\delta F}{\delta \mathbf{m}} \,,
    \quad \text{and} \quad
    \frac{\delta F'}{\delta \sigma'}
    =
    \frac{\delta F}{\delta \sigma} \,.
\end{equation}
Moreover, $\mathsf{d}^d \mathbf{\mathbf{x}}' = \mathsf{d}^d \mathbf{x}$ and $\nabla_{\mathbf{x}'} = \nabla_{\mathbf{x}}$. It follows that the Poisson bracket is invariant under the boost. The kinetic energy transforms as
\begin{equation}
\begin{aligned}
    \frac{1}{2} \int_\Omega \mathbf{u} \cdot &\mathbf{m} \mathsf{d}^d \mathbf{x} 
    = \frac{1}{2} \int_{\Omega'} (\mathbf{u}' + \mathbf{V}) \cdot (\mathbf{m}' + \rho' \mathbf{V}) \mathsf{d}^d \mathbf{x}' \\
    &= \frac{1}{2} \int_{\Omega'} \mathbf{m}' \cdot \mathbf{u}' \mathsf{d}^d \mathbf{x}' 
        + \frac{1}{2} \mathbf{V} \cdot \int_\Omega \rho' \mathbf{u}' \mathsf{d}^d \mathbf{x} 
        + \frac{1}{2} \mathbf{V} \cdot \int_\Omega \mathbf{m}' \mathsf{d}^d \mathbf{x}
        + \frac{1}{2} \left( \int_\Omega \rho' \mathsf{d}^d \mathbf{x} \right) | \mathbf{V} |^2 \,. 
\end{aligned}
\end{equation}
Assuming periodic, decaying, or no-flux boundary conditions, we find that
\begin{equation}
    \int_\Omega \mathbf{m} \mathsf{d}^d \mathbf{x}
    =
    \int_\Omega (\rho \mathbf{u} - \alpha \nabla_{\mathbf{x}}( \rho \nabla_{\mathbf{x}} \cdot \mathbf{u})) \mathsf{d}^d \mathbf{x}
    =
    \int_\Omega \rho \mathbf{u} \mathsf{d}^d \mathbf{x} \,.
\end{equation}
Hence,
\begin{equation}
    H[\mathbf{m},\rho,\sigma]
    =
    H[\mathbf{m}',\rho',\sigma']
    +
    \mathbf{V} \cdot P'[\rho', \mathbf{u}']
    +
    \frac{1}{2} M | \mathbf{V}|^2 \,,
\end{equation}
where
\begin{equation}
    P'
    =
    \int_{\Omega'} \rho' \mathbf{u}' \mathsf{d}^d \mathbf{x}'
    =
    \int_{\Omega'} \mathbf{m}' \mathsf{d}^d \mathbf{x}' \,.
\end{equation}
Total mass is a Casimir invariant, while total momentum is the generator for Galilean boosts. Hence, we find that the evolution equations remain invariant under constant boosts. 

\begin{remark}
    In order to preserve rotational invariance, we require that the potential energy depends on $\nabla_{\mathbf{x}} \rho$ only through $| \nabla_{\mathbf{x}} \rho|^2$. Thus, symmetry considerations also constrain the permissible set of Hamiltonian regularizations of the potential energy. 
\end{remark}

\subsection{The augmented Hamiltonian for the energy Casimir method}\label{subsec:augmented-Hamiltonian-energy-Casimir}

Having established invariance with respect to Galilean boosts, we now consider the linear stability of static equilibria: $(\mathbf{m}, \rho, \sigma) = (\mathbf{0}, \rho_0, \sigma_0)$. To do so, we use the energy--Casimir method, see \Cref{appendix:energy_casimir}. We first verify that this is, in fact, an equilibrium of the HRE system. This is manifestly the case, as the right-hand side of equation \eqref{eq:hre_entropy_evolution_equations} identically vanishes when $(\mathbf{m}, \rho, \sigma) = (\mathbf{0}, \rho_0, \sigma_0)$. Next, note that the derivatives of the Hamiltonian are given by
\begin{equation}
    \frac{\delta H}{\delta \mathbf{m}}
    =
    \mathbf{u} \,,
    \quad
    \frac{\delta H}{\delta \rho}
    =
    - \frac{1}{2} \left( |\mathbf{u}|^2 + \alpha (\nabla_{\mathbf{x}} \cdot \mathbf{u})^2 \right)
    + g - \nabla_{\mathbf{x}} \cdot \bm{\xi}  \,,
    \quad \text{and} \quad
    \frac{\delta H}{\delta \sigma} = \vartheta \,,
\end{equation}
where
\begin{equation}
    p
    =
    \rho^2 \varepsilon_\rho \,,
    \quad
    \vartheta
    =
    \varepsilon_s \,, 
    \quad
    g
    =
    \varepsilon + \frac{p}{\rho} - s \vartheta \,,
    \quad \text{and} \quad
    \bm{\xi}
    =
    \rho \varepsilon_{\nabla_{\mathbf{x}} \rho} \,,
\end{equation}
which are, respectively, the pressure, temperature, specific Gibbs free energy, and an auxiliary vector used to define the capillary stress. The gradient of the Hamiltonian does not identically vanish for static equilibria, so we seek a Casimir invariant, $\mathcal{C}$, such that $D \mathcal{H}(0, \rho_0, \sigma_0) = 0$, where $\mathcal{H} = H + \mathcal{C}$. Generic functionals of the form
\begin{equation}
    \mathcal{C}
    =
    \int_\Omega \rho f(\sigma/\rho) \mathsf{d}^d \mathbf{x} 
\end{equation}
are Casimir invariants of the Lie--Poisson bracket in Equation~\eqref{eq:lie_poisson_bracket}. We previously assumed that the potential energy depends on the gradient of $\rho$ only through $|\nabla_{\mathbf{x}} \rho|^2$. Hence, $\varepsilon_{\nabla_{\mathbf{x}} \rho} \propto \nabla_{\mathbf{x}} \rho = 0$, for constant $\rho$. If we let
\begin{equation}
    \mathcal{C}
    =
    -\int_\Omega \rho \left( g_0 + \vartheta_0 \frac{\sigma}{\rho} \right) \mathsf{d}^d \mathbf{x} \,,
\end{equation}
where $g_0 = g(\rho_0, \sigma_0)$ and $\vartheta_0 = \vartheta(\rho_0, \sigma_0)$, then
\begin{equation}
    \frac{\delta (H + \mathcal{C})}{\delta \rho}(0, \rho_0, \sigma_0)
    =
    \frac{\delta (H + \mathcal{C})}{\delta \sigma}(0, \rho_0, \sigma_0)
    =
    0 \,.
\end{equation}
Hence, if we let
\begin{equation}
    \mathcal{H}
    =
    H
    -
    \int_\Omega \rho \left( g_0 + \vartheta_0 \frac{\sigma}{\rho} \right) \mathsf{d}^d \mathbf{x} \,,
\end{equation}
then $D\mathcal{H}(0, \rho_0, \sigma_0) = 0$. Hence, it follows from appendix \ref{appendix:energy_casimir} that we may analyze the linear stability of static equilibria by studying the spectrum of the operator $J(z_0) D^2 \mathcal{H}(z_0)$, where $z_0 = (\mathbf{0}, \rho_0, \sigma_0)$ and $J(z_0)$ is the operator applying the Poisson bracket.

\subsection{Computing the dispersion relation}\label{subsec:dispersion-relation}

Next, we compute the linear operator $(J D^2 \mathcal{H})(z_0)$, where $z_0 = (\mathbf{0}, \rho_0, \sigma_0)$. First, we find the second variation of the augmented Hamiltonian. The Casimir we used to shift the Hamiltonian is linear in $\rho$ and $\sigma$, so its second variation vanishes. Moreover, we find that
\begin{equation}
    D^2 H(z_0)
    =
    \begin{pmatrix}
        \mathcal{L}_\alpha(\rho_0)^{-1} & 0 & 0 \\
        0 & \mathcal{V}_{\rho \rho} & \mathcal{V}_{\rho \sigma} \\
        0 & \mathcal{V}_{\sigma \rho} & \mathcal{V}_{\sigma \sigma} 
    \end{pmatrix} 
    \implies
    D^2 H(z_0) \delta z
    =
    \begin{pmatrix}
        \mathcal{L}_\alpha(\rho_0)^{-1} \delta \mathbf{m} \\
        \mathcal{V}_{\rho\rho} \delta \rho + \mathcal{V}_{\rho \sigma} \delta \sigma \\
        \mathcal{V}_{\sigma\rho} \delta \rho + \mathcal{V}_{\sigma \sigma} \delta \sigma 
    \end{pmatrix} \,.
\end{equation}
where $\mathcal{V} = \int \rho \varepsilon \mathsf{d}^d \mathbf{x}$ is the potential energy, and we used the fact that $\mathbf{m}_0 = 0$. Further, we know that $\mathcal{V}_{\rho \sigma} = \mathcal{V}_{\sigma \rho}$. Hence, we find that infinitesimal variations evolve as
\begin{equation}
    \delta z_t
    =
    (J D^2 \mathcal{H})(z_0) \delta z
    \quad \text{which implies} \quad
\begin{cases}
    \delta \rho_t
    =
    - \rho_0 \nabla_{\mathbf{x}} \cdot ( \mathcal{L}_\alpha(\rho_0)^{-1} \delta \mathbf{m}) \,, & \\
    \delta m_t = - \rho_0 \nabla_{\mathbf{x}}( \mathcal{V}_{\rho\rho} \delta \rho + \mathcal{V}_{\rho \sigma} \delta \sigma) \,, & \\
    \delta \sigma_t = - \sigma_0 \nabla_{\mathbf{x}} \cdot ( \mathcal{L}_\alpha(\rho_0)^{-1} \delta \mathbf{m}) \,. & 
\end{cases}
\end{equation}

As is standard, we compute the dispersion relation in Fourier space. We find that
\begin{equation}
    \mathcal{F}[\mathcal{L}_\alpha(\rho_0)](\mathbf{k})
    \coloneq
    \hat{\mathcal{L}}_0
    =
    \rho_0(\mathbb{I} + \alpha \mathbf{k} \otimes \mathbf{k})
    \implies
    \mathcal{F}[\mathcal{L}_\alpha(\rho_0)^{-1}](\mathbf{k})
    =
    \hat{\mathcal{L}}_0^{-1}
    =
    \frac{1}{\rho_0} \left( \mathbb{I} - \frac{\alpha \mathbf{k} \otimes \mathbf{k}}{1 + \alpha | \mathbf{k}|^2} \right) \,,
\end{equation}
using the Sherman-Morrison formula \cite{ShMo1950}. Recalling that $\varepsilon_{\nabla_{\mathbf{x}} \rho} = 2 \varepsilon_{|\nabla_{\mathbf{x}} \rho|^2} \nabla_{\mathbf{x}} \rho$, for constant background, mixed partial derivatives of the internal energy with respect to $\sigma$ and $\nabla_{\mathbf{x}} \rho$ vanish. The second variation of the potential with respect to $(\rho,s)$ is
\begin{equation}
\begin{cases}
    \tilde{\mathcal{V}}_{\rho\rho}
    =
    2 \varepsilon_\rho(\rho_0, s_0) + \rho_0 \varepsilon_{\rho\rho}(\rho_0, s_0) - \nabla_{\mathbf{x}} \cdot( 2 \varepsilon_{|\nabla_{\mathbf{x}} \rho|^2}(\rho_0, s_0) \nabla_{\mathbf{x}} (\cdot) ) \,, & \\
    \tilde{\mathcal{V}}_{\rho s} 
    =
    \varepsilon_s(\rho_0, s_0) + \rho_0 \varepsilon_{\rho s}(\rho_0, s_0) \,, & \\
    \tilde{\mathcal{V}}_{ss} 
    =
    \rho_0 \varepsilon_{ss}(\rho_0, s_0) \,, &
\end{cases}
\end{equation}
where $\tilde{\mathcal{V}}[\rho, s] = \mathcal{V}[\rho, \sigma]$. Since $\delta s = (\delta \sigma - s \delta \rho)/\rho$, we find that
\begin{equation}
\begin{cases}
    \mathcal{V}_{\rho\rho}
    =
    \tilde{\mathcal{V}}_{\rho \rho} - 2 \rho_0^{-1} s_0 \tilde{\mathcal{V}}_{\rho s} + \rho_0^{-2} s_0^2 \tilde{\mathcal{V}}_{ss} \,, & \\
    \mathcal{V}_{\rho \sigma} 
    =
    \rho_0^{-1} \tilde{\mathcal{V}}_{\rho s} - \rho_0^{-2} s_0 \tilde{\mathcal{V}}_{ss} \,, & \\
    \mathcal{V}_{\sigma\sigma}
    =
    \rho_0^{-2} \tilde{\mathcal{V}}_{ss} \,. &
\end{cases}
\end{equation}

If we assume that $\delta z(x,t) = \hat{z} \exp(i \mathbf{k} \cdot \mathbf{x} - i \omega t)$, then we find that the perturbations take the form
\begin{equation}
\begin{cases}
    \omega \hat{\rho} = \rho_0 \mathbf{k} \cdot \hat{\mathcal{L}}_0^{-1} \hat{\mathbf{m}} \,, & \\
    \omega \hat{\mathbf{m}} = \rho_0 \mathbf{k} ( \mathcal{V}_{\rho \rho} \hat{\rho} + \mathcal{V}_{\rho \sigma} \hat{\sigma} ) \,, & \\
    \omega \hat{\sigma} = \sigma_0 \mathbf{k} \cdot \hat{\mathcal{L}}_0^{-1} \hat{\mathbf{m}} \,. &
\end{cases}
\end{equation}
If $\omega \neq 0$, we see that $\hat{\sigma} = s_0 \hat{\rho}$. This allows us to eliminate the equations for $\hat{\rho}$ and $\hat{\sigma}$, leaving
\begin{equation}
    \omega^2 \hat{\mathbf{m}}
    =
    \rho_0 \mathbf{k} \left( \mathcal{V}_{\rho \rho} + s_0 \mathcal{V}_{\rho \sigma} \right) \omega \hat{\rho}
    =
    \rho_0^2 \mathbf{k} \left( \mathcal{V}_{\rho \rho} + s_0 \mathcal{V}_{\rho \sigma} \right) \mathbf{k} \cdot \hat{\mathcal{L}}_0^{-1} \hat{\mathbf{m}} \,.
\end{equation}
Because $c_s^2 = \rho (\rho \varepsilon)_{\rho\rho}$, we find that
\begin{equation}
    \mathcal{V}_{\rho \rho} + s_0 \mathcal{V}_{\rho \sigma}
    =
    \tilde{\mathcal{V}}_{\rho \rho}
    =
    \rho_0^{-1} c_s^2(\rho_0, s_0)
    -
    \nabla_{\mathbf{x}} \cdot( 2 \varepsilon_{|\nabla_{\mathbf{x}} \rho|^2}(\rho_0,s_0) \nabla_{\mathbf{x}} (\cdot) ) \,,
\end{equation}
or, in Fourier space,
\begin{equation}
    \mathcal{V}_{\rho \rho} + s_0 \mathcal{V}_{\rho \sigma}
    =
    \rho_0^{-1} c_s^2(\rho_0, s_0) + 2 \varepsilon_{|\nabla_{\mathbf{x}} \rho|^2}(\rho_0,s_0) |\mathbf{k}|^2 \,.
\end{equation}
So, we have
\begin{equation}
    \left( 
    \omega^2 \mathbb{I} - \frac{c_s^2 + 2 \rho_0 \varepsilon_{|\nabla_{\mathbf{x}} \rho|^2} |\mathbf{k}|^2}{1 + \alpha |\mathbf{k}|^2} \mathbf{k} \otimes \mathbf{k}
    \right)
    \hat{\mathbf{m}}
    =
    0 \,.
\end{equation}
For $\hat{\mathbf{m}} \parallel \mathbf{k}$, we have
\begin{equation}
    \omega
    =
    \pm | \mathbf{k}| 
    \sqrt{\frac{c_s^2 + 2 \rho_0 \varepsilon_{|\nabla_{\mathbf{x}} \rho|^2}}{1 + \alpha |\mathbf{k}|^2}} \,.
\end{equation}
This eigenspace has multiplicity two. For $\hat{\mathbf{m}} \perp \mathbf{k}$, we see that $\omega = 0$. Since there are $d-1$ linearly independent vectors perpendicular to $\hat{\mathbf{k}}$, we find that this eigenspace has multiplicity $d-1$. Finally, if $\hat{\mathbf{m}} = 0$, 
\begin{equation}
     \mathcal{V}_{\rho \rho} \hat{\rho} + \mathcal{V}_{\rho \sigma} \hat{\sigma}
     =
     0 
     \quad \text{which yields} \quad
     \hat{\sigma}
     =
     - \mathcal{V}_{\rho \sigma}^{-1} \mathcal{V}_{\rho \rho} \hat{\rho} \,,
\end{equation}
since $\mathcal{V}_{\rho \sigma}$ is simply a scalar. Hence, we see that, once again, $\omega = 0$, and the eigenspace is one-dimensional. In summary, we have
\begin{itemize}
    \item for $\hat{\mathbf{m}} \parallel \mathbf{k}$, and $\hat{\rho}, \hat{\sigma} \neq 0$,
    \begin{equation}
        \omega
        =
        \pm | \mathbf{k}| 
        \sqrt{\frac{c_s^2 + 2 \rho_0 \varepsilon_{|\nabla_{\mathbf{x}} \rho|^2} |\mathbf{k}|^2}{1 + \alpha |\mathbf{k}|^2}} \,;
    \end{equation}
    \item for $\hat{\mathbf{m}} \perp \mathbf{k}$ and $\hat{\rho} = \hat{\sigma} = 0$, we have $\omega = 0$;
    \item for $\hat{\mathbf{m}} = 0$ and $\mathcal{V}_{\rho \rho} \hat{\rho} + \mathcal{V}_{\rho \sigma} \hat{\sigma} = 0$, we again have $\omega = 0$. 
\end{itemize}
This accounts for the $d+2$ eigenpairs we expect for each wavenumber, $\mathbf{k}$. Finally, we can boost this into any moving reference frame using Galilean invariance and find that the Doppler shifted frequencies are $\omega_{boost}(\mathbf{k}) = \mathbf{k} \cdot \mathbf{u}_0 + \omega(\mathbf{k})$, where $\mathbf{u}_0$ is the velocity of the boosted frame.

\subsection{Eliminating nonphysical dispersion via frequency matching}\label{subsec:frequency-matching}

The $d$ eigenvalues associated with the so-called transverse and thermodynamic waves, with $\omega = \mathbf{u}_0 \cdot \mathbf{k}$, have group and phase velocity both equal to the velocity of the fluid, $\mathbf{u}_0$. This matches standard compressible Euler. However, longitudinal acoustic waves are dispersive:
\begin{equation}
    \omega_{sound}(\mathbf{k})
    =
    \mathbf{k} \cdot \mathbf{u}_0
    \pm
    |\mathbf{k}| \sqrt{\frac{c_s^2 + 2 \rho_0 \varepsilon_{|\nabla_{\mathbf{x}} \rho|^2} |\mathbf{k}|^2}{1 + \alpha |\mathbf{k}|^2}} \,.
\end{equation}
This is the reason for introducing the capillary energy to the Hamiltonian regularization. As mentioned at the beginning of this section, without the capillary energy, we obtain a non-physical dispersion relation for acoustic waves: that is, if $\varepsilon_{|\nabla_{\mathbf{x}} \rho|^2} = 0$, we find that
\begin{equation}
    \omega_{sound}(\mathbf{k})
    =
    \mathbf{k} \cdot \mathbf{u}_0
    \pm
    |\mathbf{k}| \sqrt{\frac{c_s^2}{1 + \alpha |\mathbf{k}|^2}} \,.
\end{equation}
However, if we let $\rho \varepsilon_{|\nabla_{\mathbf{x}} \rho|^2} = \frac{\alpha}{2} c_s^2$, then we instead find that
\begin{equation}
    \omega_{sound}(\mathbf{k})
    =
    \mathbf{k} \cdot \mathbf{u}_0
    \pm
    |\mathbf{k}| c_s \,,
\end{equation}
which is precisely the dispersion relation for acoustic waves in the compressible Euler system. 

Hence, a non-dispersive Hamiltonian regularization of the compressible Euler equations is prescribed as a Lie--Poisson system with bracket \eqref{eq:lie_poisson_bracket} and Hamiltonian
\begin{equation}
\begin{aligned}
    H[\mathbf{m},\rho, \sigma]
    &=
    \int_\Omega 
    \left[
    \frac{1}{2}
    \mathbf{m} \cdot \mathbf{u}[\mathbf{m},\rho]
    +
    \rho
    \varepsilon \left(\rho, \frac{\sigma}{\rho} \right)
    +
    \frac{\alpha c_s^2(\rho, \sigma/\rho)}{2\rho} | \nabla_{\mathbf{x}} \rho |^2
    \right]
    \mathsf{d}^d \mathbf{x} \\
    &= 
    \int_\Omega 
    \left[
    \frac{1}{2}
    \rho \left( |\mathbf{u}|^2 + \alpha (\nabla_{\mathbf{x}} \cdot \mathbf{u})^2 \right)
    +
    \rho
    \varepsilon \left(\rho, s \right)
    +
    \frac{\alpha (\rho \varepsilon)_{\rho\rho}(\rho, s)}{2} | \nabla_{\mathbf{x}} \rho |^2
    \right]
    \mathsf{d}^d \mathbf{x}
    \,,
\end{aligned}
\end{equation}
where $c_s^2 = 2 \rho \varepsilon_\rho + \rho^2 \varepsilon_{\rho \rho} = \rho (\rho \varepsilon)_{\rho\rho}$ is the squared sound speed. We call this system the thermodynamic HRE model. In the unidimensional barotropic case, this recovers a previously known Hamiltonian regularization \cite{guelmame2022hamiltonian}. 

%================================================================
\section{The thermodynamic HRE model in conservative variables}\label{sec:HRE-conservative-variables}
We now seek a formulation of the thermodynamic HRE model in a locally conservative form in the coordinates $(\rho \mathbf{u}, \rho, E)$, where 
\begin{equation}
    E
    =
    \frac{1}{2}
    \rho \left( |\mathbf{u}|^2 + \alpha (\nabla_{\mathbf{x}} \cdot \mathbf{u})^2 \right)
    +
    \rho
    \varepsilon \left(\rho, s \right)
    +
    \frac{\alpha (\rho \varepsilon)_{\rho\rho}(\rho, s)}{2} | \nabla_{\mathbf{x}} \rho |^2
\end{equation}
is the total energy density of the system. Although we already have a locally conservative form of the model in the coordinates $(\mathbf{m}, \rho, \sigma)$ in Equation~\eqref{eq:hre_entropy_evolution_equations}, it is preferable to obtain evolution equations in the mass, momentum, total energy variables for the purposes of developing viable numerical methods and to compare the model with thermodynamic IGR \cite{wilfong2025simulatingmanyenginespacecraftexceeding}. Unfortunately, mass, momentum, and total energy is not a convenient coordinate system to express in Hamiltonian form, as the Poisson bracket becomes unwieldy in these coordinates. Therefore, we effect the transformation from the canonical Lagrangian coordinates to the Eulerian coordinates $(\rho \mathbf{u}, \rho, E)$ sequentially. 

\subsection{The momentum and continuity equations}\label{subsec:momentum-and-continuity}

We augment the results previously found in Section~\ref{sec:pressureless-IGR} by considering what happens to the HRE model when we add a potential to the Lagrangian:
\begin{equation}
    L[\bm{\Phi}, \dot{\bm{\Phi}}]
    =
    \frac{1}{2} g_\alpha[\bm{\Phi}](\dot{\bm{\Phi}}, \dot{\bm{\Phi}})
    -
    \mathcal{V}[\bm{\Phi}] \,,
\end{equation}
where
\begin{equation}
    \mathcal{V}[\bm{\Phi}]
    =
    \int_\Omega \rho_0 \bar{\varepsilon} \left( \frac{\rho_0}{\det(\nabla_{\mathbf{X}} \bm{\Phi})}, s_0, \nabla_{\mathbf{X}} \bm{\Phi}^{-T} \nabla_{\mathbf{X}} \left( \frac{\rho_0}{\det(D\bm{\Phi})} \right) \right) \mathsf{d}^d \mathbf{X}_0 \,.
\end{equation}
This is the Lagrangian of the thermodynamic HRE model. It is possible to show, see \cite{suzuki2020generic}, that
\begin{equation} \label{eq:derivative_of_potential}
    \frac{\delta \mathcal{V}}{\delta \bm{\Phi}}
    =
    - (\det(\nabla_{\mathbf{X}} \bm{\Phi})) \left[ \nabla_{\mathbf{x}} \cdot \left( [-\bar{p} + \rho \nabla_{\mathbf{x}} \cdot \bm{\xi}] \mathbb{I} - \bm{\xi} \otimes \nabla_{\mathbf{x}} \rho \right) \right]_{x = \bm{\Phi}(\mathbf{X},t)} \,,
\end{equation}
where $\bar{p} = \rho^2 \bar{\varepsilon}_\rho$, and $\bm{\xi} = \rho \bar{\varepsilon}_{\nabla_{\mathbf{x}} \rho}$. Splitting the potential energy into the bulk and capillary energies found in the previous section,
\begin{equation}
    \bar{\varepsilon}(\rho, s, \nabla_\bx \rho)
    =
    \varepsilon(\rho, s) + \frac{\alpha (\rho \varepsilon)_{\rho\rho}}{2 \rho} | \nabla_\bx \rho |^2 \,,
\end{equation}
so that
\begin{multline}
    \bar{p}
    =
    \rho^2
    \left(
    \varepsilon_\rho + \frac{\alpha ((\rho \varepsilon)_{\rho\rho}/\rho)_\rho}{2} | \nabla_\bx \rho |^2
    \right)
    =
    p 
    +
    \left( \frac{\alpha \rho (\rho \varepsilon)_{\rho\rho\rho}}{2} - \frac{\alpha (\rho \varepsilon)_{\rho \rho}}{2} \right) | \nabla_\bx \rho |^2
    \,, \\
    \quad \text{and} \quad
    \bm{\xi}
    =
    \alpha (\rho \varepsilon)_{\rho\rho} \nabla_\bx \rho \,,
\end{multline}
where $p = \rho^2 \varepsilon_\rho$ is the thermodynamic pressure of the compressible Euler system without regularization. As a convenient shorthand, we will continue to use the generalized pressure, $\bar{p}$, and capillary force, $\bm{\xi}$, as we simplify the momentum and energy equations. In the Lagrangian reference frame, the weak-form Euler-Lagrange equation are given by
\begin{equation}
    g_\alpha[\bm{\Phi}](\ddot{\bm{\Phi}}, \cdot)
    =
    \Gamma_\alpha[\bm{\Phi}](\dot{\bm{\Phi}}, \dot{\bm{\Phi}}, \cdot) - D_{\bm{\Phi}} \mathcal{V}(\cdot) \,,
\end{equation}
where $\Gamma_\alpha$ is the regularizing force arising from the Levi-Civita geodesic flow, see \Cref{eq:pressureless-HRE-abstract}. 

As usual, the continuity equation is simply $\partial_t \rho + \nabla_{\mathbf{x}} \cdot (\rho \mathbf{u}) = 0$. It remains entirely unchanged from the regularization, since the relationship between the Lagrangian and Eulerian densities remains unchanged: $\rho(\bm{\Phi}_t(\mathbf{X})) = \det(\nabla_{\mathbf{X}} \bm{\Phi}_t)^{-1} \rho_0(\mathbf{X})$. 

The momentum equation in Eulerian coordinates is obtained by following the procedure discussed in \Cref{sec:eulerianization}:
\begin{equation}
    \mathcal{L}_\alpha(\rho) \frac{D \mathbf{u}}{D t}
    =
    - \nabla_{\bx} \cdot \left( \alpha \rho \mathrm{tr}((\nabla_{\bx} \bu)^2) \mathbb{I} + [\bar{p} - \rho \nabla_{\mathbf{x}} \cdot \bm{\xi}] \mathbb{I} + \bm{\xi} \otimes \nabla_{\mathbf{x}} \rho \right) \,,
\end{equation}
which may be rewritten as
\begin{equation} \label{eq:thermal_hre_newton_law}
    \rho \frac{D \mathbf{u}}{D t}
    =
    - \nabla_{\mathbf{x}} \cdot \mathbb{\Sigma} \,,
\end{equation}
where
\begin{equation} \label{eq:matrix_conservative_entropic_pressure}
    \rho^{-1} \mathbb{\Sigma} - \alpha \nabla_{\mathbf{x}} \cdot (\rho^{-1} \nabla_{\mathbf{x}} \cdot \mathbb{\Sigma} )\mathbb{I}
    =
    \alpha \mathrm{tr}((\nabla_{\mathbf{x}} \mathbf{u})^2) \mathbb{I}
    +
    \left(\frac{\bar{p}}{\rho} - \nabla_{\mathbf{x}} \cdot \bm{\xi} \right) \mathbb{I} + \frac{\bm{\xi}}{\rho} \otimes \nabla_{\mathbf{x}} \rho \,.
\end{equation}
Notice that the entropic pressure $\mathbb{\Sigma}$ is now a rank-$2$ tensor due to the anisotropy of the Korteweg stress. This is in contrast with IGR and the pressureless HRE models, in which the entropic pressure is isotropic. We will find that this elliptic equation may be reduced to a scalar equation.

\subsection{Local kinetic energy transport}\label{subsec:local-kinetic-energy-transport} 

As in Section~\Cref{sec:pressureless-IGR}, we define the local kinetic energy to be
\begin{equation}
    K_{\mathcal{D}}[\bm{\Phi}_t, \dot{\bm{\Phi}}_t]
    =
    \frac{1}{2} \int_{\mathcal{D}} \rho_0 \left( |\dot{\bm{\Phi}}_t|^2 + \alpha \mathrm{tr}(\nabla_{\mathbf{X}} \bm{\Phi}_t^{-1} \nabla_{\mathbf{X}} \dot{\bm{\Phi}}) \right) \mathsf{d}^d \mathbf{X}_0
\end{equation}
in the Lagrangian reference frame, or
\begin{equation}
    K_{\mathcal{D}_t} [\mathbf{u}, \rho]
    =
    \frac{1}{2} \int_{\mathcal{D}_t} \rho \left( |\mathbf{u}|^2 + \alpha (\nabla_{\mathbf{x}} \cdot \mathbf{u})^2 \right) \mathsf{d}^d \mathbf{x}
\end{equation}
in the Eulerian reference frame. As before, we find that
\begin{equation}
    \dot{K}_{\mathcal{D}_t}
    =
    \frac{\mathsf{d}}{\mathsf{d}t}
    \int_{\mathcal{D}_t} K_E \mathsf{d}^d \mathbf{x}
    =
    \int_{\mathcal{D}_t}
    \left( 
    \rho \frac{D \mathbf{u}}{Dt} \cdot \mathbf{u}
    +
    \alpha \rho (\nabla_{\mathbf{x}} \cdot \mathbf{u}) \nabla_{\mathbf{x}} \cdot \left( \frac{D \mathbf{u}}{Dt} \right)
    -
    \alpha \rho (\nabla_{\mathbf{x}} \cdot \mathbf{u}) \mathrm{tr}((\nabla_{\mathbf{x}} \mathbf{u})^2)
    \right)
    \mathsf{d}^d \mathbf{x} \,.
\end{equation}
Again, Reynolds' transport theorem implies that
\begin{equation} \label{eq:unsimplified_thermal_hre_energy_transport}
\begin{aligned}
    \partial_t K_E
    +
    \nabla_{\mathbf{x}} \cdot(K_E \mathbf{u})
    &=
    \rho \frac{D \mathbf{u}}{Dt} \cdot \mathbf{u}
    +
    \alpha \rho (\nabla_{\mathbf{x}} \cdot \mathbf{u}) \nabla_{\mathbf{x}} \cdot \left( \frac{D \mathbf{u}}{Dt} \right)
    -
    \alpha \rho (\nabla_{\mathbf{x}} \cdot \mathbf{u}) \mathrm{tr}((\nabla_{\mathbf{x}} \mathbf{u})^2) \\
    &=
    \rho \frac{D \mathbf{u}}{Dt} \cdot \mathbf{u}
    +
    \alpha \rho (\nabla_{\mathbf{x}} \cdot \mathbf{u}) \nabla_{\mathbf{x}} \cdot \left( \frac{D \mathbf{u}}{Dt} \right)
    -
    \alpha \rho \nabla_{\mathbf{x}} \mathbf{u} : \mathrm{tr}((\nabla_{\mathbf{x}} \mathbf{u})^2) \mathbb{I} \,.
\end{aligned}
\end{equation}
Up to this point, what we find is identical to the pressureless case. After substituting Equation~\eqref{eq:thermal_hre_newton_law} into Equation~\eqref{eq:unsimplified_thermal_hre_energy_transport}, and proceeding with the same simplifications as in Section~\ref{sec:pressureless-IGR}, we find that
\begin{equation} \label{eq:local-KE-transport_HRE}
    \partial_t K_E + \nabla_{\mathbf{x}} \cdot ( (K_E \mathbb{I} + \mathbb{\Sigma}) \cdot \mathbf{u}) + \nabla_{\mathbf{x}} \mathbf{u} \cdot \left( [-\bar{p} + \rho \nabla_{\mathbf{x}} \cdot  \bm{\xi}] \mathbb{I} - \bm{\xi} \otimes \nabla_{\mathbf{x}} \rho \right) = 0 \,,
\end{equation}
for the thermodynamic HRE model. The non-conservative term represents reversible transfer between kinetic and potential energy. We will find that, when combined with the evolution equation for internal energy, one obtains a fully conservative system.

If we include the non-conservative part of the entropic pressure in the momentum equation, the local kinetic energy transport is given by
\begin{equation}\label{eq:local-KE-transport}
    \partial_t K_E + \nabla_{\mathbf{x}} \cdot ( (K_E \mathbb{I} + \mathbb{\Sigma}) \cdot \mathbf{u}) + \nabla_{\mathbf{x}} \mathbf{u} \cdot \left( [-\bar{p} + \rho \nabla_{\mathbf{x}} \cdot  \bm{\xi}] \mathbb{I} - \bm{\xi} \otimes \nabla_{\mathbf{x}} \rho \right) = \alpha \rho (\nabla_{\mathbf{x}} \cdot  \mathbf{u})^3 \,,
\end{equation}
where the entropic pressure now includes the additional non-conservative term
\begin{equation} \label{eq:matrix_entropic_pressure_non_cons}
    \rho^{-1} \mathbb{\Sigma} - \alpha \nabla_{\mathbf{x}} \cdot (\rho^{-1} \nabla_{\mathbf{x}} \cdot \mathbb{\Sigma} )\mathbb{I}
    =
    \alpha (\mathrm{tr}^2(\nabla_{\mathbf{x}} \mathbf{u}) + \mathrm{tr}((\nabla_{\mathbf{x}} \mathbf{u})^2)) \mathbb{I}
    +
    \left(\frac{\bar{p}}{\rho} - \nabla_{\mathbf{x}} \cdot \bm{\xi} \right) \mathbb{I} + \frac{\bm{\xi}}{\rho} \otimes \nabla_{\mathbf{x}} \rho \,.
\end{equation}
The non-conservative term on the right-hand side represents a dissipative coupling between the kinetic and internal energy and must be coupled with a suitable entropy producing process in the internal energy equation. This is accomplished in \Cref{sec:full-IGR-dissipative-extension}. 

\subsection{Local internal and total energy transport}\label{subsec:local-internal-energy-transport}
We obtain the local internal energy transport equation via a sequence of coordinate transformations from the canonical Lagrangian coordinates to the coordinate system $(\mathbf{m},\rho, e)$, where 
\begin{equation}
    e(\rho,s)
    =
    \rho
    \varepsilon \left(\rho, s \right)
    +
    \frac{\alpha (\rho \varepsilon)_{\rho\rho}(\rho, s)}{2} | \nabla_{\mathbf{x}} \rho |^2
\end{equation}
is the internal energy density. In \Cref{appendix:change_of_variables}, we systematically derive the local transport equation for the internal energy. We find that
\begin{equation} \label{eq:internal_energy_density_transport}
    \partial_t e = - \nabla_{\mathbf{x}} \cdot ( e \mathbf{u}) + (\nabla_{\mathbf{x}} \mathbf{u}) \cdot [ (- \bar{p} + \rho (\nabla_{\mathbf{x}} \cdot \bm{\xi})) \mathbb{I} - \nabla_{\mathbf{x}} \rho \otimes \bm{\xi} ] - \nabla_{\mathbf{x}} \cdot( \rho (\nabla_{\mathbf{x}} \cdot \mathbf{u}) \bm{\xi}) \,.
\end{equation}
The transport equation for total energy density, $E = K_E + e$, is obtained by summing up Equations~\eqref{eq:local-KE-transport_HRE} and \eqref{eq:internal_energy_density_transport} to find
\begin{equation}
    \partial_t E + \nabla_{\mathbf{x}} \cdot \left( E \mathbf{u} + \mathbb{\Sigma} \cdot \mathbf{u} + \rho (\nabla_{\mathbf{x}} \cdot \mathbf{u}) \bm{\xi} \right) = 0 \,,
\end{equation}
for the thermodynamic HRE system, and
\begin{equation} \label{eq:igr_energy_transport}
    \partial_t E + \nabla_{\mathbf{x}} \cdot \left( E \mathbf{u} + \mathbb{\Sigma} \cdot \mathbf{u} + \rho (\nabla_{\mathbf{x}} \cdot \mathbf{u}) \bm{\xi} \right) = \alpha \rho (\nabla_{\mathbf{x}} \cdot \mathbf{u})^3 \,,
\end{equation}
if we include the non-conservative part of the entropic pressure in the momentum equation, where $\mathbb{\Sigma}$ solves ~\Cref{eq:matrix_entropic_pressure_non_cons}. As mentioned at the end of \Cref{subsec:local-energy-transport-pressureless-IGR} we will show in \Cref{sec:full-IGR-dissipative-extension} that we may restore total energy conservation by asserting that the dissipated kinetic energy due to the non-conservative entropic pressure is converted into internal energy as heat. 

\subsection{Simplifications for the elliptic entropic pressure equation}\label{subsec:simplified-HRE} 

Much of the complexity of the thermodynamic HRE model is hidden in the elliptic solve for the entropic pressure, $\mathbb{\Sigma}$, given in~\Cref{eq:matrix_conservative_entropic_pressure}. It is therefore reasonable to seek simplifications. Dispensing with the shorthand used throughout this section, the full entropic pressure equation is given by
\begin{multline} \label{eq:full_conservative_entropic_pressure}
    \rho^{-1} \mathbb{\Sigma} - \alpha \nabla_{\mathbf{x}} \cdot (\rho^{-1} \nabla_{\mathbf{x}} \cdot \mathbb{\Sigma} )\mathbb{I}
    =
    \left( \rho^{-1} p - \alpha \nabla_{\mathbf{x}} \cdot ((\rho \varepsilon)_{\rho\rho} \nabla_\bx \rho) \right) \mathbb{I}  + \alpha \mathrm{tr}((\nabla_{\mathbf{x}} \mathbf{u})^2) \mathbb{I} \\
    + 
    \alpha
    \left[
    \left( \left( \frac{ (\rho \varepsilon)_{\rho\rho\rho}}{2} - \frac{ (\rho \varepsilon)_{\rho \rho}}{2 \rho} \right) | \nabla_\bx \rho |^2 \right) \mathbb{I} + \frac{(\rho \varepsilon)_{\rho\rho}}{\rho} \nabla_\bx \rho \otimes \nabla_{\bx} \rho
    \right] \,.
\end{multline}
This equation takes the form
\begin{equation}
    \rho^{-1} \mathbb{\Sigma} - \alpha \nabla_{\mathbf{x}} \cdot (\rho^{-1} \nabla_{\mathbf{x}} \cdot \mathbb{\Sigma} )\mathbb{I}
    =
    f(\bx) \mathbb{I} + \mathbb{G}(\bx) \,,
\end{equation}
where $f: \Omega \to \mathbb{R}$ and $\mathbb{G}: \Omega \to \mathbb{R}^{d \times d}$. In~\Cref{appendix:simplifying_entropic_pressure}, we show that with adequately regularity conditions, this may be reduced to a scalar elliptic solve:
\begin{equation}
\left\{
\begin{aligned}
    &\mathbb{\Sigma}
    =
    \rho \mathbb{G} + \Sigma \mathbb{I} \,, \\
    &\rho^{-1} \Sigma - \alpha \nabla_{\bx} \cdot (\rho^{-1} \nabla_{\bx} \Sigma)
    =
    \alpha \nabla_{\bx} \cdot (\rho^{-1} \nabla_\bx \cdot (\rho \mathbb{G})) \,.
\end{aligned}
\right. 
\end{equation}
Reducing the matrix elliptic solve to a scalar elliptic solve comes at the cost of introducing higher-order derivatives. Indeed, whether
\begin{equation}
    \mathbb{G}(\bx)
    =
    \alpha \frac{(\rho \varepsilon)_{\rho\rho}}{\rho} \nabla_\bx \rho \otimes \nabla_{\bx} \rho
\end{equation}
is twice differentiable is unclear. The one-dimensional numerical experiments in ~\Cref{subsec:1d_numerics} indicate that the higher-order derivatives introduced in this Hamiltonian regularization are problematic.

As another consideration, the first two terms on the right-hand side of ~\Cref{eq:full_conservative_entropic_pressure} are natural to group together when one recalls that $\rho (\rho \varepsilon)_{\rho \rho} = c_s^2 = p_\rho$. In the barotropic case, the pressure may be removed from the elliptic equation since
\begin{equation}
    \rho^{-1} p - \alpha \nabla_{\mathbf{x}} \cdot ((\rho \varepsilon)_{\rho\rho} \nabla_\bx \rho)
    =
    \rho^{-1} p - \alpha \nabla_{\mathbf{x}} \cdot (\rho^{-1} p_\rho(\rho) \nabla_\bx \rho)
    =
    \rho^{-1} p - \alpha \nabla_{\mathbf{x}} \cdot (\rho^{-1} \nabla_\bx p(\rho)) \,.
\end{equation}
Thus, $\mathbb{\Sigma} = p \mathbb{I} + \tilde{\mathbb{\Sigma}}$, where $\tilde{\mathbb{\Sigma}}$ solves the same elliptic equation with the first two terms omitted from the right-hand side. This simplification is not possible for a general, energy-dependent equation of state. We may nonetheless use ~\Cref{lemma:matrix_elliptic_eqn} to extract the pressure force from the elliptic solve. We then find that $\mathbb{\Sigma} = p \mathbb{I} + \tilde{\mathbb{\Sigma}}$, where
\begin{multline} 
    \rho^{-1} \tilde{\mathbb{\Sigma}} - \alpha \nabla_{\mathbf{x}} \cdot (\rho^{-1} \nabla_{\mathbf{x}} \cdot \tilde{\mathbb{\Sigma}} )\mathbb{I}
    =
    \alpha \left( \nabla_{\bx} \cdot (\rho^{-1} \nabla_\bx p) - \nabla_{\mathbf{x}} \cdot ((\rho \varepsilon)_{\rho\rho} \nabla_\bx \rho) \right) \mathbb{I}  + \alpha \mathrm{tr}((\nabla_{\mathbf{x}} \mathbf{u})^2) \mathbb{I} \\
    + 
    \alpha
    \left[
    \left( \left( \frac{ (\rho \varepsilon)_{\rho\rho\rho}}{2} - \frac{ (\rho \varepsilon)_{\rho \rho}}{2 \rho} \right) | \nabla_\bx \rho |^2 \right) \mathbb{I} + \frac{(\rho \varepsilon)_{\rho\rho}}{\rho} \nabla_\bx \rho \otimes \nabla_{\bx} \rho
    \right] \,.
\end{multline}
This formulation has the advantage of making all terms on the right-hand side of the elliptic equation $O(\alpha)$, which we find is important for numerical stability in one-dimensional numerical experiments. However, this comes at the cost of taking higher-order derivatives of the pressure. 

\begin{remark}
    Note that the density derivatives of the internal energy density are taken at fixed entropy. This may give rise to confusion since entropy has been eliminated as a variable in factor of total energy. When using the internal energy as the prognostic variable, a more explicit notation is to write $(\rho \varepsilon)_{\rho} = \left. \partial_\rho (\rho \varepsilon) \right|_s$. 
\end{remark}

\begin{remark}
    In the limit $\alpha \to 0$, we find $\mathbb{\Sigma} = p \mathbb{I}$. Thus, in this limit, we recover the compressible Euler equations without regularization, as expected.
\end{remark}

\subsection{Summary of the thermodynamic HRE model}\label{subsec:summary-HRE} 

A non-dispersive Hamiltonian regularization of the compressible Euler equations is prescribed as a Lie--Poisson system with energy
\begin{equation}
\begin{aligned}
    H[\mathbf{m},\rho, \sigma]
    &=
    \int_\Omega 
    \left[
    \frac{1}{2}
    \mathbf{m} \cdot \mathbf{u}[\mathbf{m},\rho]
    +
    \rho
    \varepsilon \left(\rho, \frac{\sigma}{\rho} \right)
    +
    \frac{\alpha c_s^2(\rho, \sigma/\rho)}{2\rho} | \nabla_{\mathbf{x}} \rho |^2
    \right]
    \mathsf{d}^d \mathbf{x} \\
    &= 
    \int_\Omega 
    \left[
    \frac{1}{2}
    \rho \left( |\mathbf{u}|^2 + \alpha (\nabla_{\mathbf{x}} \cdot \mathbf{u})^2 \right)
    +
    \rho
    \varepsilon \left(\rho, s \right)
    +
    \frac{\alpha (\rho \varepsilon)_{\rho\rho}(\rho, s)}{2} | \nabla_{\mathbf{x}} \rho |^2
    \right]
    \mathsf{d}^d \mathbf{x}
    \,,
\end{aligned}
\end{equation}
where $c_s^2 = 2 \rho \varepsilon_\rho + \rho^2 \varepsilon_{\rho \rho} = \rho (\rho \varepsilon)_{\rho\rho}$ is the squared sound speed. In Lie--Poisson Hamiltonian coordinates, the system is given by Equation~\eqref{eq:hre_entropy_evolution_equations}, while in locally-conservative form it is given by
\begin{equation}\label{eq:thermal-HRE}
\left\{
\begin{aligned}
    &\partial_t
    \begin{bmatrix}
        \rho \\
        \rho \bu \\
        E
    \end{bmatrix}
    +
    \nabla_{\bx} \cdot
    \begin{bmatrix}
        \rho \mathbf{u} \\
        \rho \mathbf{u} \otimes \mathbf{u} + \mathbb{\Sigma} \\
        E \mathbf{u} + \mathbb{\Sigma} \cdot \mathbf{u} + \alpha c_s^2 (\nabla_\bx \cdot \bu) \nabla_\bx \rho / \rho
    \end{bmatrix}
    =
    0 \,, \\
    &\begin{aligned}
    \rho^{-1} \mathbb{\Sigma} - \alpha \nabla_{\mathbf{x}} \cdot (\rho^{-1} \nabla_{\mathbf{x}} \cdot \mathbb{\Sigma} )\mathbb{I}
    &=
    \left( \rho^{-1} p - \alpha \nabla_{\mathbf{x}} \cdot (\rho^{-1} c_s^2 \nabla_\bx \rho) \right) \mathbb{I}  + \alpha \mathrm{tr}((\nabla_{\mathbf{x}} \mathbf{u})^2) \mathbb{I} \\
    &+ 
    \alpha
    \left[
    \left( \left( \frac{ \rho^2 \left. \partial_\rho^3 (\rho \varepsilon) \right|_s }{2} - \frac{ c_s^2}{2} \right) \left|\frac{\nabla_\bx \rho}{\rho} \right|^2 \right) \mathbb{I} + c_s^2 \frac{\nabla_\bx \rho}{\rho} \otimes \frac{\nabla_\bx \rho}{\rho}
    \right]
    \end{aligned} \,,
\end{aligned}
\right.
\end{equation}
where $p$, $c_s^2$, and $\left. \partial_\rho^3 (\rho \varepsilon) \right|_s$ are computed from an equation of state. This system enjoys the following properties:
\begin{itemize}
    \item its linear dispersion relation about a constant equilibrium, $(\mathbf{u}_0, \rho_0, s_0)$, matches that of the compressible Euler equations, and it formally recovers compressible Euler in the limit $\alpha \to 0$,
    \item it recovers the pressureless HRE model if we let $\varepsilon = 0$,
    \item the model is thermodynamically consistent (conserving total energy and entropy),
    \item and the model is locally conservative in the variables $(\rho \mathbf{u}, \rho, E)$. 
    \item the Hamiltonian structure facilitates thermodynamically consistent dissipative extensions through the metriplectic formalism, see \Cref{sec:full-IGR-dissipative-extension},
\end{itemize}
This model is the only known multi-dimensional, non-dispersive Hamiltonian regularization of the compressible Euler equations with general thermodynamics. 

\begin{remark}
    While closed-form expressions for the pressure, $p$, and the squared sound speed, $c_s^2$, as functions of density and internal energy are generally available even for tabulated equations of state, the third derivative of the internal energy with respect to density at fixed entropy, $\left. \partial_\rho^3 (\rho \varepsilon) \right|_s$, is not a commonly known thermodynamic quantity. E.g., for a polytropic ideal gas, we have
    \begin{multline}
        \varepsilon(\rho, s)
        =
        \frac{\rho^{\gamma-1} e^s}{\gamma-1} \,,
        \quad \text{so that} \quad
        p(\rho,s)
        =
        \rho \varepsilon_\rho
        =
        \rho^\gamma e^s \,,
        \quad
        c_s^2
        =
        p_\rho
        =
        \gamma \rho^{\gamma-1} e^s \,, \\
        \quad \text{and} \quad
        (\rho \varepsilon)_{\rho\rho\rho}
        =
        (\gamma - 2)(\gamma-3) \rho^{\gamma - 4} e^s \,.
    \end{multline}
    Inverting these expressions, we find
    \begin{equation}
        p(\rho, \varepsilon)
        =
        (\gamma - 1) \rho \varepsilon \,,
        \quad
        c_s^2(\rho, \varepsilon)
        =
        \gamma (\gamma - 1) \varepsilon \,,
        \quad \text{and} \quad
        \left. \partial_\rho^3 (\rho \varepsilon) \right|_s (\rho, \varepsilon)
        =
        (\gamma - 1)(\gamma - 2)(\gamma-3) \varepsilon/ \rho^3 \,.
    \end{equation}
    Further, recall that the internal energy is obtained from the total energy by subtracting the generalized kinetic energy:
    \begin{equation}
        \varepsilon
        =
        \rho^{-1}
        \left(
        E - \frac{1}{2} \rho ( | \bu |^2 + \alpha (\nabla_\bx \cdot \bu)^2 ) 
        \right) \,.
    \end{equation}
\end{remark}

\begin{remark}
    By letting $\varepsilon = \varepsilon(\rho)$, we obtain a barotropic HRE model. In this case, we eliminate the entropy equation in Equation~\eqref{eq:hre_entropy_evolution_equations} and the energy equation in Equation~\eqref{eq:thermal-HRE}. However, the total energy of the system, given by the Hamiltonian, is still conserved. As mentioned in~\Cref{subsec:simplified-HRE}, in the barotropic, unidimensional case, the system simplifies significantly:
    \begin{equation}
    \begin{aligned}
        (\rho u)_t + (\rho u^2 + p(\rho) + \Sigma)_x &= 0 \,, \\
        \rho_t + (\rho u)_x &= 0 \,, \\
        \rho^{-1} \Sigma - \alpha (\rho^{-1} \Sigma_x)_x &= 
        \alpha 
        \left(
        u_x^2 + \frac{(\rho (\rho \varepsilon)_{\rho \rho})_\rho}{2 \rho} \rho_x^2
        \right) \,,
    \end{aligned}
    \end{equation}
    which recovers a previous model in the literature \cite{guelmame2022hamiltonian}. 
\end{remark}

%================================================================
\section{Hamiltonian IGR: a dissipative extension of HRE}\label{sec:full-IGR-dissipative-extension}
We now introduce the final system, which we will refer to as Hamiltonian IGR, which can be viewed as both a dissipative extension of thermodynamic HRE, as well as a entropy-consistent thermodynamic extension of the IGR (where, by entropy consistent, we mean that the sign of the entropy production rate agrees with the sign of $\nabla\cdot u$).

The remaining non-conservative force in IGR appears in the momentum equation, in the canonical coordinate $\mathbf{m}$, as
\begin{equation}
    \partial_t \mathbf{m} - \{\mathbf{m}, H\} = - \alpha \nabla_{\mathbf{x}} (\rho (\nabla_{\mathbf{x}} \cdot \mathbf{u})^2) \,,
\end{equation}
where $\{\mathbf{m}, H\}$ captures the previously described Hamiltonian contribution to the equations. The objective of this section is to find a suitable extension of HRE using the metriplectic formalism (see \cite{zaidni2025metriplectic, PhysRevE.109.045202} and also, \Cref{appendix:metriplectic}), which adds this remaining term with appropriate reciprocal couplings to total energy and entropy. This non-conservative term is crucial for the limit $\alpha \to 0$ to properly recover shock physics, as the physically valid weak solution to the shock problem is found by requiring entropy production across shocks \cite{godlewski2013numerical}. 

As we saw previously in equation \eqref{eq:igr_energy_transport}, energy transport in the IGR equation includes an additional non-conservative term, $\alpha \rho (\nabla_{\mathbf{x}} \cdot \mathbf{u})^3$, on the right-hand side. By adding a dissipative coupling to the internal energy, we can restore conservation of total energy. Thus, we posit that
\begin{equation} \label{eq:target_entropy_production}
    \partial_t e - \{e, H\} = - \alpha \rho (\nabla_{\mathbf{x}} \cdot \mathbf{u})^3 \,,
    \quad \text{which implies} \quad
    \partial_t \sigma - \{\sigma, H \} = - \alpha \frac{\rho}{\vartheta} (\nabla_{\mathbf{x}} \cdot \mathbf{u})^3 \,.
\end{equation}
This provides us with a target entropy production rate for our system that we use to design our dissipative regularization using the metriplectic formalism. 

\begin{remark}
    Strictly speaking, a true metriplectic system is a thermodynamically consistent dissipative extension of Hamiltonian mechanics, requiring non-negative entropy production. \Cref{eq:target_entropy_production} allows for entropy depletion, functioning more like a source/sink than a true dissipative process. However, this violation of the second law of thermodynamics is a feature of IGR, and not a bug. The non-conservative force functions as a sink for converging flows, thus producing entropy at shock fronts in the limit as $\alpha \to 0$, while depleting entropy in diverging flows, thus ensuring that acoustic waves are conserved on average. A true thermodynamically-consistent dissipative process could not enjoy this property of IGR. 
\end{remark}

The precise form of the Hamiltonian regularization follows as an inevitable consequence of our modeling assumptions: we penalize Lagrangian trajectories approaching one another via a logarithmic barrier and add thermodynamic pressure such that acoustic waves are non-dispersive. The dissipative regularization discussed in this section is heuristically motivated. An ideal dissipative regularization would ensure that we add exactly the right amount of entropy locally to ensure convergence to the weak solution of the compressible Euler equations satisfying the Rankine-Hugoniot conditions in the limit $\alpha \to 0$. Techniques for defining weak, dissipative solutions to similar Hamiltonian regularizations of hyperbolic conservation laws have been investigated elsewhere \cite{guelmame2024hamiltonian}, and a similar treatment for the HRE system is warranted. 

\subsection{Designing a dissipative process using an indefinite metriplectic 4-bracket}\label{subsec:metriplectic-4-bracket}

Consider the following construction based on the metriplectic $4$-bracket formalism for dissipative dynamics. Define following symmetric bilinear operators:
\begin{equation}
    A(F,G)
    =
    - \frac{\alpha \rho (\nabla_{\mathbf{x}} \cdot \mathbf{u})}{\vartheta} \left(\nabla_{\mathbf{x}} \cdot \frac{\delta F}{\delta \mathbf{m}} \right) \left(\nabla_{\mathbf{x}} \cdot \frac{\delta G}{\delta \mathbf{m}} \right) \,,
    \quad \text{and} \quad
    B(F,G)
    =
    \frac{\delta F}{\delta \sigma} \frac{\delta G}{\delta \sigma} \,.
\end{equation}
As described in \Cref{appendix:metriplectic}, the rationale for picking these components for building the metriplectic $4$-bracket is that we know the target local entropy production rate \eqref{eq:target_entropy_production}, and we know that this can be directly related to our choice of symmetric operator $A$:
\begin{equation}
    \partial_t \sigma
    -
    \{\sigma, H\}
    =
    (\sigma, H)
    =
    A(H,H)
    =
    - \alpha \frac{\rho (\nabla_{\mathbf{x}} \cdot \mathbf{u})^3}{\vartheta} \,.
\end{equation}
The 4-bracket is then obtained from the Kulkarni--Nomizu product (see \Cref{appendix:metriplectic}) of $A$ and $B$: 
\begin{equation}
\begin{aligned}
    (F&,K;G,N)
    =
    \int_\Omega
    (A\varowedge B)(F, K, G, N) \mathsf{d}^d \mathbf{x} \\
    & =
    \int_\Omega \left[
    A(F, G) B(K, N) 
    - A(F, N) B(G, K) 
    + B(F, G) A(K, N) 
    - B(F, N) A(G, K) \right] \mathsf{d}^d \mathbf{x} \\
    &=
    - \int_\Omega \frac{\alpha \rho \nabla_{\mathbf{x}} \cdot \mathbf{u}}{\vartheta} 
    \left[ \frac{\delta K}{\delta \sigma} \nabla_{\mathbf{x}} \cdot \frac{\delta F}{\delta \mathbf{m}} - \frac{\delta F}{\delta \sigma} \nabla_{\mathbf{x}} \cdot \frac{\delta K}{\delta \mathbf{m}} \right] 
    \left[ \frac{\delta N}{\delta \sigma} \nabla_{\mathbf{x}} \cdot \frac{\delta G}{\delta \mathbf{m}} - \frac{\delta G}{\delta \sigma} \nabla_{\mathbf{x}} \cdot \frac{\delta N}{\delta \mathbf{m}} \right] \mathsf{d}^d \mathbf{x} \,.
\end{aligned}
\end{equation}
Choosing $K = N = H$, we find that the dissipative bracket is given by
\begin{equation}
    (F,G)
    =
    (F,H;G,H)
    =
    - \int_\Omega \frac{\alpha \rho \nabla_{\mathbf{x}} \cdot \mathbf{u}}{\vartheta} 
    \left[ \vartheta \nabla_{\mathbf{x}} \cdot \frac{\delta F}{\delta \mathbf{m}} - \frac{\delta F}{\delta \sigma} \nabla_{\mathbf{x}} \cdot \mathbf{u} \right] 
    \left[ \vartheta \nabla_{\mathbf{x}} \cdot \frac{\delta G}{\delta \mathbf{m}} - \frac{\delta G}{\delta \sigma} \nabla_{\mathbf{x}} \cdot \mathbf{u} \right] \mathsf{d}^d \mathbf{x} \,.
\end{equation}
This procedure for designing dissipative processes automatically guarantees energy conservation while building in the desired entropy production rate \cite{PhysRevE.109.045202, zaidni2025metriplectic}. 

Letting $G = S = \int_\Omega \sigma \mathsf{d}^3 \bx$, we can recover the dissipative evolution. For the entropy production rule, we take $F = \int \psi_\sigma \sigma \mathsf{d}^d \mathbf{x}$ where $\psi_\sigma$ is arbitrary, which yields
\begin{equation}
    \partial_t \sigma
    -
    \{\sigma, H\}
    =
    - \frac{\alpha \rho (\nabla_{\mathbf{x}} \cdot \mathbf{u})^3}{\vartheta}
    \quad \text{which implies} \quad
    \dot{S}
    =
    - \int_\Omega \frac{\alpha \rho (\nabla_{\mathbf{x}} \cdot \mathbf{u})^3}{\vartheta} \mathsf{d}^d \mathbf{x} \,.
\end{equation}
Similarly, letting $F = \int \psi_{\mathbf{m}} \cdot \mathbf{m} \mathsf{d}^d \mathbf{x}$, where $\psi_{\mathbf{m}}$ is arbitrary, yields the contribution to the momentum equation,
\begin{equation}
    \int \psi_m \cdot \dot{\mathbf{m}} \mathsf{d}^d \mathbf{x}
    =
    \int_\Omega \alpha \rho (\nabla_{\mathbf{x}} \cdot \mathbf{u})^2 \nabla_{\mathbf{x}} \cdot \psi_m \mathsf{d}^d \mathbf{x} 
    \quad \text{which implies} \quad
    \partial_t \mathbf{m}
    - \{\mathbf{m}, H\}
    =
    - \alpha \nabla_{\mathbf{x}} (\rho (\nabla_{\mathbf{x}} \cdot \mathbf{u})^2) \,.
\end{equation}
Hence, we recover the desired IGR correction to the momentum equation, while entropy production is slightly more transparent:  $\dot{S} > 0$ when $\nabla_{\mathbf{x}} \cdot \mathbf{u} < 0$ and $\dot{S} < 0$ when $\nabla_{\mathbf{x}} \cdot \mathbf{u} > 0$. That we can guarantee thermodynamic entropy production at the shock fronts ensures physically plausible behavior in the shock front. Diverging flow regions are transient in nature, and negative divergences of large magnitude are rare. This should ensure (1) positive global entropy production, and (2) minimal impact of the physically incorrect negative entropy production. We now examine the effect of this dissipative bracket on the kinematic momentum $\rho\mathbf{u}$ and on the total energy.

We now explicitly verify that this symmetric dissipative bracket contributes the correct non-conservative force in the evolution equation for the kinematic momentum, $\rho \mathbf{u}$. The functional chain rule implies, see appendix \ref{appendix:momentum_coord_change}, that if $F[\mathbf{m}, \rho, \sigma] = \tilde{F}[\rho \mathbf{u}, \rho, \sigma]$, then
\begin{multline}
    (\tilde{F},\tilde{G})
    =
    - \int_\Omega \frac{\alpha \rho \nabla_{\mathbf{x}} \cdot \mathbf{u}}{\vartheta} 
    \left[ \vartheta \nabla_{\mathbf{x}} \cdot \left[ \mathcal{L}_\alpha^{-1}(\rho) \left( \rho \frac{\delta \tilde{F}}{\delta (\rho \mathbf{u})} \right) \right] - \frac{\delta \tilde{F}}{\delta \sigma} \nabla_{\mathbf{x}} \cdot \mathbf{u} \right] \\
    \times \left[ \vartheta \nabla_{\mathbf{x}} \cdot \left[ \mathcal{L}_\alpha^{-1}(\rho) \left( \rho \frac{\delta \tilde{G}}{\delta (\rho \mathbf{u})} \right) \right] - \frac{\delta \tilde{G}}{\delta \sigma} \nabla_{\mathbf{x}} \cdot \mathbf{u} \right] \mathsf{d}^d \mathbf{x} \,.
\end{multline}
Hence, we find that
\begin{multline}
    \partial_t (\rho \mathbf{u}) - \{ \rho \mathbf{u}, H \}
    =
    - \alpha \mathcal{L}_\alpha^{-1}(\rho) \nabla_{\mathbf{x}} \cdot ( \rho ( \nabla_{\mathbf{x}} \cdot u)^2) 
    =
    - \nabla_{\mathbf{x}} \Sigma_D \,, \\
    \text{where} \quad
    \rho^{-1} \Sigma_D - \alpha \nabla_{\mathbf{x}} \cdot ( \rho^{-1} \nabla_{\mathbf{x}} \Sigma_D) = \alpha ( \nabla_{\mathbf{x}} \cdot u)^2 \,,
\end{multline}
using the commutation relations established in appendix \Cref{appendix:commutation_relations}. Therefore, we indeed recover the desired IGR contribution to the momentum equation.

Finally, we find the contribution that the symmetric bracket makes to local energy transport. Since $\partial \varepsilon/ \partial s = \vartheta$, we find that
\begin{equation}
    (e,S)
    =
    - \alpha \rho (\nabla_\bx \cdot \bu)^3 \,,
    \quad \text{which yields} \quad
    \partial_t e - \{e, H\}
    = - \alpha \rho (\nabla_{\mathbf{x}} \cdot \mathbf{u})^3 \,.
\end{equation}
The Hamiltonian part of the internal energy transport is given by ~\Cref{eq:internal_energy_density_transport}. The local transport of kinetic energy is less convenient to obtain directly from the dissipative bracket. However, recall that an expression for kinetic energy transport when the non-conservative term is included in the entropic pressure was already obtained in ~\Cref{eq:local-KE-transport}. Hence, summing the kinetic and internal energy transport equations, we find that
\begin{equation}
    \partial_t E + \nabla_{\mathbf{x}} \cdot \left( E \mathbf{u} + \mathbb{\Sigma} \cdot \mathbf{u} + \rho (\nabla_{\mathbf{x}} \cdot \mathbf{u}) \bm{\xi} \right) = 0 \,,
\end{equation}
where $\mathbb{\Sigma}$ solves ~\Cref{eq:matrix_entropic_pressure_non_cons}. 

%\subsubsection{Summary of the Hamiltonian IGR model}\label{subsec:summary-metriplectic-IGR}

To summarize, the metriplectic extension of the HRE model, which we call Hamiltonian IGR, in Lie-Poisson Hamiltonian coordinates is given by
\begin{equation}
\left\{
\begin{aligned}
    &\partial_t
    \begin{bmatrix}
        \rho \\
        \mathbf{m} \\
        \sigma
    \end{bmatrix}
    +
    \nabla_{\bx} \cdot
    \begin{bmatrix}
        \rho \bu \\
        \begin{aligned}
        &\mathbf{m} \otimes \mathbf{u} + \alpha \rho ( \nabla_{\mathbf{x}} \cdot \mathbf{u}) (\nabla_{\mathbf{x}} \mathbf{u})^T \\
        &+ (p - \alpha \rho (\nabla_{\mathbf{x}} \cdot (\rho^{-1} c_S^2 \nabla_\bx \rho))) \mathbb{I} + \alpha \rho^{-1} c_S^2 \nabla_{\mathbf{x}} \rho \otimes \nabla_{\mathbf{x}} \rho 
        \end{aligned} \\
        \sigma \bu
    \end{bmatrix}
    =
    \begin{bmatrix}
        0 \\
        - \alpha \nabla_{\bx}(\rho(\nabla_\bx \cdot \bu)^2) \\
        - \alpha \vartheta^{-1} \rho (\nabla_{\mathbf{x}} \cdot \mathbf{u})^3
    \end{bmatrix} \,, \\
    &\rho \mathbf{u} - \alpha \nabla_{\mathbf{x}}(\rho \nabla_{\mathbf{x}} \cdot \mathbf{u}) = \mathbf{m} \,,
\end{aligned}
\right.
\end{equation}
For clarity, the dissipative parts of the model are written on the right-hand side and the conservative parts on the left-hand side. In conservative form, the Hamiltonian IGR model is given by
\begin{equation}\label{eq:Hamiltonian-IGR}
\left\{
\begin{aligned}
    &\partial_t
    \begin{bmatrix}
        \rho \\
        \rho \bu \\
        E
    \end{bmatrix}
    +
    \nabla_{\bx} \cdot
    \begin{bmatrix}
        \rho \mathbf{u} \\
        \rho \mathbf{u} \otimes \mathbf{u} + \mathbb{\Sigma} \\
        E \mathbf{u} + \mathbb{\Sigma} \cdot \mathbf{u} + \alpha c_s^2 (\nabla_\bx \cdot \bu) \nabla_\bx \rho / \rho
    \end{bmatrix}
    =
    0 \,, \\
    &\begin{aligned}
    \rho^{-1} \mathbb{\Sigma} - \alpha \nabla_{\mathbf{x}} \cdot (\rho^{-1} \nabla_{\mathbf{x}} \cdot \mathbb{\Sigma} )\mathbb{I}
    &=
    \left( \rho^{-1} p - \alpha \nabla_{\mathbf{x}} \cdot (\rho^{-1} c_s^2 \nabla_\bx \rho) \right) \mathbb{I}  + \alpha (\mathrm{tr}^2(\nabla_\bx \bu) + \mathrm{tr}((\nabla_{\mathbf{x}} \mathbf{u})^2))\mathbb{I} \\
    &+ 
    \alpha
    \left[
    \left( \left( \frac{ \rho^2 \left. \partial_\rho^3 (\rho \varepsilon) \right|_s }{2} - \frac{ c_s^2}{2} \right) \left|\frac{\nabla_\bx \rho}{\rho} \right|^2 \right) \mathbb{I} + c_s^2 \frac{\nabla_\bx \rho}{\rho} \otimes \frac{\nabla_\bx \rho}{\rho}
    \right]
    \end{aligned} \,.
\end{aligned}
\right.
\end{equation}
This is identical to the thermodynamic HRE model other than the additional term in the entropic pressure equation. Hamiltonian IGR extends thermodynamic HRE such that it
\begin{itemize}
    \item formally recovers compressible Euler in the limit $\alpha \to 0$,
    \item yields entropy production whose sign matches that of the divergence of the flow,
    \item recovers the non-conservative IGR contribution to the momentum equation,
    \item maintains locally-conservative mass, energy, and momentum transport equations. 
\end{itemize}

%================================================================
\subsection{Discussion}

The dissipative extension to the Hamiltonian regularization discussed in this section is not uniquely specified, despite being essential for numerical stability and convergence as $\alpha \to 0$. The ideal manner of adding dissipation to the HRE model remains an open question, with the approaches discussed here being heuristic in nature. Adequate dissipation for numerical stability could be accomplished numerically by ensuring a cell-wise entropy inequality in a grid-based discretization \cite{CHAN2025114380}. At the continuous level, a likely culprit for the instability in the HIGR model observed in ~\Cref{subsec:1d_numerics} are the higher-order derivatives, particularly of the mass density. This suggests a possible road map for future inquiry. In pressureless IGR, we saw that the dissipation part of the regularization dissipates the generalized kinetic energy with the rate proportional to the divergence of the velocity field, see ~\Cref{eq:pressureless_IGR_kinetic_energy_transport}. While the Hamiltonian regularization alone might allow the $H(\mathrm{div})$ seminorm portion of the generalized kinetic energy to grow locally unbounded, the dissipative regularization prevents singularities from forming because the rate of dissipation is proportional to precisely the problematic term. A comparable dissipation process in the internal energy acting on the $H^1$-type regularization added to the potential energy in thermodynamic HRE is a promising candidate for future studies. Such a regularization would amount to localized mass diffusion, possibly resembling the parabolic regularizations discussed in \cite{guermond2014viscous}, or the Brenner-Navier-Stokes model \cite{BRENNER200511, BRENNER200560, BRENNER2006190, BRENNER201267} (which was shown to be a special case of the class of regularizations in \cite{guermond2014viscous}). Such regularizations likewise may be obtained using the metriplectic formalism \cite{zaidni2025metriplectic}, which greatly streamlines their derivation and analysis. 

%================================================================
\section{Conclusion}
In this paper, we considered five regularizations of the compressible Euler equations, two of which were previously introduced in the literature (pressureless IGR \cite{cao2023information} and thermodynamic IGR \cite{wilfong2025simulatingmanyenginespacecraftexceeding}) and three of which were introduced here (presureless HRE, thermodynamic HRE, and Hamiltonian IGR). By extracting out a conservative (Hamiltonian) component of the presureless IGR, this yielded the presureless HRE. By incorporating thermodynamics with an additional capillary energy, this led us to the thermodynamic HRE model, the first multi-dimensional, non-dispersive Hamiltonian regularization of the compressible Euler equations incorporating general thermodynamics. On the other hand, when incorporating thermodynamics into the IGR model, there is a non-desirable entropy production rate whose sign does not necessarily agree with the sign of the divergence of the flow. The interaction between the regularization and thermodynamics in IGR is therefore somewhat opaque, and forces with a conservative character nevertheless lead to dissipation in their coupling with internal energy. To obtain the regularizing effects of the IGR while transparently exposing the coupling between kinematics and thermodynamics, we extended the thermodynamic HRE model by including the non-conservative IGR term as a dissipative force, naturally expressed in terms of the metriplectic 4-bracket framework. This yielded the Hamiltonian IGR model, which is both a dissipative extension of the thermodynamic HRE model and an entropy-consistent thermodynamic extension of the IGR model. 

One of the objectives here was to construct a structured regularization of the compressible Euler equations in conservative variables, which is amenable to high-performance simulation. However, as seen in ~\Cref{subsec:1d_numerics}, the model suffers from drawbacks which make IGR preferable for practical purposes: HIGR adds significant model complexity, contains higher-order derivatives which lead to numerical instability, and exhibits non-physical oscillations associated with the interaction of shock and contact discontinuities in the HIGR model. This work demonstrates that even the notion of total energy in IGR is nuanced: its standard role as the generator of a Hamiltonian vector field, or equivalently as the conserved quantity associated with time-translation symmetry, is not applicable for dynamics defined by parallel transport with respect to the dual connection on a Hessian manifold. This work suggests that, to fully understand IGR, one must view it in terms of its intrinsic structure as a flow on a Hessian manifold, rather than relying on standard notions from Hamiltonian or Lagrangian mechanics. 

Indeed, the original derivation of IGR \cite{cao2023information} uniquely specified both the conservative and dissipative parts of the regularization, whereas the this work only provides a satisfactory derivation of the conservative, Hamiltonian part. The HRE model is rigidly and uniquely specified by the requirements that we recover compressible Euler as $\alpha \to 0$, that we recover the conservative part of IGR in the pressureless case, see ~\Cref{sec:pressureless-IGR}, and that the inclusion of thermodynamic pressure not induce unphysical dispersion, see ~\Cref{sec:dispersion-free-extension}. On the other hand, the dissipative extension of HRE to obtain HIGR only sought to recover the dissipative part of IGR, see ~\Cref{sec:full-IGR-dissipative-extension}. Whether other dissipative processes are needed to stabilize HRE and maintain regularity of the solution profile is not addressed by the theory developed in this work. Further development is needed to identify the ideal dissipative process to couple with the HRE model. As seen in ~\Cref{subsec:1d_numerics} and further discussed in ~\Cref{appendix:ablation_study}, the inclusion of dissipative terms is necessary to obtain physically-realistic solution profiles. Hence, while this work provides a preliminary investigation into the inclusion of thermodynamics in IGR, a perfectly satisfactory framework remains elusive. 

\begin{comment}
Although the final thermodynamic HRE model has global regularization terms that require an implicit solve, the introduced auxiliary variables remain semi-explicit index-1 in a DAE sense and require inversion of a symmetric elliptic operator for which many fast solvers exist. Moreover, the regularization constant is such that one can likely get away with a few preconditioned CG iterations to solve the regularization \cite{wilfong2025simulatingmanyenginespacecraftexceeding} rather than more advanced, e.g., multilevel methods, making the additional computational cost of the regularization reasonable. In a companion paper, we plan to compare numerical methods derived from these models and in particular, to demonstrate the efficacy of the thermodynamic consistency of the Hamiltonian IGR model to regularize shocks.
\end{comment}

\section*{Acknowledgments}
\sloppy WB was supported by the Laboratory Directed Research and Development program of Los Alamos National Laboratory under project number 20251151PRD1. BSS was supported by  the U.S. Department of Energy Office of Advanced Scientific Computing under contract No. 89233218CNA000001.
FS acknowledges support from the Air Force Office of Scientific Research under award number FA9550-23-1-0668 (Information Geometric Regularization for Simulation and Optimization of Supersonic Flow) and the Predictive Science Academic Alliance Program (PSAAP Award DE-NA0004261 - “The Center for Information Geometric Mechanics and Optimization
(CIGMO)”) managed by the NNSA (National
Nuclear Security Administration) Office of Advanced Simulation.
Los Alamos Laboratory Report LA-UR-25-32021. 

\section*{Data Availability Statement}
Data generated in this study is available from the authors upon reasonable request.

\section*{Declarations}
The authors declare no competing interests.

%================================================================
%\printbibliography
\bibliographystyle{plainnat}
\bibliography{references}

%==================================================_==============
\appendix

%================================================================
\section{A brief review of bracket formalisms in mechanics} \label{appendix:bracket_formalisms}

Here, we provide a brief review of the defining qualities of Hamiltonian and metriplectic systems. The purview of Hamiltonian mechanics encompasses the entirety of classical physics, and hence represents an enormous body of knowledge. The following appendix is a minimal specification of such systems. For further reading on general Hamiltonian systems, see \cite{arnol2013mathematical, abraham2008foundations, marsden2013introduction}, and for its specific use in modeling fluid and plasma systems, see \cite{morrison1982poisson, morrison1998hamiltonian}. For further information on metriplectic formalism, see \cite{morrison1984bracket, MORRISON1986410, kaufman1984dissipative}, and for the independently-developed, formally-similar GENERIC formalism, see \cite{grmela1984bracket, ottinger1997dynamics, grmela1997dynamics}. For further information on the metriplectic $4$-bracket formalism, see \cite{PhysRevE.109.045202, zaidni2024thermodynamically, bressan2025metriplectic, zaidni2025metriplectic}. 

%----------------------------
\subsection{The Hamiltonian formalism}

The Hamiltonian formulation of mechanics is prescribed by a Hamiltonian, $H$, a functional of the dynamical fields, and a Poisson bracket, $\{\cdot, \cdot\}$, a bilinear map on functionals of the fields. A Poisson bracket has the following properties: $\forall F,G,H$ and $\forall a,b \in \mathbb{R}$,
\begin{equation}
\begin{aligned}
    &\{F, aG + bH\} = a \{F,G\} + b \{F, H\} \,, \\
    &\{F, G\} = - \{G, F\} \,,
    \\
    &\{ F, \{G, H\} \} + \{G, \{H, F \} \} + \{ H, \{F, G\} \} = 0 \,, \\
    &\{F, G H \} = G \{F, H \} + \{F, G \} H \,.
\end{aligned}
\end{equation}
For any observable, $F$, its evolution is prescribed by $\dot{F} = \{F, H\}$. Two kinds of conservation laws are conveniently obtained from this framework. The first class of conservation laws are functionals, $I$, which Poisson commute with the Hamiltonian; because of the antisymmetry of the bracket, we find that
\begin{equation}
    \dot{I} 
    =
    \{I, H\}
    =
    \{H, I\}
    =
    - \{H, I\} = 0 \,.
\end{equation}
Hamiltonian systems in non-canonical coordinates yield a second class of conservation laws known as Casimir invariants: a Casimir invariant is a degeneracy of the Poisson bracket, i.e., a functional $C$ such that $\{F, C\} = 0$ for all $F$.

%----------------------------
\subsection{The metriplectic formalism: a dissipative extension of Hamiltonian mechanics} \label{appendix:metriplectic}

The metriplectic formalism is a dissipative extension of Hamiltonian mechanics \cite{MORRISON1986410}. The $4$-bracket approach \cite{PhysRevE.109.045202} is a systematic way to construct metriplectic models. The evolution equations are prescribed by a Hamiltonian $H$ which is a functional of the dynamical fields, a Poisson bracket $\{\cdot, \cdot\}$, an entropy $S$ which is a Casimir invariant of the Poisson bracket which generates the dissipative dynamics, and a metriplectic $4$-bracket $(\cdot, \cdot; \cdot, \cdot)$ defined as follows. The metriplectic $4$-bracket is a $4$-linear map on the algebra of functionals with the following properties: for all $F,K,G,N$,
\begin{equation}\label{eq:4-bracket-symmetries}
\begin{aligned}
    (F, K; G, N) &= - (K, F; G, N) \,,
    \\
    (F, K; G, N) &= (G, N; F, K) \,,
    \\
    (FH, K; G, N) &= F(H, K; G, N) + (F, K; G, N) H \,,
    \\
    (F, G; F, G) &\geq 0 \,.
\end{aligned}
\end{equation}
For any observable, $F$, its evolution is prescribed by $\dot{F} = \{F, H\} + (F, H; S, H)$. 

Thermodynamic consistency is guaranteed in the metriplectic formalism by the following properties: (i) the entropy is a Casimir invariant of the Poisson bracket, $\{F,S\} = 0$ $\forall F$; (ii) antisymmetry of the Poisson bracket: $\{F,G\} = - \{G,F\}$; (iii) antisymmetry of the $4$-bracket, $(F, K; G,N) = -(K, F; G, N)$; (iv) finally, non-negative entropy production is ensured by the semi-definiteness of the bracket: $(S, H; S, H) \geq 0$. Together, these ensure thermodynamic consistency:
\begin{equation}
    \dot{H} = \{H, H\} + (H, H; S, H) = 0 \,,
    \quad \text{and} \quad
    \dot{S} = \{S, H\} + (S, H; S, H) = (S, H; S, H) \geq 0. 
\end{equation}

In fluids and kinetic models, it is frequently convenient to design the metriplectic structure by prescribing by a $4$-bracket constructed using the Kulkarni--Nomizu product (see e.g. \cite{PhysRevE.109.045202, zaidni2024thermodynamically, bressan2025metriplectic, zaidni2025metriplectic}). To do so, given two symmetric bilinear maps, $A(F,G)$ and $B(F,G)$, define the Kulkarni--Nomizu product
\begin{equation}
\begin{aligned}
    (A\varowedge B)(F, K, G, N)
    =
    A(F, G) B(K, N) 
    &- A(F, N) B(G, K) \\
    &+ B(F, G) A(K, N) 
    - B(F, N) A(G, K) \,.
\end{aligned}
\end{equation}
From this product, one can define a $4$-bracket satisfying the properties \eqref{eq:4-bracket-symmetries} by
\begin{equation}
    (F, K; G, N) 
    =
    \int_\Omega (A\varowedge B)(F, K, G, N) \mathsf{d}^d \mathbf{x} \,.
\end{equation}
A rationale for choosing the operators $A$ and $B$ comes from a closer examination of the implied dissipative flow:
\begin{equation} \label{eq:dissipative_flow}
\begin{aligned}
    (F, S)_H
    &=
    \int_\Omega
    (A\varowedge B)(F, K, G, N)
    \mathsf{d} x 
    =
    \int_\Omega
    \left(
    - A(F, H) B(S, H) 
    + B(F, S) A(H, H) 
    \right)
    \mathsf{d}^d \mathbf{x} \,.
\end{aligned}
\end{equation}
By letting $B = F_\sigma G_\sigma$, we find that $\dot{S} = A(H,H)$ is the entropy production rate while $A(F,H)$ gives rise to the reciprocal couplings which ensure energy conservation. Hence, given the desired entropy production rate of a model, then it is straightforward to determine the appropriate $A$ operator. This rationale for finding the metriplectic $4$-bracket is generally applicable to many compressible flow models, and directly connects with standard arguments from non-equilibrium thermodynamics \cite{de2013non}, e.g., the force and flux pairs from Onsager reciprocity \cite{CoMo_2020}.

%----------------------------
\subsection{The energy-Casimir method: linear stability in noncanonical Hamiltonian systems} \label{appendix:energy_casimir}

We briefly review some relevant elements of Hamiltonian perturbation theory, in the context of finite dimensional systems for simplicity of presentation. Consider a non-canonical Hamiltonian system with phase space $\mathbb{R}^m$, Hamiltonian $H: \mathbb{R}^m \to \mathbb{R}$, and Poisson matrix $J(z): \mathbb{R}^m \times \mathbb{R}^m \to \mathbb{R}$. The Hamiltonian flow is given by $\dot{z} = J(z) D_z H$. Suppose now that $z_0$ is an equilibrium, so that $J(z_0) D_z H(z_0) = 0$. Infinitesimal perturbations from this equilibrium evolve as
\begin{equation}
    \delta \dot{z}
    =
    \lim_{\epsilon \to 0} \left( J(z_0 + \epsilon \delta z) D_z H(z_0 + \epsilon \delta z) \right)
    =
    (DJ(z_0) \delta z) D H_z (z_0)
    +
    J(z_0) D^2 H(z_0) \delta z \,.
\end{equation}
Now, suppose $J(z)$ has a nontrivial nullspace, and that we have found functionals whose gradients span the nullspace, $\{ \mathcal{C}_i \}$, called Casimir invariants. Then we may shift the energy by scalar multiples of the Casimir invariants without changing the dynamics:
\begin{equation}
    \mathcal{H}(z)
    =
    H(z) + \sum_i \lambda_i \mathcal{C}_i(z) \,.
\end{equation}
The energy Casimir method \cite{morrison1980noncanonical, morrison1998hamiltonian, morrison1982poisson, holm1985nonlinear, arnold1966geometrie, arnold1966priori} seeks a particular set of coefficients such that
\begin{equation}
    D_z \mathcal{H}(z_0) = 0 
    \implies
    \delta \dot{z} = J(z_0) D^2 \mathcal{H}(z_0) \delta z \,.
\end{equation}
This reduces linear stability analysis to finding the eigenspace of the matrix $J(z_0) D^2 \mathcal{H}(z_0)$. 

\begin{remark}
    Matrices of the form $J(z_0) D^2 \mathcal{H}(z_0)$ are known as Hamiltonian matrices. Their spectral properties are well-studied \cite{mackay2020stability, morrison1998hamiltonian, kreuin1950generalization}. 
\end{remark}

%================================================================
\section{Properties of the IGR elliptic operators} \label{appendix:igr_elliptic_operators}

The information geometric regularization may be interpreted as a dual connection geodesic flow on a Hessian manifold with a logarithmic barrier enforcing positivity of the Jacobian determinant of the flow map. The Hamiltonian regularization may be interpreted as an elliptic regularization of the kinetic energy. Both models add an additional elliptic equation to be solved alongside the compressible Euler equations. This appendix establishes some useful properties of the elliptic operator in question. 

%================================================================
\subsection{A maximum principle for the scalar elliptic operator} \label{appendix:maximum_principle}
Let $\rho$ be a positive function and consider the elliptic equation
\begin{equation}
    \rho^{-1} \phi - \alpha \nabla_{\mathbf{x}} \cdot ( \rho^{-1} \nabla_{\mathbf{x}} \phi) = g \,.
\end{equation}
We wish to establish a maximum principle, so that on any subdomain, $g > 0$ implies $\phi > 0$. 

\begin{theorem}[Maximum principle for the weighted scalar elliptic operator]
\label{thm:weighted_max_principle}
Let $\Omega \subset \mathbb{R}^n$ be a bounded domain with Lipschitz boundary. Let $\rho \in C^1(\overline{\Omega})$ be uniformly bounded from below by $\rho_{\min} > 0$, i.e.,
$$ 0 < \rho_{\min} < \rho(x) \text{ for all } x \in \overline{\Omega}, $$
and let $\alpha>0$. Consider the elliptic problem
\begin{equation}
\label{eq:weighted_elliptic}
    \mathcal{L}_\alpha(\rho) \phi := \rho^{-1} \phi - \alpha \nabla_{\mathbf{x}} \cdot (\rho^{-1} \nabla_{\mathbf{x}} \phi) = g(x) 
    \quad \text{in } \Omega \,,
\end{equation}
with Dirichlet boundary conditions
\begin{equation}
    \phi \geq 0 \quad \text{on } \partial \Omega \,,
\end{equation}
and $g \in C(\overline{\Omega})$. Then
\begin{equation}
    g(x) \geq 0 \quad \text{in } \Omega \quad \implies \quad \phi(x) \geq 0 \quad \text{in } \overline{\Omega} \,.
\end{equation}
Moreover, if $\left. \phi \right|_{\partial \Omega} \geq 0$ is not identically zero and $g(x) > 0$ somewhere in $\Omega$, then $\phi > 0$ in $\Omega$.
\end{theorem}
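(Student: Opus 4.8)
The plan is to establish the maximum principle by a standard weak-formulation argument combined with the classical strong maximum principle, adapted to the weighted operator $\mathcal{L}_\alpha(\rho)$. The key observation is that $\mathcal{L}_\alpha(\rho)$ is a uniformly elliptic operator in divergence form with a nonnegative zeroth-order coefficient, precisely the setting in which both the weak and strong maximum principles apply. Writing $a(x) = \alpha \rho(x)^{-1}$ and $c(x) = \rho(x)^{-1}$, the equation reads $-\nabla_{\mathbf{x}} \cdot (a \nabla_{\mathbf{x}} \phi) + c\,\phi = g$, and the hypotheses $0 < \rho_{\min} < \rho \in C^1(\overline\Omega)$ guarantee that $a$ is bounded between positive constants (so the operator is uniformly elliptic) and that $c > 0$ is bounded below.

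\textbf{Step 1: the weak nonnegativity statement.} First I would take the weak formulation: for $\phi \in H^1(\Omega)$ with $\phi \geq 0$ on $\partial\Omega$ in the trace sense, test against $\phi^- = \max(-\phi, 0) \in H^1_0(\Omega)$ (this is a valid test function since $\phi^-$ vanishes on $\partial\Omega$ by the boundary hypothesis). This yields
\begin{equation}
    -\int_\Omega a\,|\nabla_{\mathbf{x}} \phi^-|^2 \,\mathsf{d}^d\mathbf{x} - \int_\Omega c\,(\phi^-)^2 \,\mathsf{d}^d\mathbf{x} = \int_\Omega g\,\phi^- \,\mathsf{d}^d\mathbf{x} \geq 0,
\end{equation}
using $\nabla_{\mathbf{x}} \phi \cdot \nabla_{\mathbf{x}} \phi^- = -|\nabla_{\mathbf{x}}\phi^-|^2$ on $\{\phi < 0\}$ and $g \geq 0$, $\phi^- \geq 0$. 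Since $a \geq a_{\min} > 0$ and $c \geq c_{\min} > 0$, the left side is $\leq -c_{\min}\|\phi^-\|_{L^2}^2 \leq 0$, forcing $\phi^- \equiv 0$, hence $\phi \geq 0$ in $\Omega$. This is the routine part.

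\textbf{Step 2: strict positivity via the strong maximum principle.} For the second claim I would invoke the strong maximum principle for uniformly elliptic operators with nonnegative zeroth-order term (see e.g. Gilbarg--Trudinger). Suppose $\phi \geq 0$ attains the value $0$ at an interior point $x_0 \in \Omega$; then $x_0$ is an interior minimum. The strong maximum principle for $-\nabla_{\mathbf{x}} \cdot (a \nabla_{\mathbf{x}} \phi) + c\phi = g \geq 0$ with $c \geq 0$ implies that if $\phi$ attains a nonpositive interior minimum, then $\phi$ is constant on the connected component; here the minimum value is $0$, so $\phi \equiv 0$ on that component, whence $g = \mathcal{L}_\alpha(\rho)\,0 = 0$ there, contradicting $g(x) > 0$ somewhere (assuming $\Omega$ connected; otherwise argue componentwise, noting the boundary data is nonnegative and not identically zero so at least one component sees nontrivial forcing or boundary data). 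Therefore $\phi > 0$ throughout $\Omega$. If one prefers to avoid quoting the strong maximum principle directly, an alternative is Harnack's inequality applied to the nonnegative supersolution $\phi$, or a direct barrier argument.

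\textbf{Main obstacle.} The analytically routine steps are the weak-formulation sign argument; the only genuine subtlety is ensuring the hypotheses on $\rho$ and $\partial\Omega$ are exactly what the strong maximum principle requires—uniform ellipticity needs $\rho$ bounded above as well as below, which follows from $\rho \in C^1(\overline\Omega)$ on the compact set $\overline\Omega$, and the Lipschitz boundary guarantees the trace theory and $H^1_0$ characterization needed in Step 1. A secondary point to handle carefully is the precise sense of the boundary condition $\phi \geq 0$ on $\partial\Omega$ (trace sense for $H^1$ solutions, pointwise for $C(\overline\Omega)$ solutions) and, in Step 2, the case where $\Omega$ is disconnected, which requires tracking where the nontrivial data (either $g > 0$ or $\phi|_{\partial\Omega} \not\equiv 0$) lives relative to the components.
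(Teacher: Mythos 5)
Your proof is correct, but it takes a different route from the paper's. The paper argues classically and pointwise: it expands the divergence term, assumes $\phi$ attains a negative interior minimum at some $x_0$, uses $\nabla_{\mathbf{x}}\phi(x_0)=0$ and $\Delta_{\mathbf{x}}\phi(x_0)\geq 0$ to conclude $\mathcal{L}_\alpha(\rho)\phi(x_0)\leq \rho^{-1}(x_0)\phi(x_0)<0$, contradicting $g\geq 0$; the strict-positivity claim is then dispatched in one sentence by asserting that a constant solution is impossible. Your argument instead works in the weak formulation, testing against $\phi^-\in H^1_0(\Omega)$ to kill the negative part, and then invokes the strong maximum principle for divergence-form uniformly elliptic operators with nonnegative zeroth-order coefficient. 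The trade-offs: the paper's argument is elementary and self-contained but implicitly assumes $\phi\in C^2$ and that the minimum is attained (fine on $\overline{\Omega}$ compact, but the regularity is never stated), and its treatment of strict positivity is the weakest link --- ``a constant solution is impossible'' does not by itself rule out $\phi$ vanishing at an isolated interior point. Your version handles weak $H^1$ solutions, makes the uniform-ellipticity hypotheses explicit, and gives a genuinely rigorous strict-positivity step via the strong maximum principle; the price is an appeal to Gilbarg--Trudinger rather than a hands-on computation. Your observation about disconnected $\Omega$ is a real (if minor) gap in the theorem as stated --- on a component where $g\equiv 0$ and the boundary trace vanishes, $\phi\equiv 0$ there --- which the paper's proof does not address either; connectedness should be assumed or the conclusion localized to the component containing the point where $g>0$.
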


\begin{proof}
Rewrite the operator by expanding the divergence term:
\begin{equation}
    \nabla_{\mathbf{x}} \cdot (\rho^{-1} \nabla_{\mathbf{x}} \phi) = \rho^{-1} \Delta_{\mathbf{x}} \phi - \rho^{-2} \nabla_{\mathbf{x}} \rho \cdot \nabla_{\mathbf{x}} \phi \,.
\end{equation}
Thus,
\begin{equation}
    \mathcal{L}_\alpha(\rho) \phi = \rho^{-1} \phi - \alpha \rho^{-1} \Delta_{\mathbf{x}} \phi + \alpha \rho^{-2} \nabla_{\mathbf{x}} \rho \cdot \nabla_{\mathbf{x}} \phi
    = \rho^{-1} \big( \phi - \alpha \Delta_{\mathbf{x}} \phi + \alpha \rho^{-1} \nabla_{\mathbf{x}} \rho \cdot \nabla_{\mathbf{x}} \phi \big) \,.
\end{equation}
Suppose, by way of contradiction, that $\phi$ attains a negative minimum at an interior point $x_0 \in \Omega$, i.e.,
\begin{equation}
    \phi(x_0) = \min_{x\in\overline{\Omega}} \phi(x) < 0 \,.
\end{equation}
At $x_0$, we have
\begin{equation}
    \nabla_{\mathbf{x}} \phi(x_0) = 0, \quad \Delta_{\mathbf{x}} \phi(x_0) \geq 0 \,.
\end{equation}
Evaluate the second-order term at $x_0$:
\begin{equation}
    - \alpha \Delta_{\mathbf{x}} \phi(x_0) + \alpha \rho^{-1}(x_0) \nabla_{\mathbf{x}} \rho(x_0) \cdot \nabla_{\mathbf{x}} \phi(x_0)
    = - \alpha \Delta_{\mathbf{x}} \phi(x_0) \leq 0 \,,
\end{equation}
since $\nabla_{\mathbf{x}} \phi(x_0) = 0$. Thus, at $x_0$,
\begin{equation}
    \mathcal{L}_\alpha(\rho) \phi(x_0) = \rho^{-1}(x_0) \big( \phi(x_0) - \alpha \Delta_{\mathbf{x}} \phi(x_0) + \alpha \rho^{-1} \nabla_{\mathbf{x}} \rho \cdot \nabla_{\mathbf{x}} \phi \big)
    \leq \rho^{-1}(x_0) \phi(x_0) < 0 \,.
\end{equation}
But \eqref{eq:weighted_elliptic} gives $\mathcal{L}_\alpha(\rho) \phi(x_0) = g(x_0) \geq 0$, a contradiction. Therefore, $\phi$ cannot attain a negative minimum in the interior of $\Omega$. Combined with the boundary condition $\phi \geq 0$ on $\partial \Omega$, we conclude
\begin{equation}
    \phi(x) \geq 0 \quad \forall x \in \overline{\Omega} \,.
\end{equation}
If $\left. \phi \right|_{\partial \Omega} \geq 0$ and not identically zero, a constant solution is impossible, and thus $\phi > 0$ in the interior. Hence, if $g(x) > 0$ somewhere in $\Omega$, then $\phi > 0$ in $\Omega$. 
\end{proof}

%================================================================
\subsection{Entropic pressure right-hand side decomposition} \label{appendix:rhs_decomp}

Here, we decompose $\mathrm{tr}^2(\nabla_{\mathbf{x}} \mathbf{u})$ and $\mathrm{tr}(\nabla_{\mathbf{x}} \mathbf{u}^2)$. Consider the symmetric-antisymmetric decomposition $\nabla_{\mathbf{x}} \mathbf{u} = \mathbb{S} + \mathbb{\Omega}$, where 
\begin{equation}
    \mathbb{S} = \mathrm{symm}(\nabla_{\mathbf{x}} \mathbf{u}) = \frac{1}{2}(\nabla_{\mathbf{x}} \mathbf{u} + (\nabla_{\mathbf{x}} \mathbf{u})^T)
\end{equation}
is the symmetric strain tensor and $\mathbb{\Omega} = \nabla_{\mathbf{x}} \mathbf{u} - \mathbb{S}$ is the vorticity tensor. Then,
\begin{align*}
    \mathrm{tr}^2(\nabla_{\mathbf{x}} \mathbf{u}) &= \mathrm{tr}^2(\mathbb{S}), \\
    \mathrm{tr}(\nabla_{\mathbf{x}} \mathbf{u}^2) &= \mathrm{tr}(\mathbb{S}^2) + 2 \mathrm{tr}(\mathbb{S}\mathbb{\Omega}) + \mathrm{tr}(\mathbb{\Omega}^2) = \mathrm{tr}(\mathbb{S}^2) + \mathrm{tr}(\mathbb{\Omega}^2),
\end{align*}
We can further decompose $\mathbb{S} = \mathbb{S}_D + \mathbb{S}_I$ where $\mathbb{S}_D = \mathbb{S} - \mathbb{S}_I$ is the deviatoric strain and $\mathbb{S}_I = \mathrm{tr}(\mathbb{S})I/d$ is the isotropic strain (here, $d$ is the dimension of the domain $\Omega \subset \mathbb{R}^d$). Then,
\begin{align*}
    \mathrm{tr}^2(\nabla_{\mathbf{x}} \mathbf{u}) &= \mathrm{tr}^2(\mathbb{S}), \\
    \mathrm{tr}(\nabla_{\mathbf{x}} \mathbf{u}^2) &= \mathrm{tr}(\mathbb{S}_D^2) + \frac{1}{d^2}\mathrm{tr}^2(\mathbb{S}) + \mathrm{tr}(\mathbb{\Omega}^2);
\end{align*}
thus, the non-conservative component $\mathrm{tr}^2(\nabla_\bx \bu)$ of the right-hand side of the entropic pressure equation in IGR has a contribution from the isotropic strain, whereas the conservative component $\mathrm{tr}(\nabla_\bx\bu^2)$ has a weaker contribution from the isotropic strain, plus a contribution from the deviatoric strain and the vorticity. In dimension $d=1$, the deviatoric strain and vorticity vanish, and both have equivalent isotropic strain contributions.

Denote by $\|\mathbb{A}\|_F := \mathrm{tr}(\mathbb{A}\mathbb{A}^T)^{1/2}$ the Frobenius norm. Then, using symmetry of $\mathbb{S}$ and asymmetry of $\mathbb{\Omega}$,
$$ \mathrm{tr}(\nabla_{\mathbf{x}} \mathbf{u}^2) = \|\mathbb{S}\|^2_F - \|\mathbb{\Omega}\|^2_F. $$
Thus, we have a positive component due to the strain and a negative component due to vorticity; in particular, large vorticity can lead to sign indefiniteness. 

%================================================================
\subsection{Commutation relations of weighted elliptic operators} \label{appendix:commutation_relations}

First, we establish some helpful commutation relations. Consider the vector and scalar elliptic operators:
\begin{equation}
    \mathcal{L}^v_\alpha(\rho) \mathbf{u}
    =
    \rho \mathbf{u}
    -
    \alpha \nabla_{\mathbf{x}} (\rho \nabla_{\mathbf{x}} \cdot \mathbf{u}) \,,
    \quad \mathsf{and} \quad
    \mathcal{L}^s_\alpha(\rho) f
    =
    \rho f
    -
    \alpha \nabla_{\mathbf{x}} \cdot ( \rho \nabla_{\mathbf{x}} f) \,.
\end{equation}
If we interpret these operators in a suitably weak sense, they have the domains $H_0(\mathrm{div}_{\mathbf{x}},\Omega)$ and $H^1_0(\Omega)$, respectively. Vanishing trace (or periodic boundary conditions) ensures invertibility of these operators. If we let $M_\rho: g \mapsto \rho g$ be the multiplication operator, then we see that these operators may be written:
\begin{equation}
    \mathcal{L}^v_\alpha(\rho)
    =
    M_\rho - \alpha \, \mathrm{grad}_{\mathbf{x}} \circ M_\rho \circ \mathrm{div}_{\mathbf{x}} \,,
    \quad \mathsf{and} \quad
    \mathcal{L}^s_\alpha(\rho)
    =
    M_\rho - \alpha \, \mathrm{div}_{\mathbf{x}} \circ M_\rho \circ \mathrm{grad}_{\mathbf{x}} \,.
\end{equation}
The following commutation relation (or rather a tensorial generalization) is used in the information geometric regularization. 

\begin{proposition} \label{prop:elliptic_identity}
On the domain $H^1_0(\Omega)$, the following relation holds:
\begin{equation}
    M_\rho \circ (\mathcal{L}^v_\alpha(\rho))^{-1} \circ \mathrm{grad}_{\mathbf{x}} = \mathrm{grad}_{\mathbf{x}} \circ (\mathcal{L}^s_\alpha(\rho^{-1}))^{-1} \circ M_\rho^{-1} \,.
\end{equation}
\end{proposition}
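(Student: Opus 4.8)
The plan is to prove the operator identity by reducing it, via a change of variables, to an elementary statement about the two scalar elliptic operators $\mathcal{L}^s_\alpha(\rho)$ and $\mathcal{L}^s_\alpha(\rho^{-1})$, using the fact that $\mathrm{grad}_\bx$ and $\mathrm{div}_\bx$ are (formal) adjoints and that $M_\rho$ is self-adjoint. Concretely, I would first rearrange the claimed identity into the equivalent form
\begin{equation}
    \mathrm{grad}_\bx^{-1}\text{-free statement:}\qquad
    M_\rho \circ (\mathcal{L}^v_\alpha(\rho))^{-1} \circ \mathrm{grad}_\bx \circ M_\rho
    =
    \mathrm{grad}_\bx \circ (\mathcal{L}^s_\alpha(\rho^{-1}))^{-1} \,,
\end{equation}
and then, since both sides map into the range of $\mathrm{grad}_\bx$ and both $\mathcal{L}^v_\alpha(\rho)$ and $\mathcal{L}^s_\alpha(\rho^{-1})$ are invertible on the relevant spaces (by the trace/periodicity hypotheses noted in the text), it suffices to verify the ``un-inverted'' identity
\begin{equation}
    M_\rho \circ \mathrm{grad}_\bx \circ \mathcal{L}^s_\alpha(\rho^{-1})
    =
    \mathcal{L}^v_\alpha(\rho) \circ M_\rho \circ \mathrm{grad}_\bx
    \qquad\text{on } H^1_0(\Omega)\cap\{\text{sufficient regularity}\}.
\end{equation}
That is, take any $f$, set $g = \mathcal{L}^s_\alpha(\rho^{-1}) f$, so $(\mathcal{L}^s_\alpha(\rho^{-1}))^{-1} g = f$; applying $\mathrm{grad}_\bx$ gives the right-hand side's input, and one checks the left-hand side produces the same thing. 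So the whole proof collapses to checking one algebraic identity between explicit differential operators.

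The key computation is then to expand both operators using the definitions
$\mathcal{L}^s_\alpha(\rho^{-1}) = M_{\rho^{-1}} - \alpha\,\mathrm{div}_\bx \circ M_{\rho^{-1}} \circ \mathrm{grad}_\bx$ and
$\mathcal{L}^v_\alpha(\rho) = M_\rho - \alpha\,\mathrm{grad}_\bx \circ M_\rho \circ \mathrm{div}_\bx$, and to show
\begin{equation}
\begin{aligned}
    M_\rho \,\mathrm{grad}_\bx\!\left( M_{\rho^{-1}} f - \alpha\,\mathrm{div}_\bx( M_{\rho^{-1}} \mathrm{grad}_\bx f)\right)
    &= M_\rho \,\mathrm{grad}_\bx\!\left(M_{\rho^{-1}} \mathcal{L}^s_\alpha(\rho^{-1})^{-1} g\right),\\
    \mathcal{L}^v_\alpha(\rho)\, M_\rho\, \mathrm{grad}_\bx f
    &= M_\rho \mathrm{grad}_\bx f - \alpha\,\mathrm{grad}_\bx\!\left(M_\rho\,\mathrm{div}_\bx( M_\rho \mathrm{grad}_\bx f)\right),
\end{aligned}
\end{equation}
and match them term by term. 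The zeroth-order terms agree trivially ($M_\rho M_{\rho^{-1}} = \mathrm{Id}$, so both produce $\mathrm{grad}_\bx f$ modulo the product-rule bookkeeping). The $O(\alpha)$ terms require using the Leibniz rule $\mathrm{grad}_\bx(\rho^{-1}\psi) = \rho^{-1}\mathrm{grad}_\bx\psi + \psi\,\mathrm{grad}_\bx(\rho^{-1})$ together with $\mathrm{grad}_\bx(\rho^{-1}) = -\rho^{-2}\mathrm{grad}_\bx\rho$, and the corresponding identity for $\mathrm{div}_\bx(\rho\,\mathbf{v})$, to see that the ``extra'' first-derivative-of-$\rho$ terms on the two sides cancel. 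This is essentially the one-line observation that, on gradients, the operator $\mathcal{L}^s_\alpha(\rho^{-1})$ conjugated by $M_\rho\,\mathrm{grad}_\bx$ equals $\mathcal{L}^v_\alpha(\rho)$; it is a purely local pointwise identity once one writes everything in coordinates.

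I expect the main obstacle to be bookkeeping rather than anything conceptual: correctly tracking which $\rho$-weight sits inside which derivative, and confirming that the boundary/domain hypotheses ($H^1_0$, $H_0(\mathrm{div})$, $\rho \in C^1$ bounded below) are exactly what is needed for the invertibility steps and for the formal adjointness of $\mathrm{grad}_\bx$ and $-\mathrm{div}_\bx$ to be rigorous. A secondary subtlety is domain/range matching — one must note that $\mathrm{grad}_\bx$ maps $H^1_0(\Omega)$ into $H_0(\mathrm{div}_\bx,\Omega)$ (or the appropriate subspace on which $\mathcal{L}^v_\alpha(\rho)$ is invertible), and that $M_\rho$ preserves the needed spaces since $\rho$ is bounded away from $0$ and $\infty$ with a $C^1$ bound; with those remarks in place the identity holds on the stated domain. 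If cleaner, one can alternatively present the argument variationally: show that $u := M_\rho\,\mathrm{grad}_\bx\,\mathcal{L}^s_\alpha(\rho^{-1})^{-1} M_\rho^{-1} h$ solves $\mathcal{L}^v_\alpha(\rho) u = \mathrm{grad}_\bx h$ in the weak sense by testing against $\mathrm{grad}_\bx\phi$ and integrating by parts, which sidesteps some of the pointwise manipulation but requires the same Leibniz identities.
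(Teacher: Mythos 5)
Your overall strategy---strip away the inverses and verify a pointwise algebraic identity between the explicit differential operators---is sound and is in substance the same move the paper makes, but the specific ``un-inverted'' identity you reduce to is false, and the failure is not a bookkeeping issue that cancels out. You claim it suffices to check
\[
M_\rho \circ \mathrm{grad}_{\mathbf{x}} \circ \mathcal{L}^s_\alpha(\rho^{-1}) \;=\; \mathcal{L}^v_\alpha(\rho) \circ M_\rho \circ \mathrm{grad}_{\mathbf{x}} \,.
\]
Applied to a scalar $h$, the left-hand side has zeroth-order (in $\alpha$) part $\rho\,\nabla_{\mathbf{x}}(\rho^{-1}h) = \nabla_{\mathbf{x}} h - h\,\rho^{-1}\nabla_{\mathbf{x}}\rho$, while the right-hand side has zeroth-order part $\rho\,(\rho\,\nabla_{\mathbf{x}} h) = \rho^2\nabla_{\mathbf{x}} h$. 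These disagree at $O(1)$, so no cancellation against the $O(\alpha)$ terms can rescue it; your assertion that ``the zeroth-order terms agree trivially'' is precisely where the argument breaks. There is also a logical mismatch: even if that identity held, inverting it would give $(\mathcal{L}^v_\alpha(\rho))^{-1}\circ M_\rho\circ\mathrm{grad}_{\mathbf{x}} = M_\rho\circ\mathrm{grad}_{\mathbf{x}}\circ(\mathcal{L}^s_\alpha(\rho^{-1}))^{-1}$, which is not the rearranged form $M_\rho\circ(\mathcal{L}^v_\alpha(\rho))^{-1}\circ\mathrm{grad}_{\mathbf{x}}\circ M_\rho = \mathrm{grad}_{\mathbf{x}}\circ(\mathcal{L}^s_\alpha(\rho^{-1}))^{-1}$ that you correctly derived---the $M_\rho$ factors end up in different places.

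The identity you actually need is
\[
\mathrm{grad}_{\mathbf{x}}\circ M_\rho\circ\mathcal{L}^s_\alpha(\rho^{-1}) \;=\; \mathcal{L}^v_\alpha(\rho)\circ M_{\rho^{-1}}\circ\mathrm{grad}_{\mathbf{x}} \,,
\]
and once the weights are placed this way the verification is a genuine one-liner: both sides applied to $h$ equal $\nabla_{\mathbf{x}} h - \alpha\nabla_{\mathbf{x}}\bigl(\rho\,\nabla_{\mathbf{x}}\cdot(\rho^{-1}\nabla_{\mathbf{x}} h)\bigr)$, because $\rho\cdot(\rho^{-1}h)=h$ and $\rho\cdot\rho^{-1}\nabla_{\mathbf{x}} h=\nabla_{\mathbf{x}} h$ cancel as functions before any derivative acts, so no Leibniz expansion is required. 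Pre- and post-composing with the two inverses then yields the proposition. This corrected version is essentially the paper's proof read in reverse: the paper sets $\mathbf{u}=(\mathcal{L}^v_\alpha(\rho))^{-1}\nabla_{\mathbf{x}} f$, observes $\rho\mathbf{u}=\nabla_{\mathbf{x}} g$ with $g=f+\alpha\rho\,\nabla_{\mathbf{x}}\cdot\mathbf{u}$, and checks $\mathcal{L}^s_\alpha(\rho^{-1})g=\rho^{-1}f$.
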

\begin{proof}
Let $f$ be an arbitrary scalar function, and let $\mathbf{u} = (\mathcal{L}^v_\alpha(\rho))^{-1} \nabla_{\mathbf{x}} f$. Then
\begin{equation}
    \rho \mathbf{u} - \alpha \nabla_{\mathbf{x}} (\rho \nabla_{\mathbf{x}} \cdot \mathbf{u}) = \nabla_{\mathbf{x}} f \,.
\end{equation}
Hence, rearranging, we find that
\begin{equation}
    (M_\rho \circ (\mathcal{L}^v_\alpha(\rho))^{-1} \circ \nabla_{\mathbf{x}}) f = \rho \mathbf{u} = \nabla_{\mathbf{x}} (f + \alpha \rho \nabla_{\mathbf{x}} \cdot \mathbf{u}) \,.
\end{equation}
Define the scalar function $g = f + \alpha \rho \nabla_{\mathbf{x}} \cdot \mathbf{u}$. Then $\nabla_{\mathbf{x}} g = \rho \mathbf{u} = M_\rho( \mathcal{L}^v_\alpha(\rho))^{-1} \nabla_{\mathbf{x}} f$. But note that
\begin{equation}
\begin{aligned}
    \mathcal{L}^s_\alpha(\rho^{-1}) g
    &=
    \rho^{-1} g - \alpha \nabla_{\mathbf{x}} \cdot(\rho^{-1} \nabla_{\mathbf{x}} g)
    =
    \rho^{-1} (f + \alpha \rho \nabla_{\mathbf{x}} \cdot \mathbf{u}) - \alpha \nabla_{\mathbf{x}} \cdot(\rho^{-1} (\rho \mathbf{u})) \\
    &=
    \rho^{-1} f + \alpha \nabla_{\mathbf{x}} \cdot \mathbf{u} - \alpha \nabla_{\mathbf{x}} \cdot \mathbf{u}
    =
    \rho^{-1} f \,.
\end{aligned}
\end{equation}
So, we find that $\nabla_{\mathbf{x}} (\mathcal{L}^s_\alpha(\rho^{-1}))^{-1} M_{\rho}^{-1} f = \nabla_{\mathbf{x}} g = M_\rho( \mathcal{L}^v_\alpha(\rho))^{-1} \nabla_{\mathbf{x}} f$. 
\end{proof}
\noindent As an immediate corollary we also find the following formula.
\begin{proposition}
On the domain $H_0(\mathrm{div}_{\mathbf{x}},\Omega)$, the following relation holds:
\begin{equation}
    \mathrm{div}_{\mathbf{x}} \circ (\mathcal{L}^v_\alpha(\rho))^{-1} \circ M_\rho = M_{\rho^{-1}} \circ (\mathcal{L}^s_\alpha(\rho^{-1}))^{-1} \circ \mathrm{div}_{\mathbf{x}} \,.
\end{equation}
\end{proposition}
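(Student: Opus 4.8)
The statement to be proven is the second commutation relation,
\[
    \mathrm{div}_{\mathbf{x}} \circ (\mathcal{L}^v_\alpha(\rho))^{-1} \circ M_\rho = M_{\rho^{-1}} \circ (\mathcal{L}^s_\alpha(\rho^{-1}))^{-1} \circ \mathrm{div}_{\mathbf{x}} \,,
\]
which the excerpt labels ``an immediate corollary'' of \Cref{prop:elliptic_identity}. The plan is to deduce it purely formally from the already-established identity
\[
    M_\rho \circ (\mathcal{L}^v_\alpha(\rho))^{-1} \circ \mathrm{grad}_{\mathbf{x}} = \mathrm{grad}_{\mathbf{x}} \circ (\mathcal{L}^s_\alpha(\rho^{-1}))^{-1} \circ M_\rho^{-1}
\]
by taking formal adjoints of both sides with respect to the $L^2$ pairing. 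First I would record the adjoint relations between the basic building blocks: under the stated boundary conditions (vanishing trace or periodicity), $\mathrm{grad}_{\mathbf{x}}^* = -\mathrm{div}_{\mathbf{x}}$ as operators between $H^1_0(\Omega)$ and $H_0(\mathrm{div}_{\mathbf{x}},\Omega)$, the multiplication operator $M_\rho$ is self-adjoint (since $\rho$ is real-valued), and the weighted elliptic operators $\mathcal{L}^v_\alpha(\rho)$ and $\mathcal{L}^s_\alpha(\rho^{-1})$ are each self-adjoint (their weak forms $\int (\rho^{-1}\phi\,\psi + \alpha\rho^{-1}\nabla\phi\cdot\nabla\psi)$, resp. the $H(\mathrm{div})$ analogue, are manifestly symmetric), hence so are their inverses. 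These are all routine integration-by-parts facts that I would state without grinding through.

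Next I would take the adjoint of the left-hand side of \Cref{prop:elliptic_identity}: using $(ABC)^* = C^* B^* A^*$ together with the adjoint facts above,
\[
    \left( M_\rho \circ (\mathcal{L}^v_\alpha(\rho))^{-1} \circ \mathrm{grad}_{\mathbf{x}} \right)^*
    =
    (-\mathrm{div}_{\mathbf{x}}) \circ (\mathcal{L}^v_\alpha(\rho))^{-1} \circ M_\rho \,,
\]
and likewise the adjoint of the right-hand side,
\[
    \left( \mathrm{grad}_{\mathbf{x}} \circ (\mathcal{L}^s_\alpha(\rho^{-1}))^{-1} \circ M_\rho^{-1} \right)^*
    =
    M_{\rho^{-1}} \circ (\mathcal{L}^s_\alpha(\rho^{-1}))^{-1} \circ (-\mathrm{div}_{\mathbf{x}}) \,.
\]
Since the two operators in \Cref{prop:elliptic_identity} are equal, their adjoints are equal; the common factor of $-1$ cancels, yielding exactly the claimed identity. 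That finishes the argument.

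The only real subtlety — and the step I would treat as the main obstacle — is bookkeeping the function spaces and boundary conditions carefully enough that every adjoint is justified: one must check that $(\mathcal{L}^v_\alpha(\rho))^{-1}$ really maps into $H_0(\mathrm{div}_{\mathbf{x}},\Omega)$ (so that applying $\mathrm{div}_{\mathbf{x}}$ and integrating by parts against a test function produces no boundary term), that $\mathrm{grad}_{\mathbf{x}}$ lands in the correct domain, and that the composition chain is well-defined on the stated domain $H_0(\mathrm{div}_{\mathbf{x}},\Omega)$. Given the hypotheses already in force in the excerpt (positive $\rho \in C^1(\overline\Omega)$ bounded below, and vanishing-trace or periodic boundary conditions guaranteeing invertibility), these are all satisfied, and the adjoint computation is otherwise mechanical. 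One could alternatively give a direct proof mirroring the proof of \Cref{prop:elliptic_identity} — set $g = \mathrm{div}_{\mathbf{x}}(\mathcal{L}^v_\alpha(\rho))^{-1}(\rho f)$ for arbitrary vector field $f$ and verify $\mathcal{L}^s_\alpha(\rho^{-1})(\rho g) = \mathrm{div}_{\mathbf{x}} f$ by the same algebraic cancellation — but the adjoint route is shorter and makes the ``immediate corollary'' claim precise.
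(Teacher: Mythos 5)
Your proposal is correct and follows exactly the paper's own route: the paper proves this proposition by taking the adjoint of Proposition~\ref{prop:elliptic_identity} and invoking self-adjointness of the elliptic operators, which is precisely your argument (with the sign from $\mathrm{grad}_{\mathbf{x}}^* = -\mathrm{div}_{\mathbf{x}}$ cancelling on both sides). Your added attention to domains and boundary terms is a reasonable elaboration of what the paper leaves implicit.
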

\begin{proof}
    This is obtained by taking the adjoint of the formula established in Proposition \ref{prop:elliptic_identity}. As the elliptic operators (and thus there inverses) are self-adjoint, the formula follows. 
\end{proof}

We next consider the tensorial version of the identity. Let $\mathbb{\Sigma}$ be a matrix. Define
\begin{equation}
    \mathcal{L}^m_\alpha(\rho)
    \mathbb{\Sigma}
    =
    \rho \mathbb{\Sigma}
    -
    \alpha \nabla_{\mathbf{x}} \cdot (\rho \nabla_{\mathbf{x}} \cdot \mathbb{\Sigma}) \mathbb{I} \,.
\end{equation}
Further, notice that we can rewrite $\mathcal{L}^v_\alpha(\rho)$:
\begin{equation}
    \mathcal{L}^v_\alpha(\rho)
    \mathbf{u}
    =
    \rho \mathbf{u}
    -
    \alpha \nabla_{\mathbf{x}} \cdot (\rho (\nabla_{\mathbf{x}} \cdot \mathbf{u}) \mathbb{I}) \,,
\end{equation}
since the divergence of a scalar times the identity is equivalently the gradient of the scalar.

\begin{proposition}
On the domain $H^1_0(\Omega;\mathbb{R}^{d \times d})$, the $H^1$ Sobolev space of $d \times d$ matrix fields on $\Omega$ with vanishing boundary trace, the following commutation relation holds:
\begin{equation}
    M_\rho \circ (\mathcal{L}^v_\alpha(\rho))^{-1} \circ \mathrm{div}_{\mathbf{x}}
    =
    \mathrm{div}_{\mathbf{x}} \circ (\mathcal{L}^m_\alpha(\rho^{-1}))^{-1} \circ M_\rho^{-1} \,.
\end{equation}
\end{proposition}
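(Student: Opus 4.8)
The plan is to mimic the argument for Proposition~\ref{prop:elliptic_identity} almost verbatim, replacing scalar fields by matrix fields and the gradient by the (row-wise) divergence. First I would fix an arbitrary matrix field $\mathbb{\Sigma}$ and set $\mathbf{u} = (\mathcal{L}^v_\alpha(\rho))^{-1} \nabla_{\mathbf{x}} \cdot \mathbb{\Sigma}$, so that by definition $\rho \mathbf{u} - \alpha \nabla_{\mathbf{x}}(\rho \nabla_{\mathbf{x}} \cdot \mathbf{u}) = \nabla_{\mathbf{x}} \cdot \mathbb{\Sigma}$. Using the elementary identity $\nabla_{\mathbf{x}}(\rho \nabla_{\mathbf{x}} \cdot \mathbf{u}) = \nabla_{\mathbf{x}} \cdot(\rho (\nabla_{\mathbf{x}} \cdot \mathbf{u}) \mathbb{I})$ noted in the text, this rearranges to
\begin{equation}
M_\rho (\mathcal{L}^v_\alpha(\rho))^{-1} \nabla_{\mathbf{x}} \cdot \mathbb{\Sigma} = \rho \mathbf{u} = \nabla_{\mathbf{x}} \cdot \big( \mathbb{\Sigma} + \alpha \rho (\nabla_{\mathbf{x}} \cdot \mathbf{u}) \mathbb{I} \big) \,.
\end{equation}

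Second, I would introduce the matrix field $\mathbb{G} = \mathbb{\Sigma} + \alpha \rho (\nabla_{\mathbf{x}} \cdot \mathbf{u}) \mathbb{I}$, which by the display above satisfies $\nabla_{\mathbf{x}} \cdot \mathbb{G} = \rho \mathbf{u}$, hence $\rho^{-1} \nabla_{\mathbf{x}} \cdot \mathbb{G} = \mathbf{u}$. The key verification is then the direct computation
\begin{equation}
\mathcal{L}^m_\alpha(\rho^{-1}) \mathbb{G} = \rho^{-1} \mathbb{G} - \alpha \nabla_{\mathbf{x}} \cdot(\rho^{-1} \nabla_{\mathbf{x}} \cdot \mathbb{G}) \mathbb{I} = \rho^{-1} \mathbb{\Sigma} + \alpha (\nabla_{\mathbf{x}} \cdot \mathbf{u}) \mathbb{I} - \alpha (\nabla_{\mathbf{x}} \cdot \mathbf{u}) \mathbb{I} = \rho^{-1} \mathbb{\Sigma} \,,
\end{equation}
where the two $\alpha$-terms cancel exactly, just as the scalar terms did in the proof of Proposition~\ref{prop:elliptic_identity}. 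Therefore $\mathbb{G} = (\mathcal{L}^m_\alpha(\rho^{-1}))^{-1} M_{\rho^{-1}} \mathbb{\Sigma}$, and applying $\nabla_{\mathbf{x}} \cdot$ yields
\begin{equation}
\nabla_{\mathbf{x}} \cdot (\mathcal{L}^m_\alpha(\rho^{-1}))^{-1} M_{\rho^{-1}} \mathbb{\Sigma} = \nabla_{\mathbf{x}} \cdot \mathbb{G} = \rho \mathbf{u} = M_\rho (\mathcal{L}^v_\alpha(\rho))^{-1} \nabla_{\mathbf{x}} \cdot \mathbb{\Sigma} \,,
\end{equation}
which is the claimed identity, since $\mathbb{\Sigma}$ was arbitrary.

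The only real care needed — and the step I expect to be the main obstacle — is the functional-analytic bookkeeping rather than any computation: one must check that $\mathbb{G}$ lies in the domain on which $\mathcal{L}^m_\alpha(\rho^{-1})$ is invertible, which requires propagating the boundary conditions (vanishing trace, or periodicity) from $\mathbb{\Sigma}$ and $\mathbf{u}$ to $\mathbb{G} = \mathbb{\Sigma} + \alpha \rho (\nabla_{\mathbf{x}} \cdot \mathbf{u})\mathbb{I}$, and that $\rho \in C^1(\overline{\Omega})$ bounded below ensures $\mathcal{L}^v_\alpha(\rho)$ and $\mathcal{L}^m_\alpha(\rho^{-1})$ are both boundedly invertible on the relevant spaces so that every inverse above is well defined. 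Once the spaces are pinned down, the identity is purely algebraic and follows from the cancellation exhibited above; alternatively, as in the scalar corollary following Proposition~\ref{prop:elliptic_identity}, one could deduce it by taking the formal adjoint of a tensorial ``gradient'' version of Proposition~\ref{prop:elliptic_identity}, using self-adjointness of the elliptic operators.
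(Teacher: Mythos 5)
Your proposal is correct and follows essentially the same argument as the paper's proof: both set $\mathbf{u} = (\mathcal{L}^v_\alpha(\rho))^{-1}\nabla_{\mathbf{x}}\cdot\mathbb{\Sigma}$, introduce the auxiliary matrix $\mathbb{\Sigma} + \alpha\rho(\nabla_{\mathbf{x}}\cdot\mathbf{u})\mathbb{I}$ (which the paper calls $\mathbb{\Upsilon}$), and exhibit the same cancellation showing $\mathcal{L}^m_\alpha(\rho^{-1})$ maps it to $\rho^{-1}\mathbb{\Sigma}$. The functional-analytic caveats you flag are also handled only implicitly in the paper, so nothing is missing relative to the original.
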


\begin{proof}
    We proceed nearly the same as in the previous proof. Let $\mathbb{\Sigma} \in H^1_0(\Omega; \mathbb{R}^{d \times d})$ be arbitrary. To begin, define
    \begin{equation}
        \mathbf{v}
        =
        \nabla_{\mathbf{x}} \cdot \mathbb{\Sigma} \,,
        \quad \text{and} \quad
        \mathbf{u}
        =
        (\mathcal{L}^v_\alpha(\rho))^{-1} \mathbf{v} \,.
    \end{equation}
    Then the proof is equivalent to showing that $M_\rho \mathbf{u} = (\mathrm{div}_{\mathbf{x}} \circ (\mathcal{L}^m_\alpha(\rho^{-1}))^{-1} \circ M_\rho^{-1}) \mathbb{\Sigma}$. But notice,
    \begin{equation}
        \rho \mathbf{u}
        =
        \nabla_{\mathbf{x}} \cdot ( \mathbb{\Sigma} + \alpha \rho (\nabla_{\mathbf{x}} \cdot \mathbf{u}) \mathbb{I} ) \,.
    \end{equation}
    Define $\mathbb{\Upsilon} = \mathbb{\Sigma} + \alpha \rho (\nabla_{\mathbf{x}} \cdot \mathbf{u}) \mathbb{I}$. Then
    \begin{equation}
        \nabla_{\mathbf{x}} \cdot \mathbb{\Upsilon}
        =
        \rho \mathbf{u}
        =
        (M_\rho \circ (\mathcal{L}^v_\alpha(\rho))^{-1} \circ \mathrm{div}_{\mathbf{x}}) \mathbb{\Sigma} \,.
    \end{equation}
    But
    \begin{equation}
    \begin{aligned}
        \mathcal{L}^m_\alpha(\rho^{-1}) \mathbb{\Upsilon}
        &=
        \rho^{-1} \mathbb{\Upsilon} - \alpha \nabla_{\mathbf{x}} \cdot(\rho^{-1} \nabla_{\mathbf{x}} \cdot \mathbb{\Upsilon}) \mathbb{I} \\
        &=
        \rho^{-1} (\mathbb{\Sigma} + \alpha \rho (\nabla_{\mathbf{x}} \cdot \mathbf{u}) \mathbb{I}) - \alpha \nabla_{\mathbf{x}} \cdot(\rho^{-1} \rho \mathbf{u}) \mathbb{I}
        =
        \rho^{-1} \mathbb{\Sigma} \,.
    \end{aligned}
    \end{equation}
    Rearranging this last equation, we find that 
    \begin{equation}
        (M_\rho \circ (\mathcal{L}^v_\alpha(\rho))^{-1} \circ \mathrm{div}_{\mathbf{x}}) \mathbb{\Sigma} 
        = 
        \nabla_{\mathbf{x}} \cdot \mathbb{\Upsilon} 
        = 
        (\mathrm{div}_{\mathbf{x}} \circ (\mathcal{L}^m_\alpha(\rho^{-1}))^{-1} \circ M_{\rho^{-1}}) \mathbb{\Sigma} \,.
    \end{equation}
\end{proof}

%================================================================
\subsection{Simplifying the conservative entropic pressure matrix} \label{appendix:simplifying_entropic_pressure}

Consider an elliptic equation of the form
\begin{equation} \label{eq:general_matrix_elliptic_eqn}
    \rho^{-1} \mathbb{\Sigma} - \alpha \nabla_\bx \cdot (\rho^{-1} \nabla_\bx \cdot \mathbb{\Sigma}) \mathbb{I}
    =
    f(\bx) \mathbb{I} + \mathbb{G}(\bx) \,,
\end{equation}
where $f: \Omega \to \mathbb{R}$ and $\mathbb{G}(\bx): \Omega \to \mathbb{R}^{d \times d}$. If $\mathbb{G}$ is sufficiently regular, it is possible to reduce this to a scalar elliptic equation. 

\begin{lemma} \label{lemma:matrix_elliptic_eqn}
If $\mathbb{G}(\bx): \Omega \to \mathbb{R}^{d \times d}$ is twice differentiable, then the solution to the matrix elliptic equation,
\begin{equation}
    \rho^{-1} \mathbb{\Sigma} - \alpha \nabla_\bx \cdot (\rho^{-1} \nabla_\bx \cdot \mathbb{\Sigma}) \mathbb{I}
    =
    \mathbb{G}(\bx) \,,
\end{equation}
is given by
\begin{equation}
\left\{
\begin{aligned}
    &\mathbb{\Sigma}
    =
    \rho \mathbb{G} + \Sigma \mathbb{I} \,, \\
    &\rho^{-1} \Sigma - \alpha \nabla_{\bx} \cdot (\rho^{-1} \nabla_{\bx} \Sigma)
    =
    \alpha \nabla_{\bx} \cdot (\rho^{-1} \nabla_\bx \cdot (\rho \mathbb{G})) \,.
\end{aligned}
\right. 
\end{equation}
\end{lemma}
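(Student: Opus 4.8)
The plan is to verify the proposed solution formula directly by substitution. We are given that $\mathbb{G}$ is twice differentiable, and we claim that $\mathbb{\Sigma} = \rho \mathbb{G} + \Sigma \mathbb{I}$, where $\Sigma$ solves the scalar elliptic equation $\rho^{-1} \Sigma - \alpha \nabla_{\bx} \cdot (\rho^{-1} \nabla_\bx \Sigma) = \alpha \nabla_{\bx} \cdot (\rho^{-1} \nabla_\bx \cdot (\rho \mathbb{G}))$. The natural approach is: assume $\Sigma$ exists (which follows from standard elliptic theory, given the regularity of $\rho$ and the right-hand side, since the scalar operator $\mathcal{L}^s_\alpha(\rho^{-1})$ is invertible on $H^1_0$ by the coercivity established implicitly in \Cref{thm:weighted_max_principle}), define $\mathbb{\Sigma}$ by the stated formula, and then plug into the matrix elliptic equation to check it holds.

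First I would compute $\rho^{-1} \mathbb{\Sigma} = \mathbb{G} + \rho^{-1} \Sigma \mathbb{I}$. Next I would compute the divergence term. Since $\nabla_\bx \cdot \mathbb{\Sigma} = \nabla_\bx \cdot (\rho \mathbb{G}) + \nabla_\bx \Sigma$ (using that $\nabla_\bx \cdot (\Sigma \mathbb{I}) = \nabla_\bx \Sigma$), we get
\begin{equation}
    \nabla_\bx \cdot (\rho^{-1} \nabla_\bx \cdot \mathbb{\Sigma})
    =
    \nabla_\bx \cdot (\rho^{-1} \nabla_\bx \cdot (\rho \mathbb{G}))
    +
    \nabla_\bx \cdot (\rho^{-1} \nabla_\bx \Sigma) \,.
\end{equation}
Therefore
\begin{equation}
    \rho^{-1} \mathbb{\Sigma} - \alpha \nabla_\bx \cdot (\rho^{-1} \nabla_\bx \cdot \mathbb{\Sigma}) \mathbb{I}
    =
    \mathbb{G}
    +
    \left( \rho^{-1} \Sigma - \alpha \nabla_\bx \cdot (\rho^{-1} \nabla_\bx \cdot (\rho \mathbb{G})) - \alpha \nabla_\bx \cdot (\rho^{-1} \nabla_\bx \Sigma) \right) \mathbb{I} \,.
\end{equation}
By the defining scalar equation for $\Sigma$, the parenthesized expression is exactly $\rho^{-1} \Sigma - \alpha \nabla_\bx \cdot (\rho^{-1} \nabla_\bx \Sigma) - \alpha \nabla_\bx \cdot (\rho^{-1} \nabla_\bx \cdot (\rho \mathbb{G})) = 0$, so the right-hand side reduces to $\mathbb{G}$, as desired.

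The step requiring the most care is justifying that $\nabla_\bx \cdot (\rho \mathbb{G})$ is itself sufficiently regular for the scalar elliptic solve to make sense — this is precisely why the hypothesis that $\mathbb{G}$ be twice differentiable (together with $\rho \in C^1$) is imposed, since the right-hand side $\alpha \nabla_\bx \cdot (\rho^{-1} \nabla_\bx \cdot (\rho \mathbb{G}))$ involves two derivatives of $\rho \mathbb{G}$; in a weak formulation one integrates by parts so that only one derivative falls on test functions, and the term $\nabla_\bx \cdot (\rho \mathbb{G}) \in L^2$ suffices. I would also note the subtlety that uniqueness of $\mathbb{\Sigma}$ for the matrix equation reduces to uniqueness for the scalar operator plus the observation that the ansatz $\mathbb{\Sigma} = \rho \mathbb{G} + \Sigma \mathbb{I}$ is forced: subtracting $\rho \mathbb{G}$ from any solution and taking the trace-free and trace parts shows the remainder must be a scalar multiple of $\mathbb{I}$ satisfying the scalar equation. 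The main obstacle, then, is not the algebra — which is a short computation — but stating the regularity and boundary-condition hypotheses precisely enough that the intermediate quantities are well-defined; with the stated hypothesis that $\mathbb{G}$ is twice differentiable, this is straightforward.
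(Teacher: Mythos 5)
Your verification is correct, and it takes a genuinely different route from the paper. The paper's proof is constructive: it splits $\mathbb{\Sigma}$ into diagonal and off-diagonal parts, solves the off-diagonal part algebraically, writes the diagonal solve as $(M_{\rho^{-1}} - \alpha E \circ V_{\rho^{-1}})\mathbf{\Sigma} = \mathbf{h}$, inverts it via the Sherman--Morrison--Woodbury identity for operators, and then performs a sequence of algebraic cancellations to arrive at the stated formula. You instead take the formula as an ansatz and substitute, using only $\nabla_\bx \cdot (\Sigma\mathbb{I}) = \nabla_\bx \Sigma$ and linearity of the divergence, which collapses the whole argument to three lines. Your approach buys brevity and avoids the operator-inversion machinery entirely (including the invertibility of $\mathcal{K}_\alpha(\rho)$ that the paper must invoke); the paper's approach buys a derivation that explains where the decomposition $\mathbb{\Sigma} = \rho\mathbb{G} + \Sigma\mathbb{I}$ comes from in the first place. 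Since the lemma asserts "the solution is given by," a pure verification does need the uniqueness argument you sketch at the end — the off-diagonal equations force the difference of two solutions to be diagonal, the diagonal equations force it to be a scalar multiple of $\mathbb{I}$, and invertibility of the scalar operator $\mathcal{L}^s_\alpha(\rho^{-1})$ (established elsewhere in the appendix) forces that scalar to vanish — and your sketch of this is correct, so no gap remains. Your remark that the hypothesis of twice differentiability of $\mathbb{G}$ is what makes $\nabla_\bx \cdot (\rho^{-1}\nabla_\bx\cdot(\rho\mathbb{G}))$ well-defined (or, weakly, that one derivative can be moved onto test functions) matches the role this hypothesis plays in the paper.
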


\begin{proof}
\textbf{Step 1: splitting the diagonal and off-diagonal components.} First, note that we can separate the diagonal and off-diagonal equations. The off-diagonal component is solved algebraically:
\begin{equation}
    \rho^{-1} \mathbb{\Sigma}_0
    =
    \mathbb{G}_0(\bx) \,,
\end{equation}
where $(\mathbb{\Sigma}_0)_{ij} = \mathbb{\Sigma}_{ij}$ for $i \neq j$ and $(\mathbb{\Sigma}_0)_{ii} = 0$, and similarly for $\mathbb{G}$. The diagonal solve is then given by
\begin{equation}
    \rho^{-1} \mathbf{\Sigma} - \alpha \nabla_\bx \cdot (\rho^{-1} \nabla_\bx \cdot \mathrm{diag}(\mathbf{\Sigma}))
    =
    \mathbf{G}  + \alpha \nabla_{\bx} \cdot (\rho^{-1} \nabla_\bx \cdot (\rho \mathbb{G}_0))\mathbf{1} 
    =
    \mathbf{h}
    \,,
\end{equation}
where $\mathbf{\Sigma}_i = \mathbb{\Sigma}_{ii}$, $\mathbf{G}_i = \mathbb{G}_{ii}$, and $\mathbf{1}_i = 1$. The elliptic equation for the diagonal components of $\mathbb{\Sigma}$ may be decomposed as a product of simpler operators, which facilitates its simplification. 

\textbf{Step 2: reduction to a scalar elliptic solve.} Define the scalar elliptic operators
\begin{equation}
    \Delta_\rho f = \nabla_{\bx} \cdot (\rho \nabla_{\bx} f)
    \quad \text{and} \quad
    \mathcal{K}_\alpha(\rho) f = f - \alpha \Delta_{\rho^{-1}} (\rho f) \,,
\end{equation}
and the following maps:
\begin{itemize}
    \item $M_\rho: \mathbf{f} \mapsto \rho \mathbf{f}$ is the multiplication operator,
    \item $V_\rho: \mathbf{f} \mapsto \sum_{k=1}^d \partial_k (\rho \partial_k f_k)$, 
    \item and $E: f \mapsto (f, f, \hdots, f) = f \mathbf{1}$. 
\end{itemize}
Then the elliptic equation for $\mathbf{\Sigma}$ takes the form
\begin{equation} \label{eq:diag_elliptic}
    (M_{\rho^{-1}} - \alpha E \circ V_{\rho^{-1}}) \mathbf{\Sigma} = \mathbf{h} \,.
\end{equation}
The Sherman--Morrison--Woodbury identity, which holds for bounded linear operators on Hilbert [CITE] and Banach [CITE] spaces, implies
\begin{equation}
    (M_{\rho^{-1}} - \alpha E \circ V_{\rho^{-1}})^{-1}
    =
    M_\rho + \alpha M_\rho \circ E \circ (I - \alpha V_{\rho^{-1}} \circ M_\rho \circ E)^{-1} V_{\rho^{-1}} \circ M_\rho \,,
\end{equation}
if and only if $I - \alpha V_{\rho^{-1}} \circ M_\rho \circ E = \mathcal{K}_\alpha(\rho)$ is bijective. It is a straightforward exercise to demonstrate the coercivity of this operator with mild assumptions on the regularity of $\rho$, and we omit the details. Hence, we find that 
\begin{equation}
    \mathbf{\Sigma}
    =
    \rho \mathbf{h}
    +
    \alpha \rho [\mathcal{K}_\alpha^{-1}(\rho)(V_{\rho^{-1}}(\rho \mathbf{h})] \mathbf{1} \,.
\end{equation}

\textbf{Step 3: algebraic cancellation.} Summing the expressions for the diagonal and off-diagonal components of $\mathbb{\Sigma}$, we find that the matrix entropic pressure takes the form
\begin{equation}
\begin{aligned}
    \mathbb{\Sigma}
    =
    \mathbb{\Sigma}_0
    +
    \mathrm{diag}(\mathbf{\Sigma}) 
    &=
    \rho \mathbb{G}_0(\bx)
    + \rho \mathbf{h} + \alpha \rho [\mathcal{K}_\alpha^{-1}(\rho)(V_{\rho^{-1}}(\rho \mathbf{h})] \mathbf{1} \\
    &=
    \rho \mathbb{G}(\bx)
    +
    \alpha \rho \left[
    \mathcal{K}_\alpha^{-1}(\rho)(V_{\rho^{-1}} (\rho \mathbf{h}))
    +
    \nabla_{\bx} \cdot (\rho^{-1} \nabla_{\bx} \cdot (\rho \mathbb{G}_0))
    \right] \mathbb{I} \,.
\end{aligned}
\end{equation}
Observe that
\begin{equation}
    \mathcal{K}_\alpha^{-1}(\rho)(V_{\rho^{-1}} (\rho \mathbf{h}))
    =
    \mathcal{K}_\alpha^{-1}(\rho)(V_{\rho^{-1}} \left( \rho \mathbf{G}  + \alpha \rho \nabla_{\bx} \cdot (\rho^{-1} \nabla_\bx \cdot (\rho \mathbb{G}_0))\mathbf{1} \right) \,.
\end{equation}
For any scalar $g$, we find that $V_{\rho^{-1}}(g \mathbf{1}) = \Delta_{\rho^{-1}} g$. Moreover, recall that $(I - A)^{-1} Af = [I - A]^{-1} f - f$ for a general operator $A$ such that $I - A$ is invertible. Hence, if $A = \alpha \Delta_{\rho^{-1}} \circ \rho$, we find
\begin{equation}
    \mathcal{K}_\alpha^{-1}(\rho) \Delta_{\rho^{-1}}(\alpha \rho g)
    =
    (I - \alpha \Delta_{\rho^{-1}} \circ \rho)^{-1} \circ ( \alpha \Delta_{\rho^{-1}} \circ \rho) g
    =
    [I - \alpha \Delta_{\rho^{-1}} \circ \rho]^{-1} g - g
    =
    \mathcal{K}_\alpha^{-1}(\rho) g - g \,.
\end{equation}
Next, observe that 
\begin{equation}
    V_{\rho^{-1}}(\rho \mathbf{G})
    =
    \nabla_\bx \cdot (\rho^{-1} \nabla_\bx \cdot (\rho \mathrm{diag}(\mathbf{G}))) \,.
\end{equation}
So, we find that
\begin{equation}
\begin{aligned}
    \alpha \mathcal{K}_\alpha^{-1}(\rho)(V_{\rho^{-1}} (\rho \mathbf{h}))
    &= \mathcal{K}_\alpha^{-1}(\rho) [\alpha \nabla_{\bx} \cdot (\rho^{-1} \nabla_\bx \cdot (\rho \mathbb{G}_0))] 
    + \alpha \mathcal{K}_\alpha^{-1}(V_{\rho^{-1}}(\rho \mathbf{G})) 
    - \alpha \nabla_{\bx} \cdot (\rho^{-1} \nabla_\bx \cdot (\rho \mathbb{G}_0)) \\
    &= \mathcal{K}_\alpha^{-1}(\rho) [\alpha \nabla_{\bx} \cdot (\rho^{-1} \nabla_\bx \cdot (\rho \mathbb{G}))] 
    - \alpha \nabla_{\bx} \cdot (\rho^{-1} \nabla_\bx \cdot (\rho \mathbb{G}_0)) \,.
\end{aligned}
\end{equation}

Putting this together, we see that
\begin{equation}
\begin{aligned}
    \alpha \rho \left[
    \mathcal{K}_\alpha^{-1}(\rho)(V_{\rho^{-1}} (\rho \mathbf{h}))
    +
    \nabla_{\bx} \cdot (\rho^{-1} \nabla_{\bx} \cdot (\rho \mathbb{G}_0))
    \right]
    &=
    \alpha \rho 
    \mathcal{K}_\alpha^{-1}(\rho) [\nabla_{\bx} \cdot (\rho^{-1} \nabla_\bx \cdot (\rho \mathbb{G}))] 
\end{aligned}
\end{equation}
As a final simplification, notice that $\rho \mathcal{K}_\alpha^{-1}(\rho) = (\mathcal{L}_\alpha^{s}(\rho^{-1}))^{-1}$, where
\begin{equation}
    \mathcal{L}_\alpha^{s}(\rho^{-1})f = \rho^{-1} f - \alpha \nabla_{\bx} \cdot (\rho^{-1} \nabla_{\bx} f)
\end{equation}
is the scalar elliptic operator from standard IGR [CITE], and whose properties are discussed elsewhere in this appendix. Therefore, we obtain the final expression:
\begin{equation}
\left\{
\begin{aligned}
    &\mathbb{\Sigma}
    =
    \rho \mathbb{G} + \Sigma \mathbb{I} \,, \\
    &\rho^{-1} \Sigma - \alpha \nabla_{\bx} \cdot (\rho^{-1} \nabla_{\bx} \Sigma)
    =
    \alpha \nabla_{\bx} \cdot (\rho^{-1} \nabla_\bx \cdot (\rho \mathbb{G})) \,.
\end{aligned}
\right. 
\end{equation}
\end{proof}

\begin{theorem}
Assuming $\mathbb{G}$ is twice differentiable, the solution to~\Cref{eq:general_matrix_elliptic_eqn} is given by
\begin{equation}
    \mathbb{\Sigma}
    =
    \rho \mathbb{G} + \Sigma \mathbb{I} \,,
    \quad \text{where} \quad
    \rho^{-1} \Sigma - \alpha \nabla_\bx \cdot (\rho^{-1} \nabla_\bx \Sigma) 
    =
    f + \alpha \nabla_{\bx} \cdot (\rho^{-1} \nabla_\bx \cdot (\rho \mathbb{G})) \,.
\end{equation}
\end{theorem}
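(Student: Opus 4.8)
The plan is to reduce the full equation \eqref{eq:general_matrix_elliptic_eqn} to the purely matrix-source case handled by \Cref{lemma:matrix_elliptic_eqn} by peeling off the scalar contribution $f(\bx)\mathbb{I}$ via linearity. Since the elliptic operator
$\mathcal{L}^m_\alpha(\rho^{-1})\mathbb{\Sigma} = \rho^{-1}\mathbb{\Sigma} - \alpha \nabla_\bx\cdot(\rho^{-1}\nabla_\bx\cdot\mathbb{\Sigma})\mathbb{I}$
is linear in $\mathbb{\Sigma}$, I would write $\mathbb{\Sigma} = \mathbb{\Sigma}_f + \mathbb{\Sigma}_G$, where $\mathbb{\Sigma}_G$ solves the matrix equation with right-hand side $\mathbb{G}(\bx)$ alone (so \Cref{lemma:matrix_elliptic_eqn} applies directly), and $\mathbb{\Sigma}_f$ solves the equation with right-hand side $f(\bx)\mathbb{I}$ alone. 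The bulk of the argument is then to show that the $f$-part is resolved by an \emph{isotropic} ansatz $\mathbb{\Sigma}_f = \Sigma_f\mathbb{I}$ reducing to a scalar solve.

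The key steps, in order: (i) substitute $\mathbb{\Sigma}_f = \Sigma_f \mathbb{I}$ into $\mathcal{L}^m_\alpha(\rho^{-1})\mathbb{\Sigma}_f = f\mathbb{I}$; using $\nabla_\bx\cdot(\Sigma_f\mathbb{I}) = \nabla_\bx \Sigma_f$, the equation collapses to $\rho^{-1}\Sigma_f - \alpha\nabla_\bx\cdot(\rho^{-1}\nabla_\bx\Sigma_f) = f$, which is exactly the scalar operator $\mathcal{L}^s_\alpha(\rho^{-1})\Sigma_f = f$ from standard IGR; (ii) invoke invertibility/coercivity of $\mathcal{L}^s_\alpha(\rho^{-1})$ (established elsewhere in \Cref{appendix:igr_elliptic_operators}, cf.\ the maximum principle \Cref{thm:weighted_max_principle}) to guarantee a unique such $\Sigma_f$; (iii) apply \Cref{lemma:matrix_elliptic_eqn} to obtain $\mathbb{\Sigma}_G = \rho\mathbb{G} + \Sigma_G\mathbb{I}$ with $\rho^{-1}\Sigma_G - \alpha\nabla_\bx\cdot(\rho^{-1}\nabla_\bx\Sigma_G) = \alpha\nabla_\bx\cdot(\rho^{-1}\nabla_\bx\cdot(\rho\mathbb{G}))$; (iv) add the two solutions, setting $\Sigma = \Sigma_f + \Sigma_G$, so that $\mathbb{\Sigma} = \rho\mathbb{G} + \Sigma\mathbb{I}$ and, by linearity of $\mathcal{L}^s_\alpha(\rho^{-1})$, $\Sigma$ satisfies
\begin{equation*}
    \rho^{-1}\Sigma - \alpha\nabla_\bx\cdot(\rho^{-1}\nabla_\bx\Sigma)
    =
    f + \alpha\nabla_\bx\cdot(\rho^{-1}\nabla_\bx\cdot(\rho\mathbb{G})) \,,
\end{equation*}
which is the claimed formula; (v) remark that uniqueness of the overall solution follows from the injectivity of $\mathcal{L}^m_\alpha(\rho^{-1})$ under the same boundary/regularity conditions, so the constructed $\mathbb{\Sigma}$ is \emph{the} solution.

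The main obstacle is not the algebra — the splitting is clean — but rather ensuring the hypotheses line up: the $f\mathbb{I}$ source requires only that $f$ be regular enough for the scalar solve (no second-derivative assumption needed there), while \Cref{lemma:matrix_elliptic_eqn} demands $\mathbb{G}$ twice differentiable, which is exactly the theorem's stated hypothesis, so this matches. The subtler point is consistency of function spaces and boundary conditions under superposition: $\mathbb{\Sigma}_f = \Sigma_f\mathbb{I}$ must lie in the same space $H^1_0(\Omega;\mathbb{R}^{d\times d})$ (or the periodic analogue) in which \Cref{lemma:matrix_elliptic_eqn} operates, so that the decomposition $\mathbb{\Sigma} = \mathbb{\Sigma}_f + \mathbb{\Sigma}_G$ is legitimate and the sum indeed solves the original problem with the correct boundary data. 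This is routine given the coercivity results already assumed, so I would state it briefly rather than belabor it.
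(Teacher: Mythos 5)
Your proposal is correct and follows essentially the same route as the paper's proof: split the source by linearity into $f\mathbb{I}$ and $\mathbb{G}$ parts, resolve the former with the isotropic ansatz $\Sigma_f\mathbb{I}$ (collapsing to the scalar operator), apply \Cref{lemma:matrix_elliptic_eqn} to the latter, and superpose. Your additional remarks on invertibility and consistency of function spaces are sensible refinements the paper leaves implicit, but the argument is the same.
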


\begin{proof}
By linearity, we can split the elliptic equation into two equations:
\begin{equation}
    \rho^{-1} \mathbb{\Sigma}_s - \alpha \nabla_\bx \cdot (\rho^{-1} \nabla_\bx \cdot \mathbb{\Sigma}_s) \mathbb{I}
    =
    f(\bx) \mathbb{I} \,,
    \quad \text{and} \quad
    \rho^{-1} \mathbb{\Sigma}_m - \alpha \nabla_\bx \cdot (\rho^{-1} \nabla_\bx \cdot \mathbb{\Sigma}_m) \mathbb{I}
    =
    \mathbb{G}(\bx) \,.
\end{equation}
We then recover $\mathbb{\Sigma} = \mathbb{\Sigma}_s + \mathbb{\Sigma}_m$. The first reduces to a scalar equation: $\mathbb{\Sigma}_s = \Sigma_s \mathbb{I}$, where
\begin{equation}
    \rho^{-1} \Sigma_s - \alpha \nabla_\bx \cdot (\rho^{-1} \nabla_\bx \Sigma_s) 
    =
    f(\bx) \,. 
\end{equation}
Moreover, from~\Cref{lemma:matrix_elliptic_eqn}, we see that
\begin{equation}
    \mathbb{\Sigma}_m
    =
    \rho \mathbb{G} + \Sigma_m \mathbb{I} \,,
    \quad \text{where} \quad
    \rho^{-1} \Sigma_m - \alpha \nabla_\bx \cdot (\rho^{-1} \nabla_\bx \Sigma_m) 
    =
    \alpha \nabla_{\bx} \cdot (\rho^{-1} \nabla_\bx \cdot (\rho \mathbb{G})) \,.
\end{equation}
\end{proof}

%================================================================
\section{Coordinate changes in the Hamiltonian regularized Euler model} \label{appendix:change_of_variables}

In Eulerian coordinates, the Lagrangian for the HRE model takes the form
\begin{equation}
    \ell[\mathbf{u}, \rho, s]
    =
    \int_\Omega \frac{1}{2} \left( \rho ( |\mathbf{u}|^2 + \alpha (\nabla_{\mathbf{x}} \cdot \mathbf{u})^2) - e(\rho, s, \nabla_{\mathbf{x}} \rho) \right) \mathsf{d}^d \mathbf{x} \,,
\end{equation}
where the dependence of the internal energy, $e = e(\rho, s, \nabla_{\mathbf{x}} \rho)$, on $\nabla_{\mathbf{x}} \rho$ takes a particular form in order to cancel the spurious dispersion relation of sound waves from the regularization of the kinetic energy. In Lagrangian coordinates, the Lagrangian takes the following form:
\begin{multline}
    L[\bm{\Phi}, \dot{\bm{\Phi}}]
    =
    \int_\Omega
    \Bigg(
    \rho_0
    \bigg[
    \frac{1}{2}
    \left(
    |\dot{\bm{\Phi}}|^2
    +
    \alpha \mathrm{tr}( \nabla_{\mathbf{X}} \bm{\Phi}^{-1} \nabla_{\mathbf{X}} \dot{\bm{\Phi}})^2
    \right) \\
    -
    e\left( \frac{\rho_0}{\det(\nabla_{\mathbf{X}} \bm{\Phi})}, s_0, \nabla_{\mathbf{X}} \bm{\Phi}^{-T} \nabla_{\mathbf{X}} \left( \frac{\rho_0}{\det(\nabla_{\mathbf{X}} \bm{\Phi})} \right) \right)
    \bigg] 
    \Bigg)
    \mathsf{d}^d \mathbf{X}_0
    \,.
\end{multline}
In the main body of this paper, we derive the evolution equations by deriving the Euler--Lagrange equation in the Lagrangian reference frame, however it is possible to derive the evolution equations directly in the Eulerian reference frame using either a constrained variational principle, the Euler--Poincar{\'e} formalism \cite{holm1998euler, marsden2013introduction}, or a noncanonical Hamiltonian formalism \cite{morrison1982poisson, morrison1998hamiltonian, morrison1980noncanonical, marsden2013introduction, holm1998euler}. We describe how the latter approach is obtained directly from the canonical Hamiltonian structure in this appendix. 

%----------------------------
\subsection{The canonical Hamiltonian formulation of the pressureless HRE equation} \label{appendix:canonical_pressureless_HRE}

In order to obtain the pressureless HRE system in Hamiltonian form in canonical coordinates, we begin by finding the momentum coordinate which is canonically conjugate to the flow map, $\bm{\Phi}_t$. We define $\bm{\Pi}_t \in T^* \mathrm{Diff}(\Omega)$ to be
\begin{equation}
    \bm{\Pi}_t
    =
    D_{\dot{\bm{\Phi}}_t} L[\bm{\Phi}_t, \dot{\bm{\Phi}}_t]
    =
    D^2 \tilde{\psi}_\alpha[\iota(\bm{\Phi}_t)](D \iota[\bm{\Phi}_t] \dot{\bm{\Phi}}_t, D \iota[\bm{\Phi}_t] (\cdot) ) 
    =
    (\iota^* D^2 \tilde{\psi}_\alpha [\bm{\Phi}_t])(\dot{\bm{\Phi}}_t, \cdot) \,.
\end{equation}
Recall that $\bm{\Pi}_t \in T^* \mathrm{Diff}(\Omega)$ is a linear functional which acts on tangent vectors, whose action we denote by $\langle\bm{\Pi}_t,\cdot\rangle$. The Hamiltonian is then defined via the Legendre transform:
\begin{equation}
    H[\bm{\Phi}_t, \bm{\Pi}_t]
    =
    \left(
    \left\langle
        \bm{\Pi}_t, \dot{\bm{\Phi}}_t
    \right\rangle
    -
    L[\bm{\Phi}_t, \dot{\bm{\Phi}}_t]
    \right)_{\dot{\bm{\Phi}}_t = \dot{\bm{\Phi}}_t[\bm{\Phi}_t, \bm{\Pi}_t]} \,.
\end{equation}
Notice that the canonical momentum and the velocity are related via
\begin{equation}
    \dot{\bm{\Phi}}_t
    =
    ((\iota^* D^2 \tilde{\psi}_\alpha)[\bm{\Phi}_t])^{-1} (\bm{\Pi}_t, \cdot) \,.
\end{equation}
We may think of the velocity as an element of the bi-dual by the canonical embedding: $T \mathrm{Diff}(\Omega) \hookrightarrow T (\mathrm{Diff}(\Omega))^{**}$. Hence, we see that
\begin{equation}
    L[\bm{\Phi}_t, \dot{\bm{\Phi}}_t]
    =
    \frac{1}{2}
    (\iota^* D^2 \tilde{\psi}_\alpha [\bm{\Phi}_t])(\dot{\bm{\Phi}}_t, \dot{\bm{\Phi}}_t)
    =
    \frac{1}{2}
    \left\langle
        \bm{\Pi}_t, \dot{\bm{\Phi}}_t
    \right\rangle
    =
    \frac{1}{2}
    ((\iota^* D^2 \tilde{\psi}_\alpha)[\bm{\Phi}_t])^{-1} (\bm{\Pi}_t, \bm{\Pi}_t) \,.
\end{equation}
Therefore, the Hamiltonian may be written as 
\begin{equation}
    H[\bm{\Phi}_t, \bm{\Pi}_t]
    =
    \frac{1}{2}
    ((\iota^* D^2 \tilde{\psi}_\alpha)[\bm{\Phi}_t])^{-1} (\bm{\Pi}_t, \bm{\Pi}_t) \,.
\end{equation}
The canonical Poisson bracket is a bilinear map on smooth functionals of the cotangent bundle:
\begin{equation}
    \{F, G\}
    =
    \left\langle
    D_{\bm{\Phi}_t} F, 
    D_{\bm{\Pi}_t} G
    \right\rangle
    -
    \left\langle
    D_{\bm{\Phi}_t} F, 
    D_{\bm{\Pi}_t} G
    \right\rangle \,.
\end{equation}
The evolution equations (in a weak form) are obtained via the formula
\begin{equation}
    \dot{F}
    =
    \{F, H\} \,.
\end{equation}
To write this explicitly, we need to compute derivatives of the Hamiltonian. The derivative of the Hamiltonian with respect to momentum is simply the inverse Legendre transform:
\begin{equation}
    D_{\bm{\Pi}_t} H
    =
    ((\iota^* D^2 \tilde{\psi}_\alpha)[\bm{\Phi}_t])^{-1} (\bm{\Pi}_t, \cdot)    
    =
    \dot{\bm{\Phi}}_t \,.
\end{equation}
The derivative with respect to the flow map is a bit more involved:
\begin{equation}
\begin{aligned}
    D_{\bm{\Phi}_t} &H[\bm{\Phi}_t, \bm{\Pi}_t] \cdot \mathbf{U}
    =
    \frac{1}{2} D_{\bm{\Phi}_t} \left[ ((\iota^* D^2 \tilde{\psi}_\alpha)[\bm{\Phi}_t])^{-1} (\bm{\Pi}_t, \bm{\Pi}_t) \right] \cdot \mathbf{U} \\
    &=
    - \frac{1}{2} \left( D_{\bm{\Phi}_t} (\iota^* D^2 \tilde{\psi}_\alpha)[\bm{\Phi}_t] \cdot \mathbf{U} \right) (((\iota^* D^2 \tilde{\psi}_\alpha)[\bm{\Phi}_t])^{-1} \bm{\Pi}_t, ((\iota^* D^2 \tilde{\psi}_\alpha)[\bm{\Phi}_t])^{-1} \bm{\Pi}_t) \\
    &=
    -\frac{1}{2} \left( D^3 \tilde{\psi}_\alpha[\iota(\bm{\Phi}_t)](D\iota[\bm{\Phi}_t]\mathbf{U}, \cdot, \cdot) 
    + 2 D^2 \tilde{\psi}_\alpha[\iota(\bm{\Phi}_t)](D^2\iota[\bm{\Phi}_t](\mathbf{U}, \cdot), \cdot) \right) (\dot{\bm{\Phi}}_t, \dot{\bm{\Phi}}_t) \\
    &=
    -\frac{1}{2} D^3 \tilde{\psi}_\alpha[\iota(\bm{\Phi}_t)](D\iota[\bm{\Phi}_t]\mathbf{U}, D\iota[\bm{\Phi}_t]\dot{\bm{\Phi}}_t, D\iota[\bm{\Phi}_t]\dot{\bm{\Phi}}_t) \\
    &\hspace{10em}
    - D^2 \tilde{\psi}_\alpha[\iota(\bm{\Phi}_t)](D^2\iota[\bm{\Phi}_t](\mathbf{U}, \dot{\bm{\Phi}}_t), D\iota[\bm{\Phi}_t]\dot{\bm{\Phi}}_t) 
    \,,
\end{aligned}
\end{equation}
where we applied the identity
\begin{equation}
    (D_{\mathbf{X}} (g_{\mathbf{X}})^{-1})(\mathbf{U}, \mathbf{V},\mathbf{W})
    =
    - (D_{\mathbf{X}} g_{\mathbf{X}})((g_{\mathbf{X}})^{-1}\mathbf{U}, (g_{\mathbf{X}})^{-1}\mathbf{V}, \mathbf{W}) 
    \,,
\end{equation}
where $(g_{\mathbf{X}})^{-1}: T \mathrm{Diff}(\Omega) \times T \mathrm{Diff}(\Omega) \to \mathbb{R}$ is the inverse of $g_{\mathbf{X}}: T \mathrm{Diff}(\Omega) \times T \mathrm{Diff}(\Omega) \to \mathbb{R}$, and we interpret $(g_{\mathbf{X}})^{-1} \mathbf{U} \in T \mathrm{Diff}(\Omega)$ for $\mathbf{U} \in T\mathrm{Diff}(\Omega)$. 

\begin{comment}
Although the calculation is shown in full above, it might be expedited by exploiting the identity $D_{\bm{\Phi}_t} H = - D_{\bm{\Phi}_t} L$, which holds in general due to the involutive character of the Legendre transform.
\end{comment} 
%BKT: I dont think this statement is necessary

Hence, we recover the following weak evolution equations:
\begin{equation}
\begin{aligned}
    \left\langle
    \dot{\bm{\Phi}}_t, \mathbf{V}
    \right\rangle
    &=
    (\iota^* D^2 \tilde{\psi}_\alpha)[\bm{\Phi}_t])^{-1} (\bm{\Pi}_t, \mathbf{V}) \\
    \left\langle
    \dot{\bm{\Pi}}_t, \mathbf{U}
    \right\rangle
    &=
    \frac{1}{2} D^3 \tilde{\psi}_\alpha[\iota(\bm{\Phi}_t)](D\iota[\bm{\Phi}_t]\mathbf{U}, D\iota[\bm{\Phi}_t]\dot{\bm{\Phi}}_t, D\iota[\bm{\Phi}_t]\dot{\bm{\Phi}}_t) \\
    &\hspace{10em}
    + D^2 \tilde{\psi}_\alpha[\iota(\bm{\Phi}_t)](D^2\iota[\bm{\Phi}_t](\mathbf{U}, \dot{\bm{\Phi}}_t), D\iota[\bm{\Phi}_t]\dot{\bm{\Phi}}_t) \,,
\end{aligned}
\end{equation}
for all $\mathbf{V} \in T^* \mathrm{Diff}(\Omega)$ and $\mathbf{U} \in T \mathrm{Diff}(\Omega)$. By substituting the first equation into the second, we find that
\begin{multline}
    \frac{\mathsf{d}}{\mathsf{d} t}
    \left(
    \iota^* D^2 \tilde{\psi}_\alpha(\dot{\bm{\Phi}}_t, \mathbf{U})
    \right)
    =
    \frac{1}{2} D^3 \tilde{\psi}_\alpha[\iota(\bm{\Phi}_t)](D\iota[\bm{\Phi}_t]\mathbf{U}, D\iota[\bm{\Phi}_t]\dot{\bm{\Phi}}_t, D\iota[\bm{\Phi}_t]\dot{\bm{\Phi}}_t) \\
    + D^2 \tilde{\psi}_\alpha[\iota(\bm{\Phi}_t)](D^2\iota[\bm{\Phi}_t](\mathbf{U}, \dot{\bm{\Phi}}_t), D\iota[\bm{\Phi}_t]\dot{\bm{\Phi}}_t) \,,
\end{multline}
for all $\mathbf{U} \in T \mathrm{Diff}(\Omega)$. Expanding the time-derivative, we recover the Euler--Lagrange equations for the Levi-Civita flow defining the pressureless HRE model in Equation~\eqref{eq:hre_geodesic_eqn}. 

%----------------------------
\subsection{Lagrangian to Eulerian change of variables formulae}

In principle, the previous section has obtained the desired evolution equation at least for the pressureless part of the HRE model. However, what we really want is to obtain the evolution equations in the Eulerian coordinates in conservative form. Note, we ultimately seek the evolution equations in the coordinates $(\rho \mathbf{u}, \rho, E)$, but we settle for an intermediate result in this appendix and simply find the evolution equations in the variables $(\mathbf{m}, \rho, e)$. We determine the evolution equations in these variables by identifying the coordinate change expressing the canonical Poisson bracket in the Lagrangian reference frame in the Eulerian coordinates $(\mathbf{m},\rho, e)$, closely following the approach taken in \cite{suzuki2020generic}. This will ensure thermodynamic consistency of the final system. 

The Eulerian mass, momentum, and entropy densities are related to their Lagrangian counterparts as follows:
\begin{equation} \label{eq:coord_transforms}
\begin{aligned}
    \rho(x,t)
    &=
    \phi_\rho(\bm{\Phi})(x)
    \eqcolon
    \frac{\rho_0(\bm{\Phi}^{-1}(x))}{\det( \nabla_{\mathbf{X}} \bm{\Phi}(\bm{\Phi}^{-1}(x), t))} 
    \,, \\
    \mathbf{m}(x,t)
    &=
    \phi_m(\bm{\Phi}, \bm{\Pi})(x)
    \eqcolon
    \frac{\bm{\Pi}(\bm{\Phi}^{-1}(x),t)}{\det( \nabla_{\mathbf{X}} \bm{\Phi}(\bm{\Phi}^{-1}(x), t))}
    \,, \\
    \sigma(x,t)
    &=
    \phi_\sigma(\bm{\Phi})(x)
    \eqcolon
    \frac{\rho_0(\bm{\Phi}^{-1}(x)) s_0(\bm{\Phi}^{-1}(x))}{\det( \nabla_{\mathbf{X}} \bm{\Phi}(\bm{\Phi}^{-1}(x), t))} \,.
\end{aligned}
\end{equation}
The internal energy density is somewhat more complicated, but may be written in terms of the prior definitions:
\begin{equation}
    e(x,t)
    =
    \phi_e(\bm{\Phi})(x)
    \eqcolon
    \phi_\rho(\bm{\Phi})(x)
    \varepsilon
    \left(
    \phi_\rho(\bm{\Phi})(x) \,,
    \frac{\phi_\sigma(\bm{\Phi})(x)}{\phi_\rho(\bm{\Phi})(x)} \,,
    \nabla_{\mathbf{x}}[\phi_\rho(\bm{\Phi})](x)
    \right) \,,
\end{equation}
where $\varepsilon(\rho, \sigma/\rho, \nabla_{\bx} \rho)$ is the specific internal energy, and the total internal energy is then given by
\begin{equation}
    E(x,t)
    =
    \phi_E(\bm{\Phi}, \bm{\Pi})(x)
    \eqcolon
    \frac{1}{2}
    \phi_m(\bm{\Phi}, \bm{\Pi})(x) \cdot \mathcal{L}_\alpha(\phi_\rho(\bm{\Phi}))^{-1} \phi_m(\bm{\Phi}, \bm{\Pi})(x)
    +
    \phi_e(\bm{\Phi})(x) \,.
\end{equation}
Hence, we have coordinate transformations relating the Lagrangian and Eulerian coordinates: with entropy as the thermodynamic variable
\begin{equation}
    (\mathbf{m},\rho,\sigma)
    =
    \phi^{(s)}(\bm{\Phi}, \bm{\Pi})
    =
    (\phi_m(\bm{\Phi}, \bm{\Pi}), \phi_\rho(\bm{\Phi}), \phi_\sigma(\bm{\Phi})) \,,
\end{equation}
with internal energy as the thermodynamic variable
\begin{equation}
    (\mathbf{m},\rho,e)
    =
    \phi^{(e)}(\bm{\Phi}, \bm{\Pi})
    =
    (\phi_m(\bm{\Phi}, \bm{\Pi}), \phi_\rho(\bm{\Phi}), \phi_e(\bm{\Phi})) \,,
\end{equation}
and with total energy as the thermodynamic variable
\begin{equation}
    (\mathbf{m},\rho,E)
    =
    \phi^{(E)}(\bm{\Phi}, \bm{\Pi})
    =
    (\phi_m(\bm{\Phi}, \bm{\Pi}), \phi_\rho(\bm{\Phi}), \phi_E(\bm{\Phi},\bm{\Pi})) \,.
\end{equation}
These formulas help us break down complex coordinate transformations into a sequence of simpler transformations.

With the coordinates $(m,\rho, \sigma)$, the Hamiltonian is given by
\begin{equation}
    H[\mathbf{m}, \rho, \sigma]
    =
    \int_\Omega 
    \left(
    \frac{1}{2}
    \mathbf{m} \cdot \mathbf{u}[\mathbf{m},\rho]
    +
    e(\rho, \sigma/\rho, \nabla_{\mathbf{x}} \rho)
    \right)
    \mathsf{d}^d \mathbf{x} \,,
\end{equation}
where $\mathbf{u}$ is obtained by solving $\mathbf{m} = \mathcal{L}_\alpha^{-1}(\rho) \mathbf{u} = \rho \mathbf{u} - \alpha \nabla_{\mathbf{x}} (\rho (\nabla_{\mathbf{x}} \cdot \mathbf{u}))$. Similarly, in the coordinates $(m,\rho, e)$, the Hamiltonian is given by
\begin{equation}
    H[\mathbf{m},\rho, e]
    =
    \int_\Omega 
    \left(
    \frac{1}{2}
    \mathbf{m} \cdot \mathbf{u}[\mathbf{m},\rho]
    +
    e
    \right)
    \mathsf{d}^d \mathbf{x} \,,
\end{equation}
where now $e$ is thought of as an independent variable. Finally, in the coordinates $(m, \rho, E)$, the Hamiltonian is given by
\begin{equation}
    H[\mathbf{m}, \rho, E]
    =
    \int_\Omega 
    E \,
    \mathsf{d}^d \mathbf{x} \,.
\end{equation}
The Hamiltonian becomes exceedingly simple in this final coordinate system, but the Poisson bracket becomes correspondingly more complicated. In the following sections, we derive the Poisson bracket structure for the entropy, $(m,\rho,\sigma)$, and internal energy, $(m,\rho,e)$, coordinate systems. 

%----------------------------
\subsection{The Poisson bracket in entropy coordinates}

A convenient way to think of the coordinate transformations of the mass, momentum, and entropy densities is the following:
\begin{equation} \label{eq:coord_change_shorthand}
\begin{aligned}
    \rho(x,t)
    &=
    \int_\Omega \rho_0(\mathbf{X}) \delta(x - \bm{\Phi}(\mathbf{X},t)) \mathsf{d}^d \mathbf{X} \,, \\
    \mathbf{m}(x,t)
    &=
    \int_\Omega \bm{\Pi}(\mathbf{X},t) \delta(x - \bm{\Phi}(\mathbf{X},t)) \mathsf{d}^d \mathbf{X} \,, \\
    \sigma(x,t)
    &=
    \int_\Omega \rho_0(\mathbf{X}) s_0(\mathbf{X}) \delta(x - \bm{\Phi}(\mathbf{X},t)) \mathsf{d}^d \mathbf{X} \,.
\end{aligned}
\end{equation}
This formal way of representing the coordinate transformations is a convenient representation for the purposes of computing derivatives of the transformations in Equation~\eqref{eq:coord_transforms}, see \cite{suzuki2020generic,morrison1998hamiltonian}. Taking derivatives of the expressions in equation \eqref{eq:coord_change_shorthand}, we find that the partial derivatives with respect to $\bm{\Phi}$ are given by
\begin{equation}
\begin{aligned}
    \frac{\delta \phi_\rho}{\delta \bm{\Phi}}
    &=
    - \rho_0(\mathbf{X}) \nabla_{\mathbf{x}} \delta(x - \bm{\Phi}(\mathbf{X}, t)) \,, \\
    \frac{\delta \phi_\bom}{\delta \bm{\Phi}}
    &= - \bm{\Pi}(\mathbf{X},t) \otimes \nabla_{\mathbf{x}} \delta(x - \bm{\Phi}(\mathbf{X},t)) \,, \\
    \frac{\delta \phi_\sigma}{\delta \bm{\Phi}}
    &=
    - \rho_0(\mathbf{X}) s_0(\mathbf{X}) \nabla_{\mathbf{x}} \delta(x - \bm{\Phi}(\mathbf{X}, t)) \,, 
\end{aligned}
\end{equation}
and the partial derivatives with respect to $\bm{\Pi}$ by
\begin{equation}
    \frac{\delta \phi_\bom}{\delta \bm{\Pi}}
    =
    \mathbb{I} \delta(x - \bm{\Phi}(\mathbf{X},t)) \,,
\end{equation}
whereas $\delta \phi_\rho/\delta \bm{\Pi} = 0 = \delta \phi_\sigma /\delta \bm{\Pi}$. 

Suppose $F$ and $G$ are functionals of the Eulerian coordinates. We obtain the Poisson bracket in the Eulerian coordinates by pulling these functionals back to the Lagrangian reference frame:
\begin{equation}
    (\phi^{(s)})^* F(\bm{\Phi},\bm{\Pi})
    =
    F(\phi_\rho(\bm{\Phi}), \phi_\bom(\bm{\Phi},\bm{\Pi}), \phi_\sigma(\bm{\Phi})) \,,
\end{equation}
and similarly for $G$. For shorthand, let $\tilde{F} = (\phi^{(s)})^* F$. We find that
\begin{equation}
    \frac{\delta \tilde{F}}{\delta \bm{\Phi}}
    =
    \rho_0(\mathbf{X}) \nabla_{\mathbf{x}} \frac{\delta F}{\delta \rho}(\bm{\Phi}(\mathbf{X},t))
    +
    \left[ \nabla_{\mathbf{x}} \frac{\delta F}{\delta \mathbf{m}}(\bm{\Phi}(\mathbf{X},t)) \right]^T \bm{\Pi}(\mathbf{X},t) 
    +
    \rho_0(\mathbf{X}) s_0(\mathbf{X}) \nabla_{\mathbf{x}} \frac{\delta F}{\delta \sigma}(\bm{\Phi}(\mathbf{X},t)) \,,
\end{equation}
and
\begin{equation} \label{eq:change_of_coords_momentum_1}
    \frac{\delta \tilde{F}}{\delta \bm{\Pi}}
    =
    \frac{\delta F}{\delta \mathbf{m}}(\bm{\Phi}(\mathbf{X},t)) \,.
\end{equation}
Plugging these expressions into the canonical Poisson bracket in the Lagrangian reference frame, and pushing forward to the Eulerian reference frame, we find the Lie--Poisson bracket:
\begin{equation}
\begin{aligned}
    \{F, G\}
    =
    &-\int_\Omega \mathbf{m} \left[ \left(\frac{\delta F}{\delta \mathbf{m}} \cdot \nabla_{\mathbf{x}} \right) \frac{\delta G}{\delta \mathbf{m}} - \left(\frac{\delta G}{\delta \mathbf{m}} \cdot \nabla_{\mathbf{x}} \right) \frac{\delta F}{\delta \mathbf{m}} \right] \mathsf{d}^d \mathbf{x} \\
    &-\int_\Omega \rho \left[ \frac{\delta F}{\delta \mathbf{m}} \cdot \nabla_{\mathbf{x}} \frac{\delta G}{\delta \rho} - \frac{\delta G}{\delta \mathbf{m}} \cdot \nabla_{\mathbf{x}} \frac{\delta F}{\delta \rho} \right] \mathsf{d}^d \mathbf{x} \\
    &-\int_\Omega \sigma \left[ \frac{\delta F}{\delta \mathbf{m}} \cdot \nabla_{\mathbf{x}} \frac{\delta G}{\delta \sigma} - \frac{\delta G}{\delta \mathbf{m}} \cdot \nabla_{\mathbf{x}} \frac{\delta F}{\delta \sigma} \right] \mathsf{d}^d \mathbf{x} \,.
\end{aligned}
\end{equation}
We make the following definitions to simplify notation:
\begin{equation}
    p
    =
    \rho^2 \varepsilon_\rho \,,
    \quad
    \vartheta
    =
    \varepsilon_s \,, 
    \quad
    g
    =
    \varepsilon + \frac{p}{\rho} - s \vartheta \,,
\end{equation}
which are the pressure, temperature, and specific Gibbs free energy, respectively. We further define, following \cite{suzuki2020generic}, the vector
\begin{equation}
    \bm{\xi}
    =
    \rho \varepsilon_{\nabla_{\mathbf{x}} \rho} \,,
\end{equation}
which aids in writing the Korteweg-type stress compactly. The evolution equations may be obtained by noting that
\begin{equation}
    \frac{\delta H}{\delta \mathbf{m}}
    =
    \mathbf{u} \,,
    \quad
    \frac{\delta H}{\delta \rho}
    =
    - \frac{1}{2} \left( |\mathbf{u}|^2 + \alpha (\nabla_{\mathbf{x}} \cdot \mathbf{u})^2 \right)
    + g - \nabla_{\mathbf{x}} \cdot \bm{\xi} \,,
    \quad \text{and} \quad
    \frac{\delta H}{\delta \sigma} = \vartheta \,.
\end{equation}
If we let 
\begin{equation}
    F[\mathbf{m},\rho,\sigma]
    =
    \int_\Omega \left(
    \mathbf{m} \cdot \psi_m
    +
    \rho \cdot \psi_\rho
    +
    \sigma \cdot \psi_\sigma
    \right)
    \mathsf{d}^d \mathbf{x} \,,
\end{equation}
for arbitrary test functions, $\psi_m, \psi_\rho, \psi_\sigma$, we obtain the following weak evolution equations
\begin{equation}
\begin{aligned}
    \{F, H\}
    =
    &-\int_\Omega \mathbf{m} \left[ \left( \psi_m \cdot \nabla_{\mathbf{x}} \right) \mathbf{u} - \left( \mathbf{u} \cdot \nabla_{\mathbf{x}} \right) \psi_m \right] \mathsf{d}^d \mathbf{x} \\
    &-\int_\Omega \rho \left[ \psi_m \cdot \nabla_{\mathbf{x}} \left( - \frac{1}{2} \left( |\mathbf{u}|^2 + \alpha (\nabla_{\mathbf{x}} \cdot \mathbf{u} )^2 \right)
    + g - \nabla_{\mathbf{x}} \cdot \bm{\xi} \right) - \mathbf{u} \cdot \nabla_{\mathbf{x}} \psi_\rho \right] \mathsf{d}^d \mathbf{x} \\
    &-\int_\Omega \sigma \left[ \psi_m \cdot \nabla_{\mathbf{x}} \vartheta - \mathbf{u} \cdot \nabla_{\mathbf{x}} \psi_\sigma \right] \mathsf{d}^d \mathbf{x} \,.
\end{aligned}
\end{equation}
Notice that
\begin{multline}
    - (\nabla_{\mathbf{x}} \mathbf{u})^T \mathbf{m} + \frac{1}{2} \rho \nabla_{\mathbf{x}} \left( |\mathbf{u}|^2 + \alpha (\nabla_{\mathbf{x}} \cdot \mathbf{u})^2 \right) \\
    =
    - (\nabla_{\mathbf{x}} \mathbf{u})^T (\rho \mathbf{u} - \alpha \nabla_{\mathbf{x}} (\rho \nabla_{\mathbf{x}} \cdot \mathbf{u})) 
    + \frac{1}{2} \rho \nabla_{\mathbf{x}} \left( |\mathbf{u}|^2 + \alpha (\nabla_{\mathbf{x}} \cdot \mathbf{u})^2 \right) \\
    =
    \alpha
    \left(
    (\nabla_{\mathbf{x}} \mathbf{u})^T ( \rho \nabla_{\mathbf{x}} \mathbf{u}) + \frac{\rho}{2} \nabla_{\mathbf{x}}( (\nabla_\mathbf{x} \cdot \mathbf{u} )^2)
    \right)
    =
    \nabla_{\mathbf{x}} \cdot \left( \alpha \rho ( \nabla_{\mathbf{x}} \cdot \mathbf{u}) (\nabla_{\mathbf{x}} \mathbf{u})^T \right) \,.
\end{multline}
We further use the fact that
\begin{equation}
    \rho \nabla_{\mathbf{x}} (\nabla_{\mathbf{x}} \cdot \bm{\xi}) - (\nabla_{\mathbf{x}}^2 \rho) \bm{\xi}
    =
    \nabla_{\mathbf{x}} (\rho \nabla_{\mathbf{x}} \cdot \bm{\xi}) - \nabla_{\mathbf{x}} \cdot( \nabla_{\mathbf{x}} \rho \otimes \bm{\xi}) \,,
\end{equation}
and the thermodynamic relations
\begin{equation} \label{eq:thermo_relations}
\begin{aligned}
    \rho(g + s \vartheta) 
    &= \rho \left( \varepsilon + \frac{p}{\rho} \right)
    = e + p \,, \\
    \rho(\nabla_{\mathbf{x}} g + s \nabla_{\mathbf{x}} \vartheta)
    &= \rho \left( \nabla_{\mathbf{x}} \varepsilon + \nabla_{\mathbf{x}} \left( \frac{p}{\rho} \right) - \vartheta \nabla_{\mathbf{x}} s \right)
    =
    \nabla_{\mathbf{x}} p + (\nabla_{\mathbf{x}}^2 \rho) \bm{\xi} \,,
\end{aligned}
\end{equation}
and find that
\begin{equation}
\begin{aligned}
    \partial_t \rho &= - \nabla_{\mathbf{x}} \cdot(\rho \mathbf{u}) \,, \\
    \partial_t \mathbf{m} &= - \nabla_{\mathbf{x}} \cdot \left( \mathbf{m} \otimes \mathbf{u} + \alpha \rho ( \nabla_{\mathbf{x}} \cdot \mathbf{u}) (\nabla_{\mathbf{x}} \mathbf{u})^T + (p - \rho (\nabla_{\mathbf{x}} \cdot \bm{\xi})) \mathbb{I} + \nabla_{\mathbf{x}} \rho \otimes \bm{\xi} \right) \\
    \partial_t \sigma &= - \nabla_{\mathbf{x}} \cdot \left( \sigma \mathbf{u} \right) \,.
\end{aligned}
\end{equation}
Hence, we have found the regularized Euler model in conservative form in the coordinates $(\mathbf{m},\rho,\sigma)$. Recall, we recover the model with non-dispersive acoustic waves by letting the internal energy take the particular form
\begin{equation}
    \rho \varepsilon(\rho, s, \nabla_{\mathbf{x}} \rho)
    =
    \rho \varepsilon_0(\rho, s)
    +
    \frac{\alpha (\rho \varepsilon_0)_{\rho\rho}}{2} | \nabla_{\mathbf{x}} \rho|^2 \,,
    \implies
    \bm{\xi}
    =
    \rho \varepsilon_{\nabla_{\mathbf{x}} \rho}
    =
    \alpha \rho (\rho \varepsilon_0)_{\rho\rho} \nabla_{\mathbf{x}} \rho \,.
\end{equation}

%----------------------------
\subsection{The Poisson bracket in internal energy coordinates}

Slightly more effort is required to obtain a Poisson bracket in the coordinates $(\mathbf{m},\rho, e)$ due to a more complex transformation rule between the Lagrangian and Eulerian reference frames. 

Note that
\begin{equation}
\begin{aligned}
    \delta &\phi_e(\bm{\Phi}; \delta \bm{\Phi}) \\
    &=
    \left. \frac{\mathsf{d}}{\mathsf{d} \epsilon} \right|_{\epsilon = 0}
    \phi_\rho(\bm{\Phi} + \epsilon \delta \bm{\Phi})(x)
    \varepsilon
    \left(
    \phi_\rho(\bm{\Phi} + \epsilon \delta \bm{\Phi})(x) \,,
    \frac{\phi_\sigma(\bm{\Phi} + \epsilon \delta \bm{\Phi})(x)}{\phi_\rho(\bm{\Phi} + \epsilon \delta \bm{\Phi})(x)} \,,
    \nabla_{\mathbf{x}}[\phi_\rho(\bm{\Phi} + \epsilon \delta \bm{\Phi})](x)
    \right) \\
    &=
    \varepsilon \frac{\delta \phi_\rho}{\delta \bm{\Phi}} \cdot \delta \bm{\Phi}
    +
    \phi_\rho
    \left[
    \varepsilon_\rho \frac{\delta \phi_\rho}{\delta \bm{\Phi}} \cdot \delta \bm{\Phi}
    +
    \varepsilon_s \left( - \frac{\phi_\sigma}{\phi_\rho^2} \frac{\delta \phi_\rho}{\delta \bm{\Phi}} + \frac{1}{\phi_\rho} \frac{\delta \phi_\sigma}{\delta \bm{\Phi}} \right) \cdot \delta \bm{\Phi}
    +
    \varepsilon_{\nabla_{\mathbf{x}} \rho} \cdot \nabla_{\mathbf{x}} \left( \frac{\delta \phi_\rho}{\delta \bm{\Phi}} \cdot \delta \bm{\Phi} \right)
    \right] \,,
\end{aligned}
\end{equation}
where subscripts of the specific internal energy, $\varepsilon$, indicate partial derivatives with respect to its arguments. Hence, we find that
\begin{equation}
\begin{aligned}
    \frac{\delta \phi_e}{\delta \bm{\Phi}}
    &=
    \left( \varepsilon + \rho \varepsilon_\rho - \frac{\sigma}{\rho} \varepsilon_s \right) \frac{\delta \phi_\rho}{\delta \bm{\Phi}}
    +
    \varepsilon_s \frac{\delta \phi_\sigma}{\delta \bm{\Phi}}
    +
    (\rho \varepsilon_{\nabla_{\mathbf{x}} \rho} \cdot \nabla_{\mathbf{x}}) \frac{\delta \phi_\rho}{\delta \bm{\Phi}} \\
    &=
    \left(g + \bm{\xi} \cdot \nabla_{\mathbf{x}} \right) \frac{\delta \phi_\rho}{\delta \bm{\Phi}}
    +
    \vartheta \frac{\delta \phi_\sigma}{\delta \bm{\Phi}} \\
    &=
    -\rho_0(\mathbf{X}) \left(g(x,t) + s_0(\mathbf{X}) \vartheta(x) + \bm{\xi}(x,t) \cdot \nabla_{\mathbf{x}} \right) \nabla_{\mathbf{x}} \delta(x - \bm{\Phi}(\mathbf{X},t)) \,.
\end{aligned}
\end{equation}
Note the mixture of Lagrangian and Eulerian quantities, which must be handled with care. 

If we define $\tilde{F}[\bm{\Phi},\bm{\Pi}] = F[\mathbf{m},\rho,e]$, we find that
\begin{equation}
\begin{aligned}
    \frac{\delta \tilde{F}}{\delta \bm{\Phi}}
    &=
    \rho_0(\mathbf{X}) \nabla_{\mathbf{x}} \frac{\delta F}{\delta \rho} (\bm{\Phi}(\mathbf{X},t))
    +
    \left[ \nabla_{\mathbf{x}} \frac{\delta F}{\delta \mathbf{m}} (\bm{\Phi}(\mathbf{X},t) \right]^T \bm{\Pi}(\mathbf{X},t) \\
    &\qquad 
    + \rho_0(\mathbf{X}) \nabla_{\mathbf{x}} \left( g \frac{\delta F}{\delta e} \right)(\bm{\Phi}(\mathbf{X},t))
    +
    \rho_0(\mathbf{X}) s_0(\mathbf{X}) \nabla_{\mathbf{x}} \left( \vartheta \frac{\delta F}{\delta e} \right) (\bm{\Phi}(\mathbf{X},t)) \\
    &\qquad
    - \rho_0(\mathbf{X}) \nabla_{\mathbf{x}} \left[ \nabla_{\mathbf{x}} \cdot \left( \frac{\delta F}{\delta e} \bm{\xi} \right) \right] (\bm{\Phi}(\mathbf{X},t) )\,.
\end{aligned}
\end{equation}
The derivative with respect to the canonical momentum remains as given in equation \eqref{eq:change_of_coords_momentum_1}. Using the thermodynamic relations in equation \eqref{eq:thermo_relations}, it is possible to show that the Poisson bracket becomes
\begin{equation} \label{eq:internal_nrg_pb}
\begin{aligned}
    \{F, G\}
    =
    &-\int_\Omega \mathbf{m} \cdot \left[ \left(\frac{\delta F}{\delta \mathbf{m}} \cdot \nabla_{\mathbf{x}} \right) \frac{\delta G}{\delta \mathbf{m}} - \left(\frac{\delta G}{\delta \mathbf{m}} \cdot \nabla_{\mathbf{x}} \right) \frac{\delta F}{\delta \mathbf{m}} \right] \mathsf{d}^d \mathbf{x} \\
    &-\int_\Omega \rho \left[ \frac{\delta F}{\delta \mathbf{m}} \cdot \nabla_{\mathbf{x}} \frac{\delta G}{\delta \rho} - \frac{\delta G}{\delta \mathbf{m}} \cdot \nabla_{\mathbf{x}} \frac{\delta F}{\delta \rho} \right] \mathsf{d}^d \mathbf{x} \\
    &-\int_\Omega e \left[ \frac{\delta F}{\delta \mathbf{m}} \cdot \nabla_{\mathbf{x}} \frac{\delta G}{\delta e} - \frac{\delta G}{\delta \mathbf{m}} \cdot \nabla_{\mathbf{x}} \frac{\delta F}{\delta e} \right] \mathsf{d}^d \mathbf{x} \\
    &-\int_\Omega \left[ \frac{\delta F}{\delta \mathbf{m}} \cdot \nabla_{\mathbf{x}} \left( p \frac{\delta G}{\delta e}\right) - \frac{\delta G}{\delta \mathbf{m}} \cdot \nabla_{\mathbf{x}} \left( p \frac{\delta F}{\delta e}\right) \right] \mathsf{d}^d \mathbf{x} \\
    &+\int_\Omega (\nabla_{\mathbf{x}}^2 \rho) \bm{\xi} \cdot \left[ \frac{\delta F}{\delta e} \frac{\delta G}{\delta \mathbf{m}} - \frac{\delta G}{\delta e} \frac{\delta F}{\delta \mathbf{m}} \right] \mathsf{d}^d \mathbf{x} \\
    &+\int_\Omega \rho \left[
    \left( \frac{\delta F}{\delta \mathbf{m}} \cdot \nabla_{\mathbf{x}} \right) \nabla_{\mathbf{x}} \cdot \left( \frac{\delta G}{\delta e} \bm{\xi} \right)
    -
    \left( \frac{\delta G}{\delta \mathbf{m}} \cdot \nabla_{\mathbf{x}} \right) \nabla_{\mathbf{x}} \cdot \left( \frac{\delta F}{\delta e} \bm{\xi} \right)
    \right] \mathsf{d}^d \mathbf{x} \,.
\end{aligned}
\end{equation}
The evolution equations may be obtained by noting that
\begin{equation}
    \frac{\delta H}{\delta \mathbf{m}} = \mathbf{u} \,,
    \quad
    \frac{\delta H}{\delta \rho}
    =
    - \frac{1}{2} \left( |\mathbf{u}|^2 + \alpha (\nabla_{\mathbf{x}} \cdot \mathbf{u})^2 \right) \,,
    \quad \text{and} \quad
    \frac{\delta H}{\delta e} = 1 \,.
\end{equation}
If we let 
\begin{equation}
    F[\mathbf{m},\rho,\sigma]
    =
    \int_\Omega \left(
    \mathbf{m} \cdot \psi_m
    +
    \rho \cdot \psi_\rho
    +
    e \cdot \psi_e
    \right)
    \mathsf{d}^d \mathbf{x} \,,
\end{equation}
for arbitrary test functions, $\psi_m, \psi_\rho, \psi_e$, we obtain the following weak evolution equations
\begin{equation}
\begin{aligned}
    \{F, H\}
    =
    &-\int_\Omega \mathbf{m} \cdot \left[ \left(\psi_m \cdot \nabla_{\mathbf{x}} \right) \mathbf{u} - \left(\mathbf{u} \cdot \nabla_{\mathbf{x}} \right) \psi_m \right] \mathsf{d}^d \mathbf{x} \\
    &-\int_\Omega \rho \left[ \psi_m \cdot \nabla_{\mathbf{x}} \left( - \frac{1}{2} \left( |\mathbf{u}|^2 + \alpha (\nabla_{\mathbf{x}} \cdot \mathbf{u} )^2 \right) \right) - \mathbf{u} \cdot \nabla_{\mathbf{x}} \psi_\rho \right] \mathsf{d}^d \mathbf{x} \\
    &-\int_\Omega e \left[ \psi_m \cdot \nabla_{\mathbf{x}} (1) - \mathbf{u} \cdot \nabla_{\mathbf{x}} \psi_e \right] \mathsf{d}^d \mathbf{x} \\
    &-\int_\Omega \left[ \psi_m \cdot \nabla_{\mathbf{x}} \left( p \right) - \mathbf{u} \cdot \nabla_{\mathbf{x}} \left( p \psi_e \right) \right] \mathsf{d}^d \mathbf{x} \\
    &+\int_\Omega (\nabla_{\mathbf{x}}^2 \rho) \bm{\xi} \cdot \left[ \psi_e \mathbf{u} - \psi_m \right] \mathsf{d}^d \mathbf{x} \\
    &+\int_\Omega \rho \left[
    \left( \psi_m \cdot \nabla_{\mathbf{x}} \right) \nabla_{\mathbf{x}} \cdot \left( \bm{\xi} \right)
    -
    \left( \mathbf{u} \cdot \nabla_{\mathbf{x}} \right) \nabla_{\mathbf{x}} \cdot \left( \psi_e \bm{\xi} \right)
    \right] \mathsf{d}^d \mathbf{x} \,.
\end{aligned}
\end{equation}
The identity
\begin{multline}
    \mathbf{u} \cdot (D^2_{\mathbf{x}} \rho) \bm{\xi} - \bm{\xi} \cdot \nabla_{\mathbf{x}} \left( \mathbf{u} \cdot \nabla_{\mathbf{x}} \rho - \rho (\nabla_{\mathbf{x}} \cdot \mathbf{u}) \right) \\
    =
    - \left( \nabla_{\mathbf{x}} \mathbf{u} \right) \cdot (\nabla_{\mathbf{x}} \rho \otimes \bm{\xi}) + \left( \rho (\nabla_{\mathbf{x}} \cdot \mathbf{u}) \right) (\nabla_{\mathbf{x}} \cdot\bm{\xi}) - \nabla_{\mathbf{x}} \cdot \left[ (\rho (\nabla_{\mathbf{x}} \cdot\mathbf{u})) \bm{\xi} \right] 
\end{multline}
allows us to recover the strong-form evolution equations
\begin{equation}
\begin{aligned}
    \partial_t \rho &= - \nabla_{\mathbf{x}} \cdot(\rho \mathbf{u}) \,, \\
    \partial_t \mathbf{m} &= - \nabla_{\mathbf{x}} \cdot \left( \mathbf{m} \otimes \mathbf{u} + \alpha \rho ( \nabla_{\mathbf{x}} \cdot \mathbf{u}) (\nabla_{\mathbf{x}} \mathbf{u})^T + (p - \rho (\nabla_{\mathbf{x}} \cdot \bm{\xi})) \mathbb{I} + \nabla_{\mathbf{x}} \rho \otimes \bm{\xi} \right) \\
    \partial_t e &= - \nabla_{\mathbf{x}} \cdot( e \mathbf{u}) + (\nabla_{\mathbf{x}} \mathbf{u}) \cdot [ (- p + \rho (\nabla_{\mathbf{x}} \cdot \bm{\xi})) \mathbb{I} - \nabla_{\mathbf{x}} \rho \otimes \bm{\xi} ] - \nabla_{\mathbf{x}} \cdot( \rho (\nabla_{\mathbf{x}} \cdot \mathbf{u}) \bm{\xi}) \,.
\end{aligned}
\end{equation}
The evolution equation in the coordinates $(\mathbf{m}, \rho, e)$ is not locally conservative, as the internal energy is not expected to be locally conservative. The non-conservative term couples with the local kinetic energy transport, to yield a locally conservative total energy transport equation. 

%----------------------------
\subsection{Coordinate changes between the canonical and kinematic momenta} \label{appendix:momentum_coord_change}

Two different momenta are used throughout this work: the canonically conjugate momentum, 
\begin{equation}
    \mathbf{m} = \mathcal{L}_\alpha(\rho) \mathbf{u} = \rho \mathbf{u} - \alpha \nabla_{\mathbf{x}}( \rho \nabla_{\mathbf{x}} \cdot \mathbf{u}) \,,
\end{equation}
and the kinematic momentum, $\rho \mathbf{u}$. Here, we describe the coordinate change from $\mathbf{m}$ to $\rho \mathbf{u}$ for bracket-based formalisms. We do so by following the sequence of coordinate changes $(\mathbf{m}, \rho) \mapsto (\mathbf{u}, \rho) \mapsto (\rho \mathbf{u}, \rho)$. 

Let $F = F[\mathbf{m}, \rho] = \tilde{F}[\mathbf{u}, \rho]$. To begin, notice that
\begin{equation}
    \begin{pmatrix}
        \mathbf{m} \\
        \rho
    \end{pmatrix}
    =
    \begin{pmatrix}
        \mathcal{L}_\alpha(\rho) \mathbf{u} \\
        \rho
    \end{pmatrix}
    \implies
    \begin{pmatrix}
        \delta \mathbf{m} \\
        \delta \rho
    \end{pmatrix}
    =
    \begin{pmatrix}
        \mathcal{L}_\alpha(\rho) & (D_\rho \mathcal{L}_\alpha(\rho) [\cdot]) \mathbf{u} \\
        0 & 1 
    \end{pmatrix}
    \begin{pmatrix}
        \delta \mathbf{u} \\
        \delta \rho
    \end{pmatrix} \,,
\end{equation}
where $(D_\rho \mathcal{L}_\alpha(\rho) \delta \rho) \mathbf{u} = \mathcal{L}_\alpha(\delta \rho) \mathbf{u}$. To obtain the change of variables formula for functional derivatives, we need the inverse of this transformation matrix for differentials:
\begin{equation}
    \begin{pmatrix}
        \delta \mathbf{u} \\
        \delta \rho
    \end{pmatrix}
    =
    \begin{pmatrix}
        \mathcal{L}_\alpha(\rho) & (D_\rho \mathcal{L}_\alpha(\rho) [\cdot]) \mathbf{u} \\
        0 & 1 
    \end{pmatrix}^{-1}
    \begin{pmatrix}
        \delta \mathbf{m} \\
        \delta \rho
    \end{pmatrix}
    =
    \begin{pmatrix}
        \mathcal{L}_\alpha^{-1}(\rho) & - \mathcal{L}_\alpha^{-1}(\rho) (D_\rho \mathcal{L}_\alpha(\rho) [\cdot]) \mathbf{u} \\
        0 & 1 
    \end{pmatrix}
    \begin{pmatrix}
        \delta \mathbf{m} \\
        \delta \rho
    \end{pmatrix} \,.
\end{equation}
Therefore, because $DF = D_{\mathbf{m}} F \delta \mathbf{m} + D_\rho F \delta \rho = D_{\mathbf{u}} \tilde{F} \delta \mathbf{u} + D_\rho \tilde{F} \delta \rho$, we find that
\begin{equation}
\begin{aligned}
    \frac{\delta F}{\delta \mathbf{m}}
    &=
    \mathcal{L}_\alpha^{-1}(\rho) \frac{\delta \tilde{F}}{\delta \mathbf{u}} \,, \\
    \frac{\delta F}{\delta \rho}
    &=
    \frac{\delta \tilde{F}}{\delta \rho}
    -
    \mathbf{u} \cdot \left( \mathcal{L}_\alpha^{-1}(\rho) \frac{\delta \tilde{F}}{\delta \mathbf{u}} \right)
    -
    \alpha (\nabla \cdot \mathbf{u}) \left( \nabla \cdot \left( \mathcal{L}_\alpha^{-1}(\rho) \frac{\delta \tilde{F}}{\delta \mathbf{u}} \right) \right) \,.
\end{aligned}
\end{equation}
Similarly, we find that if $F'[\rho \mathbf{u}, \rho] = \tilde{F}[\mathbf{u}, \rho]$, then
\begin{equation}
    \frac{\delta \tilde{F}}{\delta \mathbf{u}}
    =
    \rho \frac{\delta F'}{\delta (\rho \mathbf{u})} \,, 
    \quad \text{and} \quad
    \frac{\delta \tilde{F}}{\delta \rho}
    = 
    \frac{\delta F'}{\delta \rho}
    +
    \mathbf{u} \cdot \frac{\delta F'}{\delta (\rho \mathbf{u})} \,.
\end{equation}
Hence, we find
\begin{equation}
\begin{aligned}
    \frac{\delta F}{\delta \mathbf{m}}
    &=
    \mathcal{L}_\alpha^{-1}(\rho) \left( \rho \frac{\delta F'}{\delta (\rho \mathbf{u})} \right) \,, \\
    \frac{\delta F}{\delta \rho} 
    &=
    \frac{\delta F'}{\delta \rho}
    +
    \mathbf{u} \cdot \frac{\delta F'}{\delta (\rho \mathbf{u})}
    -
    \mathbf{u} \cdot \left( \mathcal{L}_\alpha^{-1}(\rho) \left( \rho \frac{\delta F'}{\delta (\rho \mathbf{u})} \right) \right)
    -
    \alpha (\nabla \cdot \mathbf{u}) \left( \nabla \cdot \left( \mathcal{L}_\alpha^{-1}(\rho) \left( \rho \frac{\delta F'}{\delta (\rho \mathbf{u})} \right) \right) \right) \,.
\end{aligned}
\end{equation}
While the transformation of the derivative with respect to $\mathbf{m}$ is relatively straightforward, the derivative with respect to $\rho$ becomes quite complex. The final thermodynamic variable, either entropy density or energy density, is not involved in this coordinate transformation, and therefore does not contribute to the transformation rules for derivatives with respect to density and momentum, nor are its derivatives changed. 

%================================================================
\section{Ablation study: deleting high-order derivatives from HIGR} \label{appendix:ablation_study}

For pragmatic reasons, we consider a stabilization of the HIGR model obtained by deleting terms which we empirically observe lead to numerical instability. As such, there is no satisfactory theory associated with the model discussed in this section. Rather, this study is meant to build intuition about the HIGR model itself. The reason for considering this ablation study is illustrated in ~\Cref{subsec:1d_numerics}: HIGR is subject to numerical instability and spurious oscillations. We attribute the numerical instability to two things:
\begin{itemize}
    \item the Hamiltonian regularization introduces higher order derivatives in the energy flux,
    \item and the dissipative entropic pressure in HIGR adds excess heat in the energy equation. 
\end{itemize}
Despite the lack of theoretical motivation for this model, we nonetheless use symmetric, dissipative brackets to effect the deletion of terms in the energy equation to elucidate how these choices impact entropy production. 

We seek to remove the additional energy fluxes associated with the Korteweg-like correction, added in ~\Cref{sec:dispersion-free-extension}, and the dissipative entropic pressure. That is, we seek a symmetric bracket which contributes an additional term to the internal energy equation:
\begin{equation}
    \partial_t e + \hdots
    =
    \nabla_\bx \cdot ( \rho (\nabla_\bx \cdot \bu) \bm{\xi}) 
    +
    \nabla_\bx \cdot (\Sigma_D \bu) \,.
\end{equation}
This is obtained with the somewhat trivial and ad hoc symmetric bracket
\begin{multline}
    (F,G)
    =
    \int_\Omega
    \left[
    \frac{\delta F}{\delta e}
    \nabla_\bx \cdot \left( \rho (\nabla_\bx \cdot \bu) \bm{\xi} \frac{\delta G}{\delta e} \right)
    +
    \frac{\delta G}{\delta e}
    \nabla_\bx \cdot \left( \rho (\nabla_\bx \cdot \bu) \bm{\xi} \frac{\delta F}{\delta e} \right)
    \right]
    \mathsf{d}^3 \bx \\
    +
    \int_\Omega
    \left[
    \frac{\delta F}{\delta e}
    \nabla_\bx \cdot \left( \Sigma_D \bu \frac{\delta G}{\delta e} \right)
    +
    \frac{\delta G}{\delta e}
   \nabla_\bx \cdot \left( \Sigma_D \bu \frac{\delta F}{\delta e} \right)
    \right]
    \mathsf{d}^3 \bx
    \,.
\end{multline}
We use the Hamiltonian as the generating function:
\begin{multline}
    \dot{F}
    =
    (F,H)
    =
    \int_\Omega
    \left[
    \frac{\delta F}{\delta e}
    \nabla_\bx \cdot \left( \rho (\nabla_\bx \cdot \bu) \bm{\xi} \right)
    +
    \nabla_\bx \cdot \left( \rho (\nabla_\bx \cdot \bu) \bm{\xi} \frac{\delta F}{\delta e} \right)
    \right]
    \mathsf{d}^3 \bx \\
    +
    \int_\Omega
    \left[
    \frac{\delta F}{\delta e}
    \nabla_\bx \cdot \left( \Sigma_D \bu \right)
    +
   \nabla_\bx \cdot \left( \Sigma_D \bu \frac{\delta F}{\delta e} \right)
    \right]
    \mathsf{d}^3 \bx
    \,.
\end{multline}
which recovers the contributions to the energy equation we wanted. Total energy is globally conserved by this symmetric bracket, although we add local energy fluxes. Moreover, entropy is produced like
\begin{multline}
    \dot{S}
    =
    (S,H)
    =
    \int_\Omega
    \left[
    \vartheta^{-1}
    \nabla_\bx \cdot \left( \rho (\nabla_\bx \cdot \bu) \bm{\xi} \right)
    +
    \nabla_\bx \cdot \left( \rho (\nabla_\bx \cdot \bu) \bm{\xi} \vartheta^{-1} \right)
    \right]
    \mathsf{d}^3 \bx \\
    +
    \int_\Omega
    \left[
    \vartheta^{-1}
    \nabla_\bx \cdot \left( \Sigma_D \bu \right)
    +
   \nabla_\bx \cdot \left( \Sigma_D \bu \vartheta^{-1} \right)
    \right]
    \mathsf{d}^3 \bx
    \,.
\end{multline}
It is not entirely obvious whether the sign of the entropy production rate can be predicted in general. The stabilizing terms introduced in this section lack motivation beyond our observation that they improve the stability of HIGR simulations. That removing the energy flux from the Korteweg-like Hamiltonian regularization should improve stability is intuitive: it is unlikely that the model possesses adequate regularity for the higher-order derivatives in this term to be justified. Why removing the coupling of the dissipative entropic pressure from the energy equation improves stability is less clear. Rather than being a new model of interest in its own right, we think of this version of the HIGR model as an illustrative tool highlighting defects of the HIGR model. 

The HIGR model with higher-order derivatives deleted as described above is given by
\begin{equation}
\left\{
\begin{aligned}
    &\rho_t + (\rho u)_x = 0 \,, \\
    &(\rho u)_t + (\rho u^2 + (\gamma - 1) \rho \varepsilon + \Sigma_C + \Sigma_D)_x = 0 \,, \\
    &E_t + ( (E + (\gamma - 1) \rho \varepsilon + \Sigma_C)u)_x = 0 \,, \\
    &\rho^{-1} \Sigma_C - \alpha (\rho^{-1} (\Sigma_C)_x)_x = 
    \alpha 
    \left[
    u_x^2 
    +
    (\gamma - 1)
    \left(
    \varepsilon_x
    -
    (\gamma - 1) \varepsilon \rho_x/\rho
    \right)_x
    + 
    \frac{\gamma (\gamma - 1)^2 \varepsilon}{2} (\rho_x/\rho)^2
    \right]
    \,, \\
    &\rho^{-1} \Sigma_D - \alpha (\rho^{-1} (\Sigma_D)_x)_x = 
    \alpha 
    u_x^2 \,,
\end{aligned}
\right.
\end{equation}
where, again, $\varepsilon = \rho^{-1}(E - (1/2) \rho ( u^2 + \alpha (\partial_x u)^2)$. The discretization approach is identical to that described in ~\Cref{subsec:1d_numerics} with the same smoothed, periodized Sod shock initial conditions, see ~\Cref{fig:initial_condition}. See ~\Cref{fig:stabilized_higr} for the solution profile associated with this model. We observe a solution profile qualitatively similar to that obtained in ~\Cref{subsec:1d_numerics} for HIGR, but with smoother features and no oscillations from numerical instability. This confirms our belief that the presence of higher-order derivatives in HIGR and HRE is a key defect that requires adequate dissipative regularization to stabilize. 

\begin{figure}
    \centering
    \pgfplotscreateplotcyclelist{mycolors}{
    {blue, line width=1pt},
    {red, line width=1pt},
    {green!50!black, line width=1pt},
    {orange, line width=1pt}
}

\begin{tikzpicture}
\begin{groupplot}[
    group style={
        group size=2 by 2,        % 2 columns x 2 rows
        vertical sep=1.7cm,       % similar spacing to original
        horizontal sep=1cm
    },
    xmin=0, xmax=1,              % exact x-range, removes padding
    width=0.50\textwidth,        % two columns should fit across the page
    height=0.31\textwidth,       % keep the taller aspect from original
    xlabel={$x$},
    ylabel={},                   % keep y-label empty as before
    grid=both,
    cycle list name=mycolors,
    legend style={font=\small, at={(0.95,0.95)}, anchor=north east},
]

% -------------------
% Top-left: t = 0.125 (HIGR)
% -------------------
\nextgroupplot[title={$t=0.125$}]
\addplot table [x=x, y=rho, col sep=comma] {figures/data/higr_igr_comparison/nc512_higr_2_t_0.125000.csv};
\addplot table [x=x, y=u,   col sep=comma] {figures/data/higr_igr_comparison/nc512_higr_2_t_0.125000.csv};
\addplot table [x=x, y=E,   col sep=comma] {figures/data/higr_igr_comparison/nc512_higr_2_t_0.125000.csv};
\addplot table [x=x, y=eps, col sep=comma] {figures/data/higr_igr_comparison/nc512_higr_2_t_0.125000.csv};
\legend{$\rho$, $u$, $E$, $\varepsilon$}

% -------------------
% Top-right: t = 0.25 (HIGR)
% -------------------
\nextgroupplot[title={$t=0.25$}]
\addplot table [x=x, y=rho, col sep=comma] {figures/data/higr_igr_comparison/nc512_higr_2_t_0.250000.csv};
\addplot table [x=x, y=u,   col sep=comma] {figures/data/higr_igr_comparison/nc512_higr_2_t_0.250000.csv};
\addplot table [x=x, y=E,   col sep=comma] {figures/data/higr_igr_comparison/nc512_higr_2_t_0.250000.csv};
\addplot table [x=x, y=eps, col sep=comma] {figures/data/higr_igr_comparison/nc512_higr_2_t_0.250000.csv};
\legend{$\rho$, $u$, $E$, $\varepsilon$}

% -------------------
% Bottom-left: t = 0.375 (HIGR)
% -------------------
\nextgroupplot[title={$t=0.375$}]
\addplot table [x=x, y=rho, col sep=comma] {figures/data/higr_igr_comparison/nc512_higr_2_t_0.375000.csv};
\addplot table [x=x, y=u,   col sep=comma] {figures/data/higr_igr_comparison/nc512_higr_2_t_0.375000.csv};
\addplot table [x=x, y=E,   col sep=comma] {figures/data/higr_igr_comparison/nc512_higr_2_t_0.375000.csv};
\addplot table [x=x, y=eps, col sep=comma] {figures/data/higr_igr_comparison/nc512_higr_2_t_0.375000.csv};
\legend{$\rho$, $u$, $E$, $\varepsilon$}

% -------------------
% Bottom-right: t = 0.5 (HIGR)
% -------------------
\nextgroupplot[title={$t=0.5$}]
\addplot table [x=x, y=rho, col sep=comma] {figures/data/higr_igr_comparison/nc512_higr_2_t_0.500000.csv};
\addplot table [x=x, y=u,   col sep=comma] {figures/data/higr_igr_comparison/nc512_higr_2_t_0.500000.csv};
\addplot table [x=x, y=E,   col sep=comma] {figures/data/higr_igr_comparison/nc512_higr_2_t_0.500000.csv};
\addplot table [x=x, y=eps, col sep=comma] {figures/data/higr_igr_comparison/nc512_higr_2_t_0.500000.csv};
\legend{$\rho$, $u$, $E$, $\varepsilon$}

\end{groupplot}
\end{tikzpicture}
    \caption{Solution profile of the HIGR system with the higher order derivative terms deleted for the smoothed, periodized Sod shock problem.}
    \label{fig:stabilized_higr}
\end{figure}
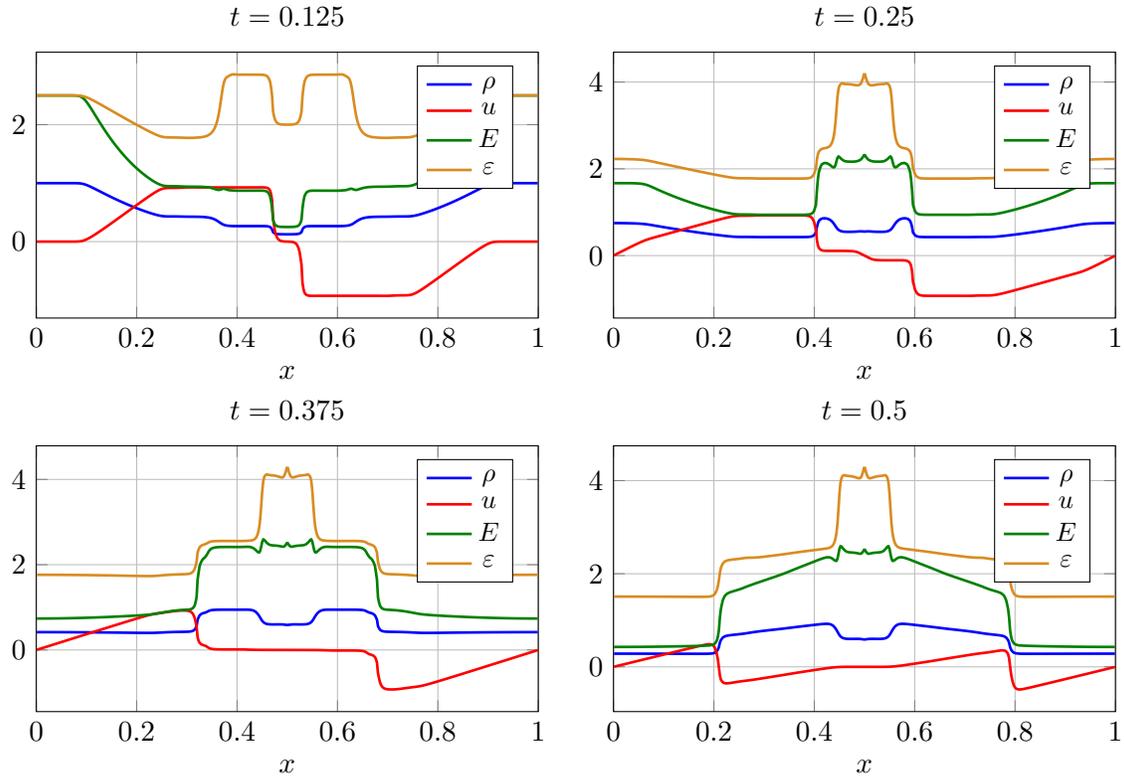

\end{document}